\def\BibTeX{{\rm B\kern-.05em{\sc i\kern-.025em b}\kern-.08emT\kern-.1667em\lower.7ex\hbox{E}\kern-.125emX}}
\newcommand{\hide}[1]{}
\newtheorem{observation}{Observation}[section]
\newtheorem{question}{Question}[section]
\newcommand{\floor}[1]      {\left\lfloor #1 \right\rfloor}
\definecolor{gray}{rgb}{0.3,0.3,0.3}
\definecolor{orange}{rgb}{1,0.3,0}
\def\func#1{\textrm{\bf{\sc{#1}}}}
\newcommand{\m}[1]{\mathscr{#1}}
\newcommand{\para}[1]{\vspace{0.1cm}\noindent{\bf{\boldmath{#1}}}}
\newcommand{\codesize}{\scriptsize}
\newcommand{\T}{\hspace{1em}}
\def\func#1{\textrm{\bf{\sc{#1}}}}
\newcommand{\Oh}[1]{{\mathcal O}\left({#1}\right)}
\newcommand{\Th}[1]{{\Theta}\left({#1}\right)}
\newcommand{\xfor}{{\bf{{for~}}}}
\newcommand{\xto}{{\bf{{to~}}}}
\newcommand{\xdo}{{\bf{{do~}}}}
\newcommand{\xparallelfor}{{\bf{{parallel~for~}}}}
\newcommand{\vsitem}{\vspace{-0.15cm}\item}
\newcommand{\ALOOP}[1]{\ALC@it\algorithmicloop\ #1%
  \begin{ALC@loop}}
\newcommand{\ENDALOOP}{\end{ALC@loop}\ALC@it\algorithmicendloop}
\definecolor{gray}{rgb}{0.3,0.3,0.3}
\colorlet{lightblue}{blue!15}
\colorlet{lightred}{red!20}
\colorlet{lightyellow}{green!20}
\colorlet{framecolor}{yellow!20}
\colorlet{frameloopcolor}{yellow!20}
\colorlet{algocolor}{orange!20}
   \newcommand\figcaption{\def\@captype{figure}\caption}
   \newcommand\tabcaption{\def\@captype{table}\caption}
\colorlet{algotitlebarcolor}{gray!20}
\colorlet{algotitlebarcolortop}{gray!40}
\colorlet{algotitlebarcolorbottom}{gray!20}
\colorlet{framecolor}{white!20}
\colorlet{framecolortop}{white!40}
\colorlet{framecolorbottom}{white!20}
\colorlet{bluetop}{cyan!40}
\colorlet{bluebottom}{cyan!20}
\colorlet{greentop}{green!40}
\colorlet{greenbottom}{green!20}
\colorlet{shadowcolor}{gray}
\colorlet{algotitlecolor}{black}
\colorlet{algoframecolor}{gray}
\colorlet{tabletitlecolor}{gray!30}
\colorlet{tablealtrowcolor}{gray!15}
\colorlet{tablefillcolor}{white!40}
\colorlet{tablefillcolor2}{white!20}
\newtcolorbox{mycolorbox}[1]
{skin=enhanced,colbacktitle=algotitlebarcolor,coltitle=algotitlecolor,colback=algocolor,colframe=algoframecolor,title style={top color=algotitlebarcolortop, bottom color=algotitlebarcolorbottom},interior style={top color=framecolortop, bottom color=framecolorbottom},drop shadow,title={\codesize #1}}
\newcommand{\algotopspace}{\vspace{-0.35cm}}
\newcommand{\algobottomspace}{\vspace{-0.15cm}}
\newsavebox{\tablebox}
  \savebox{\tablebox}{%
    \begin{tabular}{#1}%
	  \rowcolor{tabletitlecolor}
      \BODY%
    \end{tabular}}
\begin{document}
\title[Data Races and the Discrete Resource-time Tradeoff Problem]{Data Races and the Discrete Resource-time Tradeoff Problem with Resource Reuse over Paths}
%\titlenote{Produces the permission block, and
%  copyright information}
%\subtitle{}
%\subtitlenote{The full version of the author's guide is available as
%  \texttt{acmart.pdf} document}

\author{Rathish Das}
\affiliation{%
	\institution{Stony Brook University}
}
\email{radas@cs.stonybrook.edu}

\author{Shih-Yu Tsai}
\affiliation{%
	\institution{Stony Brook University}
}
\email{shitsai@cs.stonybrook.edu}

\author{Sharmila Duppala}
\affiliation{%
	\institution{Stony Brook University}
}
\email{sduppala@cs.stonybrook.edu}

\author{Jayson Lynch}
\affiliation{%
	\institution{MIT}
}
\email{jaysonl@mit.edu}

\author{Esther M. Arkin}
\affiliation{%
	\institution{Stony Brook University}
}
\email{esther.arkin@stonybrook.edu}

\author{Rezaul Chowdhury}
\affiliation{%
	\institution{Stony Brook University}
}
\email{rezaul@cs.stonybrook.edu}

\author{Joseph S. B. Mitchell}
\affiliation{%
	\institution{Stony Brook University}
}
\email{joseph.mitchell@stonybrook.edu}

\author{Steven Skiena}
\affiliation{%
	\institution{Stony Brook University}
}
\email{skiena@cs.stonybrook.edu}

% The default list of authors is too long for headers.
%\renewcommand{\shortauthors}{B. Trovato et al.}
\renewcommand{\shortauthors}{Das, Tsai, Duppala, Lynch, Arkin, Chowdhury, Mitchell, Skiena}

\begin{abstract}
% \section{Abstract}
A determinacy race occurs if two or more logically parallel instructions access the same memory location and at least one of them tries to modify its content. Races are often undesirable as they can lead to nondeterministic and incorrect program behavior. A data race is a special case of a determinacy race which can be eliminated by associating a mutual-exclusion lock with the memory location in question or allowing atomic accesses to it. However, such solutions can reduce parallelism by serializing all accesses to that location. For associative and commutative updates to a memory cell, one can instead use a reducer, which allows parallel race-free updates at the expense of using some extra space. More extra space usually leads to more parallel updates, which in turn contributes to potentially lowering the overall execution time of the program. 

We start by asking the following question. Given a fixed budget of extra space for mitigating the cost of races in a parallel program, which memory locations should be assigned reducers and how should the space be distributed among those reducers in order to minimize the overall running time? We argue that under reasonable conditions the races of a program can be captured by a directed acyclic graph (DAG), with nodes representing memory cells and arcs representing read-write dependencies between cells. We then formulate our original question as an optimization problem on this DAG. We concentrate on a variation of this problem where space reuse among reducers is allowed by routing every unit of extra space along a (possibly different) source to sink path of the DAG and using it in the construction of multiple (possibly zero) reducers along the path. We consider two different ways of constructing a reducer and the corresponding duration functions (i.e., reduction time as a function of space budget). 

We generalize our race-avoiding space-time tradeoff problem to a discrete resource-time tradeoff problem with general non-increasing duration functions and resource reuse over paths of the given DAG. 

For general DAGs, we show that even if the entire DAG is available to us offline the problem is strongly NP-hard under all three duration functions, and we give approximation algorithms for solving the corresponding optimization problems. We also prove hardness of approximation for the general resource-time tradeoff problem and give a pseudo-polynomial time algorithm for series-parallel DAGs.

\end{abstract}

%
% The code below should be generated by the tool at
% http://dl.acm.org/ccs.cfm
% Please copy and paste the code instead of the example below.
%
\begin{CCSXML}
<ccs2012>
 <concept>
  <concept_id>10010520.10010553.10010562</concept_id>
  <concept_desc>Computer systems organization~Embedded systems</concept_desc>
  <concept_significance>500</concept_significance>
 </concept>
 <concept>
  <concept_id>10010520.10010575.10010755</concept_id>
  <concept_desc>Computer systems organization~Redundancy</concept_desc>
  <concept_significance>300</concept_significance>
 </concept>
 <concept>
  <concept_id>10010520.10010553.10010554</concept_id>
  <concept_desc>Computer systems organization~Robotics</concept_desc>
  <concept_significance>100</concept_significance>
 </concept>
 <concept>
  <concept_id>10003033.10003083.10003095</concept_id>
  <concept_desc>Networks~Network reliability</concept_desc>
  <concept_significance>100</concept_significance>
 </concept>
</ccs2012>
\end{CCSXML}

\ccsdesc[500]{Parallel and Distributed Algorithms}
\ccsdesc[300]{Multi-Core Architectures}
\ccsdesc{Resource Management and Awareness}
\ccsdesc[100]{Scheduling Problems}

\keywords{Parallel and Distributed Algorithms, Multi-Core Architectures, Resource Management and Awareness, Scheduling Problems}

\maketitle

\begin{table}[h!]
	\centering
	{\Large {\textbf{\textsc{\underline{Page Distribution}}}}}~\\~\\
	\scalebox{1.4}[1.2]{
		{\small 
		\hspace{-0.25cm}	\begin{colortabular}{ | l | c | c |}
				\hline                       
				\rowcolor{tabletitlecolor} Page Type & Page Numbers & \#Pages \\  \hline

				Title & 1 & 1\\
				
				\rowcolor{tablealtrowcolor} Main Text & 2, 4, 5, 7 -- 10, 12, 15, 16  & 10 \\
				
				Figures & 3, 6, 11, 13, 14, 17  & 6\\								

				\rowcolor{tablealtrowcolor} Bibliography & 18  & 1 \\

				Appendices & 19 -- 20 & 2\\								

				\hline
			\end{colortabular}
		}
	}
		\vspace{0.3cm}
%		\caption{}
		\label{tab:org}
\end{table}

\newpage

\section{Introduction}

%{\bf[NEED TO ADD AN INTRODUCTORY PARAGRAPH]}

A determinacy race (or a general race) \cite{netzer1992race,feng1999efficient} occurs if two or more logically parallel instructions access the same memory location and at least one of them modifies its content. Races are often undesirable as they can lead to nondeterministic and incorrect program behavior. A data race is a special case of a determinacy race which can be eliminated by associating a mutual-exclusion lock with the memory location in question or allowing only atomic accesses to it. 
%
% If all updates made by the parallel instructions are associative and commutative then the race can be eliminated 
% by either associating a mutual-exclusion lock with the memory location in question or making all accesses to it atomic. 
%
Such a solution, however, makes all accesses to that location serial and thus destroys all parallelism. Figure \ref{fig:race} shows an example.

\begin{figure*}[h!]
%\vspace{-0.5cm}
	\centering
	\begin{minipage}{\textwidth}
		\begin{minipage}{0.98\textwidth}
			\vspace{-0.2cm}
			\centering
\part{\label{title}}\includegraphics[width=\textwidth]{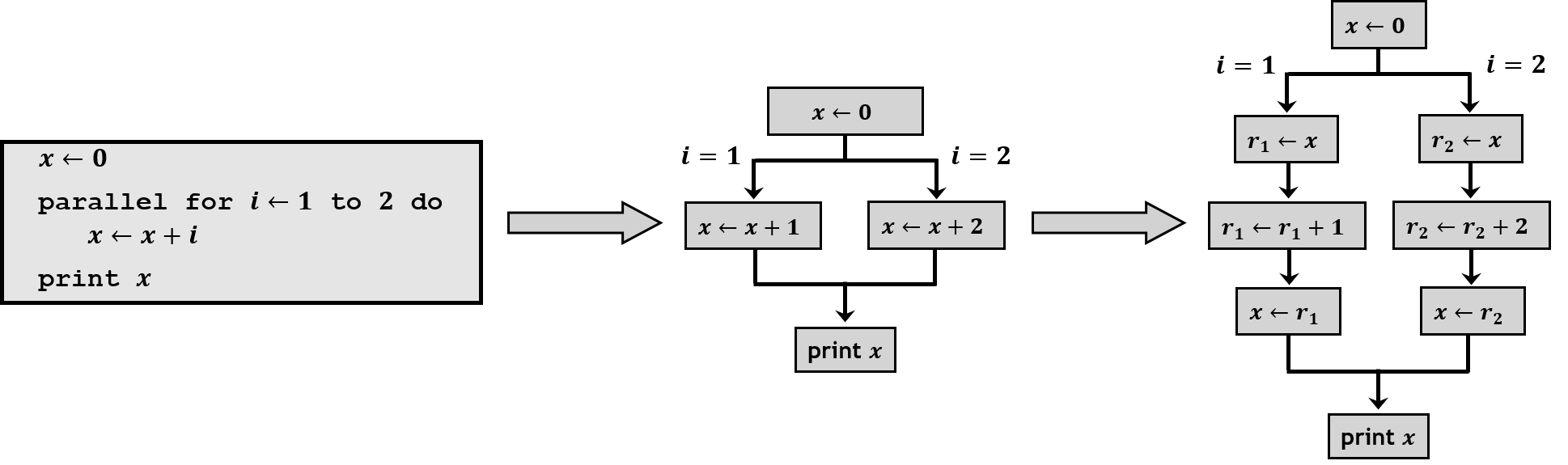}
          		\caption{This figure shows a race on global variable $x$ caused by two parallel threads trying to increment $x$, where $r_1$ and $r_2$ are local registers. The value printed by the `print' statement depends on how the two threads are scheduled. Unless the two threads are executed sequentially, the print statement will print an incorrect result (either $1$ or $2$ depending on which thread updated $x$ last).}
	        	\label{fig:race}
		\end{minipage}
	\end{minipage}
	\vspace{-0.1cm}
\end{figure*}

One can use a reducer \cite{frigo2009reducers,openmp1997openmp,reinders2007intel} to eliminate data races on a shared variable without destroying parallelism, provided the update operation is associative and commutative. Figure \ref{fig:binary-reducer} shows the construction of a simple recursive binary reducer. For any integer $h > 0$ such a reducer is a full binary tree of height $h$ and size $2^{h + 1} - 1$ with the shared variable at the root. Each nonroot node is associated with a unit of extra space initialized to zero. All updates to the shared variable are equally distributed among the leaves of the tree. Each node has a lock and a waiting queue to avoid races by serializing the updates it receives, but updates to different nodes can be applied in parallel. As soon as a node undergoes its last update, it updates its parent using its final value. In fact, such a reducer can be constructed using only $2^h$ units of extra space because if a node completes before its sibling it can become its own parent (with ties broken arbitrarily) and the sibling then updates the new parent. Assume that the time needed to apply an update significantly dominates the execution time of every other operation the reducer performs and each update takes one unit of time to apply. Then a reducer of height $h$ can correctly apply $n$ parallel updates on a shared variable in $\lceil{\frac{n}{2^h}}\rceil + h + 1$ time provided at least $2^h$ processors are available. Hence, for large $n$, the speedup achieved by a reducer (w.r.t. serially and directly updating the shared variable) is almost linear in the amount of extra space used.

\begin{figure*}[t!]
%\vspace{-0.5cm}
	\centering
	\begin{minipage}{\textwidth}
		\begin{minipage}{0.5\textwidth}
			\vspace{-0.2cm}
			\centering
			\includegraphics[width=\textwidth]{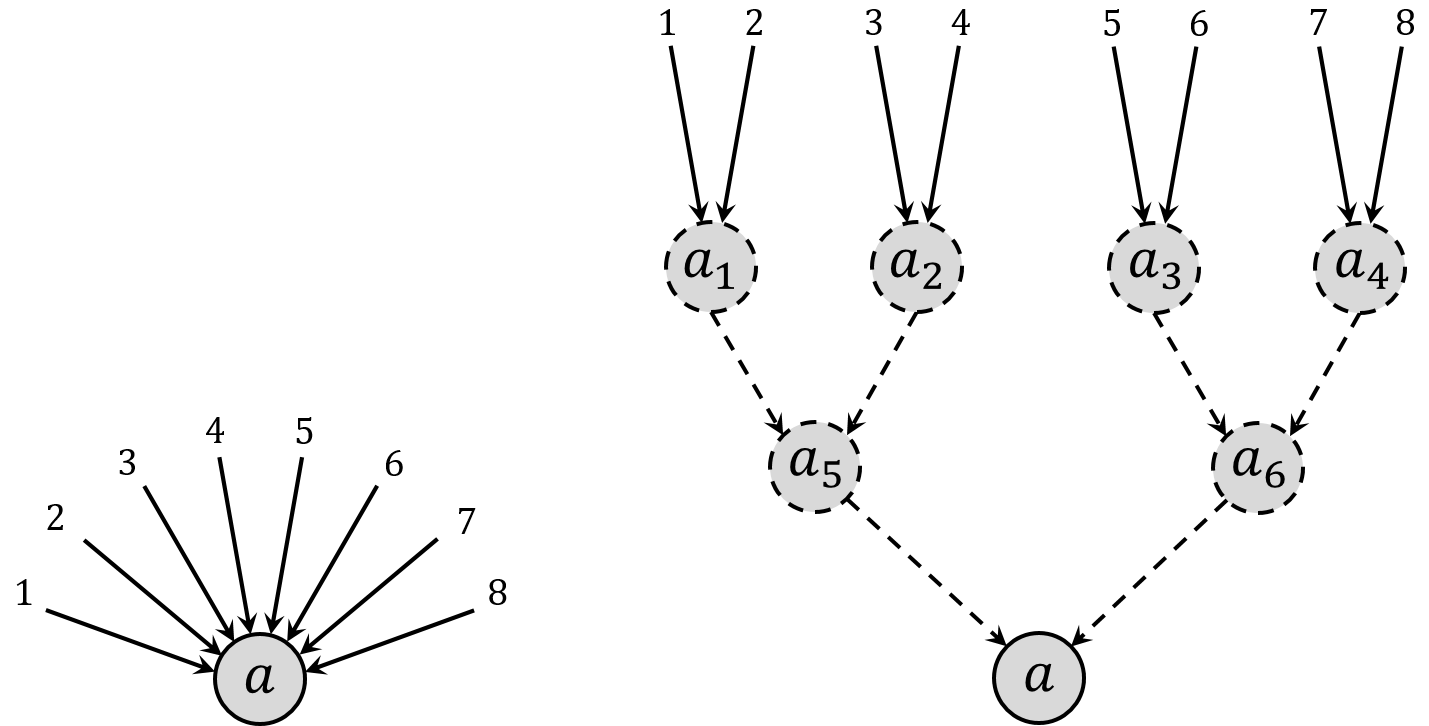}
          		\caption{{{\textsc{\textbf{[Left]}}}} A memory location $a$ with eight updates using an associative and commutative operator. 
                                 {{\textbf{\textsc{[Right]}}}} The same location $a$ with a recursive binary reducer of height two on top of it.}
	        	\label{fig:binary-reducer}
		\end{minipage}
		\begin{minipage}{0.03\textwidth}
                 ~
		\end{minipage}
		\begin{minipage}{0.40\textwidth}
			\begin{mycolorbox}{\func{Parallel-MM}$( Z, X, Y, n )$}
				\begin{minipage}{0.99\textwidth}
					{\codesize
						\algotopspace{}
						\noindent
							\vspace{0.3cm}
						\begin{enumerate}
							
							\setlength{\itemindent}{-2em}
							
							\vsitem \xparallelfor $i \gets 1$ \xto $n$ \xdo \label{ln:par-MM-i}
							\vspace{0.1cm}
							\vsitem \T \xparallelfor $j \gets 1$ \xto $n$ \xdo \label{ln:par-MM-j}
								\vspace{0.1cm}
							\vsitem \T\T $Z[ i ][ j ] \gets 0$ \label{ln:par-MM-init}
						\vspace{0.1cm}
							\vsitem \T\T \xfor $k \gets 1$ \xto $n$ \xdo \label{ln:par-MM-k}
							\vspace{0.1cm}
							\vsitem \T\T\T $Z[ i ][ j ] \gets Z[ i ][ j ] + X[ i ][ k ] \times Y[ k ][ j ]$ \label{ln:par-MM-update}
							
							\algobottomspace{}
						\end{enumerate}
						%\medskip\noindent{\func{\F{A}{par} Ends}}
					}
				\end{minipage}
			\end{mycolorbox}
       			\vspace{0.05cm}
			\caption{Parallel code that multiplies two $n \times n$ matrices $X[1..n][1..n]$ and $Y[1..n][1..n]$, and puts the result in $Z[1..n][1..n]$.}
			\label{fig:par-MM}
		\end{minipage}
	\end{minipage}
	\vspace{-0.1cm}
\end{figure*}

To see how extra space can speed up real parallel programs consider the iterative matrix multiplication code \func{Parallel-MM} shown in Figure \ref{fig:par-MM} which multiplies two $n \times n$ matrices $X[1..n][1..n]$ and $Y[1..n][1..n]$ and puts the results in another $n \times n$ matrix $Z[1..n][1..n]$; that is, it sets $Z[i][j] = \sum_{1 \leq k \leq n}{X[i][k] \times Y[k][j]}$ for $1 \leq i, j \leq n$. Since every $Z[i][j]$ value can be computed independently of others, all iterations of the loops in Lines \ref{ln:par-MM-i} and \ref{ln:par-MM-j} can be executed in parallel without compromising correctness of the computation. However, the same is not true for the loop in Line \ref{ln:par-MM-k} because if parallelized, for fixed values of $i$ and $j$, all iterations of that loop will update the same memory location $Z[i][j]$ giving rise to data races and thus producing potentially incorrect results. 
\hide{ 
Using a mutual-exclusion lock or atomic updates for each $Z[i][j]$ will ensure correctness but at the expense of destroying all parallelism of that loop which is asymptotically the same as not parallelizing the loop at all. Thus even with an unbounded number of processors, the code will take $\Th{n}$ time to multiply the two $n \times n$ matrices. 
}%
% hide ends
%
Use of a mutual-exclusion lock or atomic updates for each $Z[i][j]$ will ensure correctness but in that case even with an unbounded number of processors, the code will take $\Th{n}$ time to multiply the two $n \times n$ matrices.
Now if we put a reducer of height $h$ (integer $h \in [1, \log_2{n}]$) at the top of each $Z[i][j]$ the time to fully update each $Z[i][j]$ and thus the overall running time of the code will drop to $\Th{\frac{n}{2^h} + h}$ at the cost of using $n^{2} \times 2^{h}$ units of extra space. Observe that when $h = 1$, the running time of the code almost halves using $2n^2$ units of extra space, and when $h = \floor{\log_{2}{n}}$, the running time drops to $\Th{\log{n}}$ using $\Th{n^3}$ extra space.

\begin{figure*}[t!]
%\vspace{-0.5cm}
	\centering
	\begin{minipage}{\textwidth}
		\begin{minipage}{0.48\textwidth}
			\vspace{-0.2cm}
			\centering
			\includegraphics[width=\textwidth]{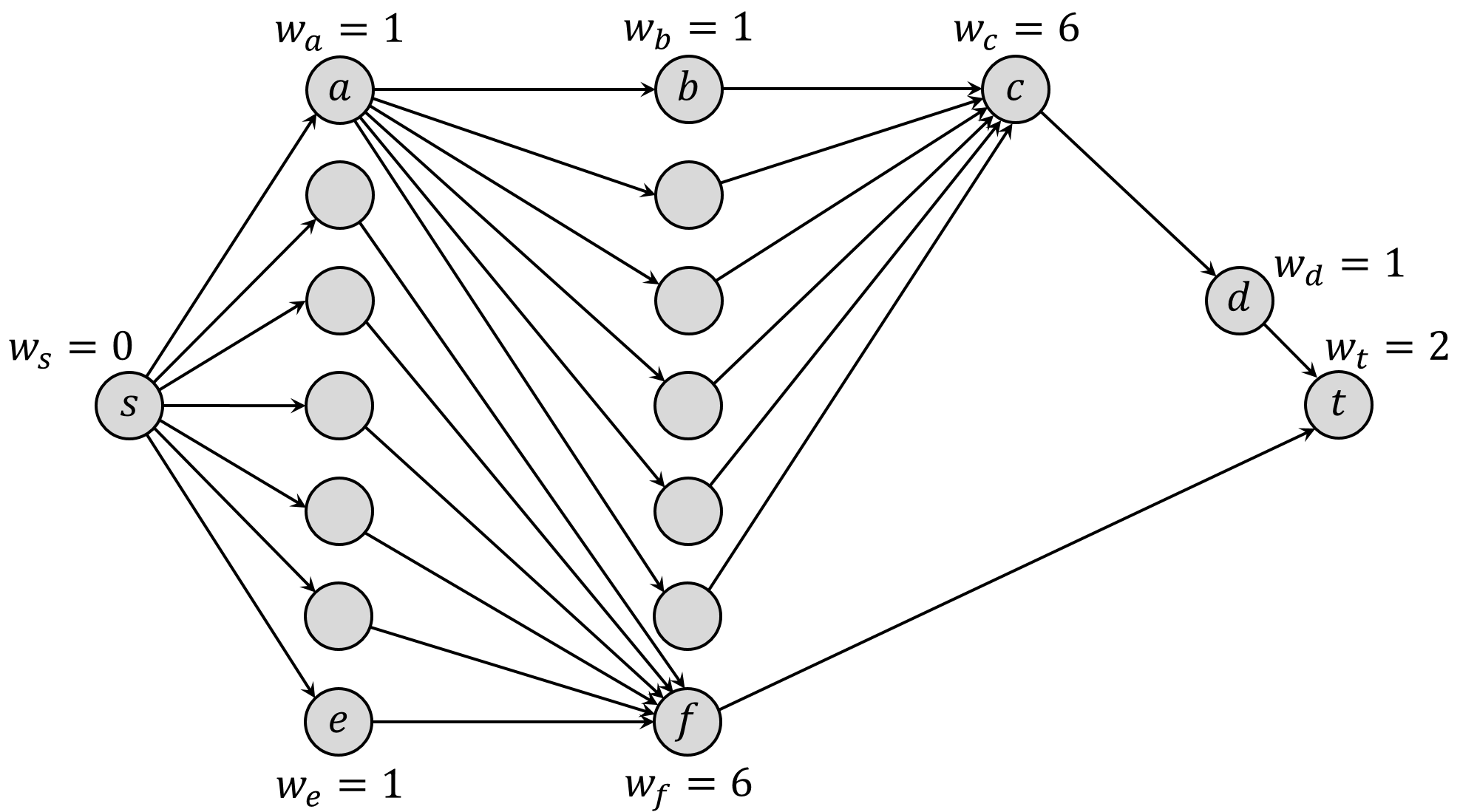}
          		\caption{A DAG in which each node's work value is set to its in-degree. The makespan of this DAG is 11, and
path $s \rightarrow a \rightarrow b \rightarrow c \rightarrow d \rightarrow t$ achieves it.\vspace{1cm}} 
	        	\label{fig:example-DAG}
		\end{minipage}
		\begin{minipage}{0.03\textwidth}
                 ~
		\end{minipage}
		\begin{minipage}{0.48\textwidth}
			\vspace{-0.2cm}
			\centering
			\includegraphics[width=\textwidth]{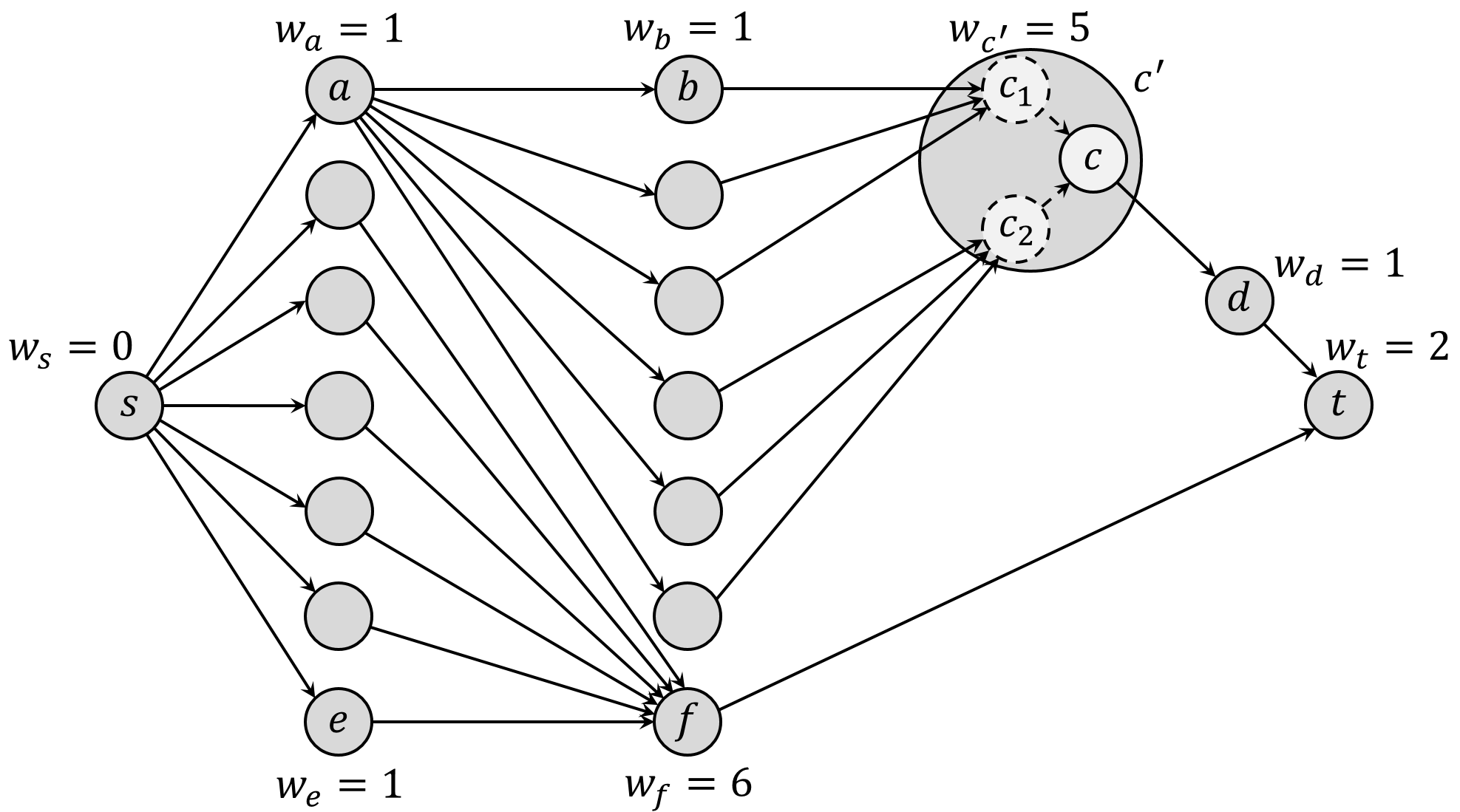}
          		\caption{Node $c$ from the DAG in Figure \ref{fig:example-DAG} has been replaced with a supernode $c'$ in this figure which is nothing but node $c$ with a reducer of height $1$ on top. The makespan of this reduced DAG is 10, and
path $s \rightarrow a \rightarrow b \rightarrow c_1 \rightarrow c \rightarrow d \rightarrow t$ achieves it.}
	        	\label{fig:example-DAG-with-a-reducer}
		\end{minipage}
	\end{minipage}
	\vspace{-0.1cm}
\end{figure*}

\hide{
Unlike the \func{Parallel-MM} code of Figure \ref{fig:par-MM}, many parallel programs, such as dynamic programs, programs solving graph problems, and molecular dynamics codes have complicated and/or irregular structures, and determining how to use a given amount of extra space to minimize their running times by constructing reducers is not a straightforward task. In order to answer such questions for any given program we capture the races of the program in a DAG. 
}
% hide ends
%

In order to analyze a program $P$ with data races, we capture those races in a directed acyclic graph (DAG) $D(P)$, assuming that there are no cyclic read-write dependencies among the memory locations accessed by $P$. Figure \ref{fig:example-DAG} shows an example. We restrict $P$ to the set of programs that perform $\Oh{1}$ other operations between two successive writes to the memory, e.g., \func{Parallel-MM} in Figure \ref{fig:par-MM}. We assume that an update operation is significantly more expensive than any other single operation performed by $P$ and hence the costs of those operations can be safely ignored.
Each node $x$ of $D(P)$ represents a memory location, and a directed edge from node $x$ to node $y$ means that $y$ is updated using the value stored at $x$. The in-degree $d^{(in)}_x$ of node $x$ gives the number of times $x$ is updated. With $x$ we also associate a {\em work} value $w_x$ and set $w_x = d^{(in)}_x$. Assuming that each update operation requires unit time to execute and each node has a lock and a wait queue to serialize the updates, the $w_x$ value represents the time spent updating $x$ (excluding all idle times). The $w_x$ value also represents an upper bound on the time elapsed between the trigger time of any incoming edge of $x$ and the time the edge completes updating $x$. We assume that updates along all outgoing edges of $x$ trigger as soon as all incoming edges complete updating $x$. One can then make the following observation.
\begin{observation}\label{obs:makespan} 
The running time of $P$ with an unbounded number of processors is upper bounded by the makespan of $D(P)$\footnote{To see why this is true start from the sink node and move backward toward the source by always moving to that predecessor $y$ of the current node $x$ that performed the last update on $x$ and noting that after edge $(y, x)$ was triggered it did not have to wait for more than $d^{(in)}_x$ time units to complete applying $y$'s update to $x$.}.
\end{observation}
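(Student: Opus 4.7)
The plan is to formalize the footnote hint as a backward induction along $D(P)$. First I associate with every node $x$ a \emph{completion time} $C_x$, defined to be the moment at which the last of $x$'s $d^{(in)}_x$ incoming updates has been applied; the overall running time of $P$ is then exactly $C_t$, where $t$ is the sink. For any source node $s$ we have $d^{(in)}_s = 0$, so $w_s = 0$ and trivially $C_s = 0$.

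The core step is a local inequality: for every non-source node $x$ there exists a predecessor $y_x$ such that $C_x \leq C_{y_x} + w_x$. Take $y_x$ to be the in-neighbor whose update was applied last at $x$. The edge $(y_x, x)$ is triggered at the instant $y_x$ finishes absorbing all of its own incoming updates, i.e.\ at time $C_{y_x}$. After being triggered, this update sits in $x$'s lock-serialized queue behind at most $d^{(in)}_x - 1 = w_x - 1$ other unit-time updates and is then itself applied in unit time, so it is completed by time $C_{y_x} + w_x$. Since it is \emph{the} last update to $x$, this gives $C_x \leq C_{y_x} + w_x$.

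Starting from $x_0 = t$ and repeatedly setting $x_{i+1} = y_{x_i}$ produces a strictly topologically decreasing sequence of nodes and hence terminates at a source $x_k$. Chaining the local inequalities telescopes to
\[
C_t \;\leq\; C_{x_k} + \sum_{i=0}^{k-1} w_{x_i} \;=\; \sum_{i=0}^{k-1} w_{x_i},
\]
and appending the harmless term $w_{x_k}=0$ lets me rewrite the bound as $\sum_{v \in \pi} w_v$, where $\pi$ is the source-to-sink path $x_k \to x_{k-1} \to \cdots \to x_0 = t$ in $D(P)$. By definition the weight of any $s$-to-$t$ path is at most the makespan of $D(P)$, which establishes the desired bound on the running time of $P$.

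I expect the only delicate point to be the queueing step inside the local inequality: one must verify that $w_x = d^{(in)}_x$ upper-bounds the time between the triggering of any incoming edge of $x$ and its completion, irrespective of the order in which edges are triggered. This hinges on the fact that only $d^{(in)}_x$ unit-time updates ever contend for $x$'s lock, so even the worst-case queue-wait plus service is at most $d^{(in)}_x$. Once this is granted the backward walk and telescoping are mechanical, and the zero work and zero completion time at any source close the recursion.
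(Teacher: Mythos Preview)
Your proposal is correct and is precisely a formalization of the paper's own argument, which is given only as the footnote hint: you follow the same backward walk from the sink, at each step choosing the predecessor whose update was applied last, and use the queueing bound $w_x=d^{(in)}_x$ to telescope. There is nothing substantively different between your write-up and the paper's sketch.
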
 

Then one natural question to ask is the following.

\begin{question}\label{ques:nonreuse}
Given a fixed budget of units of extra space to mitigate the cost of data races in $P$, which memory locations should be assigned reducers and how
should the space be distributed among those reducers in order to minimize the makespan of $D(P)$?
\end{question}
 
Figure \ref{fig:example-DAG-with-a-reducer} shows how to minimize the makespan of the DAG in Figure \ref{fig:example-DAG} using two units of extra space.

The question above ignores the possibility that space can be reused among reducers in $D(P)$. Indeed, after node $x$ reaches its final value (i.e., updated $w_x = d^{(in)}_x$ times) it can release all (if any) space it used for its reducer which can then be reused by some other node $y$. A global memory manager can be used by the nodes to allocate/deallocate space for reducers. The following modified version of Question \ref{ques:nonreuse} now allows space reuse.

\begin{question}\label{ques:reuse-global}
Repeat Question \ref{ques:nonreuse} but allow for space reuse among nodes of $D(P)$ by putting all extra space under the control of a global memory manager that each node calls to allocate space for its reducer right before its first update and to deallocate that space right after its last update.
\end{question}

The problem with a single global memory manager is that it can easily become a performance bottleneck for highly parallel programs. Though better memory allocators have been developed for multi-core or multi-threaded systems \cite{aigner2015fast, berger2000hoard, llalloc, schneider2006scalable, TCMalloc}, we can instead use an approach often used by recursive fork-join programs which avoids repeated calls to an external memory manager altogether along with the overhead of repeated memory allocations/deallocations. 
\hide{
There are memory allocators based on global memory manager for multi-core or multi-threaded systems such as scalloc \cite{aigner2015fast}, Hoard \cite{berger2000hoard}, llalloc \cite{llalloc}, Streamflow \cite{schneider2006scalable}, and TCMalloc \cite{TCMalloc}. They use thread-local space for memory allocation and a global manager for memory deallocation/reuse. For the global manager, they use concurrent data structures. However, these data structures can not completely avoid the need for synchronization \cite{aigner2015fast, henzinger2013quantitative, shavit2011data} without compromising correctness.   
Instead, we can use an approach often used by recursive fork-join programs to avoid the overhead of synchronization and repeated memory allocations/deallocations.
}
% hide ends
%
% Instead, we can use an approach often used by recursive fork-join programs to avoid the overhead of repeated memory allocations/deallocations. 
%
A single large segment of memory is allocated before the initial recursive call is made and a pointer to that segment is passed to the recursive call. Each recursive call splits and distributes its segment among its child recursive calls and reclaims the space when the children complete execution. So, we will assume that all the given extra space initially reside at the source node (i.e., node with in-degree zero). Then they flow along the edges toward the sink node (i.e., node with outdegree zero) possibly splitting along outgoing edges and merging at the tip of incoming edges as they flow. Each unit of space reaching node $x$ moves out of $x$ along some outgoing edge as soon as $x$ becomes fully updated and those edges trigger. Every unit of space may participate in the construction of multiple reducers (possibly zero) along the path it takes.

\begin{question}\label{ques:reuse-flow}
Repeat Question \ref{ques:nonreuse} but now allow for space reuse among nodes of $D(P)$ by flowing each unit of space along a source to sink path and using it in the construction of zero or more reducers along that path.
\end{question}

While several existing results \cite{de1997complexity,du1989complexity,skutella1998approximation,jansen2006approximation} can be extended to answer Questions \ref{ques:nonreuse} and \ref{ques:reuse-global}, to the best of our knowledge, Question \ref{ques:reuse-flow} had not been raised before. In this paper we investigate answers to Question \ref{ques:reuse-flow} by extending it to a more general resource-time tradeoff question posed on a DAG in which nodes represent jobs (not necessarily of updating memory locations), resources (not necessarily space) flow along source to sink paths, and an general duration function (i.e., time needed to complete a job as a function of the amount of resources used) is specified for each node. We consider the following three duration functions: general non-increasing function for the general resource-time question, and recursive binary reduction and multiway ($k$-way) splitting for the space-time case. 

\hide{
We analyze all these variants in terms of their hardness and give approximation algorithms for solving the corresponding optimization problems. We also prove hardness of approximation for the general general non-increasing duration function, and give a pseudo-polynomial time algorithm for series-parallel DAGs. 
}
% hide ends

For general DAGs, we show that even if the entire DAG is available to us offline the problem is strongly NP-hard under all three duration functions, and we give approximation algorithms for solving the corresponding optimization problems. We also prove hardness of approximation for the general resource-time tradeoff problem and give a pseudo-polynomial time algorithm for series-parallel DAGs. Our main results are summarized in Table \ref{tab:results}. 

\hide{
\begin{tabular}{ |p{3cm}||p{3cm}|p{3cm}|p{3cm}|  }
	\hline
	\multicolumn{4}{|c|}{ \textbf{Results} } \\
	\hline
	\textbf{Duration Function} & \textbf{Hardness} & \textbf{Hardness of approximation}& \textbf{Approximation Ratio}\\
	\hline
	General non-increasing function   & Strongly NP-hard &Makespan minimization: $<2$ Resource minimization: $<\frac{3}{2}$ &   
	($\frac{1}{\alpha}$,$\frac{1}{1-\alpha}$) bi-criteria, $0<\alpha<1$\\
	\hline
	Binary Recursive &   Strongly NP-hard  & - & Makespan minimization: $4$ and  ($\frac{4}{3},\frac{14}{5}$) bi-criteria \\
	
	\hline
	$k$-way Splitting &   Strongly NP-hard  & - & Makespan minimization: $5$  \\
	\hline
\end{tabular}
}
% hide ends

\begin{table*}[ht!]
	\centering
	\scalebox{0.88}[0.88]{
		{\small 
		\hspace{-0.25cm}	\begin{colortabular}{ | l | c | c | c |}
				\hline                       
				\rowcolor{tabletitlecolor} Duration function & Hardness & Hardness of Approximation & Approximation Results \\  \hline

				General non-increasing & \begin{tabular}{@{}c@{}} strongly NP-hard\end{tabular} & \begin{tabular}{@{}l@{}} $\bullet$ makespan $< 2~\textrm{OPT}$ with resources fixed\\$\bullet$ resource $< \frac{3}{2}~\textrm{OPT}$ with makespan fixed\end{tabular} & \begin{tabular}{@{}c@{}} $\left(\frac{1}{\alpha}, \frac{1}{1-\alpha}\right)$ bi-criteria (resource, makespan),\\$0<\alpha<1$\end{tabular}\\
				
			\rowcolor{tablealtrowcolor}	Recursive binary & \begin{tabular}{@{}c@{}} strongly NP-hard\end{tabular} & -- & \begin{tabular}{@{}l@{}} $\bullet$ makespan $\leq 4~\textrm{OPT}$ with resources fixed\\$\bullet$ $\left(\frac{4}{3}, \frac{14}{5}\right)$ bi-criteria (resource, makespan)\end{tabular} \\
				
				Multiway splitting & \begin{tabular}{@{}c@{}} strongly NP-hard\end{tabular} & -- & \begin{tabular}{@{}l@{}} makespan $\leq 5~\textrm{OPT}$ with resources fixed\end{tabular}\\								
				\hline
			\end{colortabular}
		}
	}
		\vspace{0.3cm}
	\caption{{ Our main results on resource-time tradeoff problems in which resources are routed along source to sink paths (i.e., related to Question \ref{ques:reuse-flow} and its generalization).}}
		\vspace{-0.7cm}
		\label{tab:results}
		\end{table*}

\hide{
{\bf[BEGIN OLD TEXT]}

Space plays an important role in optimizing the parallelism thereby reducing the time needed to complete a computational task in a parallel environment. Although we have enough processors to have a high parallelism, it is often the amount of space that becomes the bottleneck on how much parallelism can be exploited, when we have data transfer instructions that involve memory(space). Space constraints on the system decides the parallelism in such cases. For example, if we have multiple processors writing to the same memory location, there is a race condition which reduces the parallelism making it a serial execution for preserving the correctness of the computation. But if we have extra space we might mitigate this problem of reduced parallelism by using a splitter as shown in the figure below. So, an interesting question would then be optimizing the parallelism with given some extra space. This is the main motivation for our problem, space time trade-off in parallel computations. This problem appeared in disguise to us in different areas of optimization like, Operational Research, Project Planning, Production Scheduling etc. Detailed work is mentioned in the related work section. 

{\bf[END OLD TEXT]}
}
% hide ends

% REZAUL: reviwed once.

\subsection*{Related Work}
\label{sec:related}
%\shih{consistency: without "the path constraints"\\equivalnece\\reducers: sharmina in charge of this one}

While several prior works either directly or indirectly address Questions \ref{ques:nonreuse} (nonreusable resources) and \ref{ques:reuse-global} (globally reusable resources), to the best of our knowledge, Question \ref{ques:reuse-flow} (reusable along flow paths) has not been considered before.

The well-known time-cost tradeoff problem (TCTP) is closely related to our nonreusable resources question. In TCTP, some activities are expediated at additional cost so that the makespan can be shortened. Deadline and budget problems are two TCTP variants with different objectives. While the deadline problem seeks to minimize the total cost to satisfy a given deadline, the budget problem aims to minimize the project duration to meet the given budget constraint \cite{akkan2005network}. Most researchers consider the tradeoff functions to be either linear continuous or discrete giving rise to linear TCTP and discrete TCTP, respectively. 

Linear TCTP was formulated by Kelley and Walker in 1959 \cite{kelley1959critical}. They assumed affine linear and decreasing tradeoff functions. In 1961, linear TCTP was solved in polynomial time using network flow approaches independently by Fulkerson \cite{fulkerson1961network} and Kelley \cite{kelley1961critical}. Phillips and Dessouky \cite{phillips1977solving} later improved that result.

In 1997, De et al. \cite{de1997complexity} proved that discrete TCTP is NP-hard. 
% In this problem, the time needed to complete an activity comes from a finite 
% number of different options. For discrete TCTP, 
For this problem, Skutella \cite{skutella1998approximation} proposed the first approximation algorithm under budget constraints which achieves an approximation ratio of $\Oh{\log{r}}$, where $r$ is the ratio of the maximum duration of any activity to the minimum one. Discrete TCTP can also be used to approximate the TCTP with general time-cost tradeoff functions, see, e.g., Panagiotakopoulos \cite{panagiotakopoulos1977cpm} and Robinson \cite{robinson1975dynamic}. For details on discrete TCTP see De et al. \cite{de1995discrete}.

Our problem with globally reusable resources (Question \ref{ques:reuse-global}) is very similar to the problem of scheduling precedence-constrained malleable tasks \cite{turek1992approximate}. In 1978, Lenstra and Rinnooy Kan \cite{lenstra1978complexity} showed that no polynomial time algorithm exists with approximation ratio less than $\frac{4}{3}$ unless $\textrm{P} = \textrm{NP}$. About 20 years later, Du and Leung \cite{du1989complexity} showed that the problem is strongly NP-hard even for two units of resources. In 2002, under the monotonous penalty assumptions of Blayo et al \cite{blayo1999dynamic}, Lep\`ere et al. \cite{lepere2002approximation} first proposed the idea of two-step algorithms -- computing an allocation first, and then scheduling tasks, and used this idea \cite{lepere2002approximationb} to design a algorithm that achieve an approximation ratio of $\approx 5.236$. In the first phase, they approximate an allocation using Skutella's algorithm \cite{skutella1998approximation}. Similarly, based on Skutella's approximation algorithm, Jansen and Zhang \cite{jansen2006approximation} devised a two-phase approximation algorithm with the best-known ratio of $\approx 4.730598$ and showed that the ratio is tight when the problem size is large. For more details on the problems of scheduling malleable tasks with precedence constraints, please check Dutot et al. \cite{dutot2004scheduling}.

There are memory allocators based on global memory manager for multi-core or multi-threaded systems such as scalloc \cite{aigner2015fast}, Hoard \cite{berger2000hoard}, llalloc \cite{llalloc}, Streamflow \cite{schneider2006scalable}, and TCMalloc \cite{TCMalloc}. They use thread-local space for memory allocation and a global manager for memory deallocation/reuse. For the global manager, they use concurrent data structures. However, these data structures can not completely avoid the need for synchronization \cite{aigner2015fast, henzinger2013quantitative, shavit2011data} without compromising correctness.
\newcommand\degreeSum[0]{degree\_sum }
\newcommand\makeSpan[0]{makespan }
\newcommand\kWayReducer[0]{\textit{$k$-way-split reducer }}
\newcommand\binaryReducer[0]{\textit{recursive binary split reducer }}
\newcommand\duration{duration}

\section{Preliminaries, Problem Formulation}
\label{sec:probs}

\hide{ % Joe is removing for now; it is redundant when considering the new introduction
Let $D=(V,E)$ be a directed acyclic graph (DAG). Each vertex $v\in V$ represents a memory location, and a (directed) edge $(u,v)\in E$ represents a write operation that reads from $u$ and writes to $v$. We assume that each write operation takes a unit of time to complete. If there are multiple writes on a single vertex, then these writes must be performed serially to avoid write-write conflicts. A memory location (vertex) $v$ cannot be read until its value is finalized (i.e., all write operations corresponding to $v$'s predecessors in $D$ have been completed). Thus, the writing at $v$ corresponding to $v$'s predecessors takes $d_v$ units of time, where $d_{v}$ is the in-degree of $v\in V$.  A {\em critical path} $\pi$ is a maximal path in $D$, from a start vertex of in-degree zero to an end vertex of out-degree zero. Let $T_{\pi} = \sum_{v\in \pi}d_{v}$ denote the sum of in-degrees of the vertices in path $\pi$.
%
% terminology \degreeSum is not used elsewhere, it seems?
%
% We call $T_{\pi}$, \degreeSum of path $\pi$.
Then, $T_{\pi}$ is the total time spent doing all the write operations serially on the vertices of path $\pi$. Let $T_{D} = max_{\pi}(T_{\pi})$ be the maximum over all paths $\pi$ in DAG $D$. The {\em makespan} of DAG $D$ is the minimum length of time required in order to complete all write operations, avoiding write-write conflicts, subject to the precedence constraints given by the edges $E$ of $D$. Note that $T_{D}$ is an upper bound on the makespan of $D$, as stated above in Observation~\ref{obs:makespan}.
}

\hide{%Joe removing for now; note, though, that this figure may be useful to have in the full paper!
\begin{figure*}[h!]
	\centering
	\includegraphics[scale=0.40]{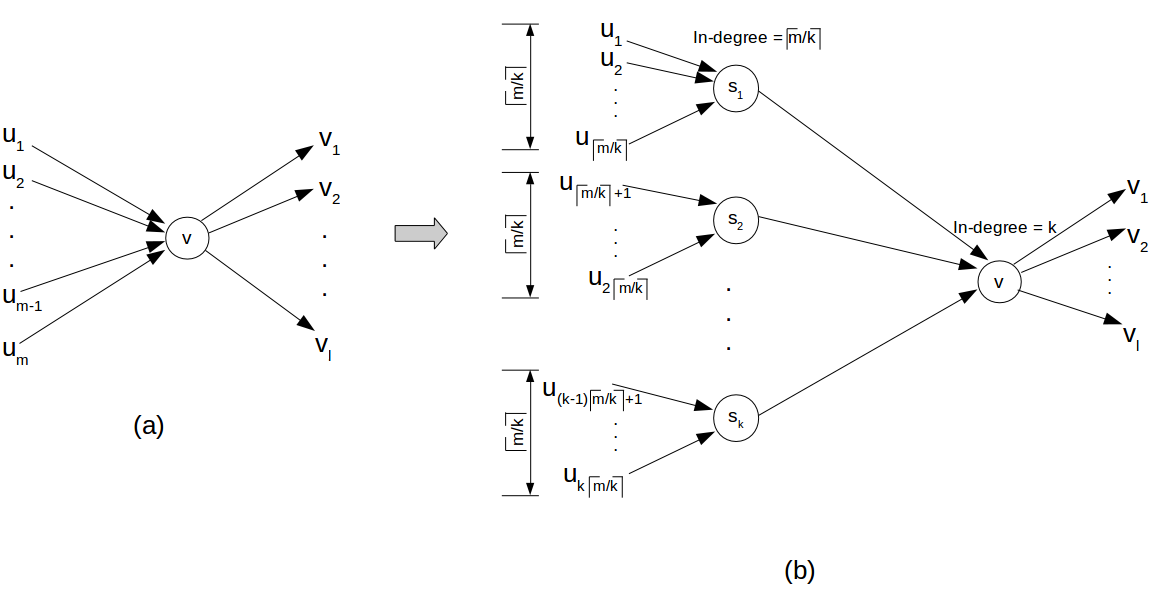}
	\caption{A $k$-way split reducer}
	\label{fig:k-way}
\end{figure*}
}

\hide{%Joe removing for now; this is mostly redundant wrt the introduction
  Write operations to a vertex of $D$ can be parallelized with a {\em reducer}, which uses extra memory space to distribute write operations. A {\em $k$-way split reducer} utilizes $k$ units of extra space, $S_v = \{s_1, s_{2},..,s_k\}$, associated with a vertex $v$, with $2 \le k \le d_{v}$, such that the write operations associated with incoming edges at $v$ are distributed among the vertices $S_{v}$, which then have edges linking each $s_i$ to $v$. Refer to Figure~\ref{fig:k-way}, where a vertex $v$ with $m=d_v$ incoming edges becomes a vertex of in-degree $k$, by means of a $k$-way splitter, whose vertices $s_i$ each have in-degree $\lceil[\big]{m/k}\rceil $ incoming edges. Note that the in-degree contribution, $m=d_v$, of vertex $v$ to any path $\pi$ through $v$ is changed, via this $k$-way splitter at $v$, to $(\lceil[\big]{d_{v}/k}\rceil  + k)$.
  }

\hide{%Joe removing for now; this is mostly redundant wrt the introduction
Another type of reducer, known as a {\em recursive binary split reducer}, utilizes a complete binary tree of vertices (representing extra memory space) to replace the set of $m$ incoming edges to a vertex $v$, having in-degree $m=d_v$, with a complete binary tree of new vertices, each having in-degree 2. Specifically, for a tree having $l=2^i$ leaves (and thus $2^{i+1}-1$ non-leaves, including the root vertex $v$), we evenly distribute the $m=d_v$ incoming edges (write operations) at $v$ to the $l$ leaves of the tree.  At each internal vertex of the tree, other than the root ($v$), there are two incoming edges, the write operation takes 1 unit of time (one of the children's spaces is used to write both child's values); at the root, the write location is different from its children, so writing takes 2 units of time at the root. In total, the in-degree contribution, $m=d_v$, of vertex $v$ to any path $\pi$ through $v$ is changed via this reducer from $d_v$ to $(\lceil[\big]{d_{v}/l}\rceil  + \log l +1)$.
}

% We also define another reducer called \binaryReducer. Given $k$ units of extra space, we create a complete binary tree rooted at vertex $v$ that has $l$ leaves where $l$ is the largest power of $2$ and $l \le k$. We distribute the $m$ incoming edges (write operations) at $v$ to those $d$ leaves of the binary tree. Each internal node has two incoming edges that denote two write operations at that internal node. For internal nodes we do not use any extra space besides its two children. At root, the write location is diferent from its two children. Hence it takes $2$ units of time at root. In the internal nodes except root node, one of its children's space is used to write both children's value; it takes $1$ unit of time. After using $k$ units of extra space, in-degree contribution of $v$ is changed from $d_v$ to $(\ceil[\big]{d_{v}/l} + \log l +1)$ where $\log l$ is the height of the tree. 

\hide{ %Joe removing for now; this is mostly redundant wrt the introduction
Reducers yield a tradeoff between extra space (the added vertices of the reducer) and the writing time at a vertex $v$. In general, we obtain a {\em discrete resource-time tradeoff problem}, where, here, the valuable ``resource'' is the space that is added, in order to reduce the time necessary for the write operations. By investing in additional space, we are able to reduce the time it takes to do conflict-free write operations.
}

% We see a tradeoff between extra space and writing time on a vertex. We generalize this notion to discrete resource-time tradeoff problem. 

% \subsection{Discrete resource-time tradeoff problem}
In general, the option to use reducers to trade off between extra space and the time to complete race-free writing operations leads to a {\em discrete resource-time tradeoff problem}, where, here, the valuable ``resource'' is the space that is added, in order to reduce the time necessary for the write operations. By investing in additional space, we can reduce the time it takes to do conflict-free write operations.

We formalize the discrete resource-time tradeoff problem. Consider a DAG, $D = (V,E)$, whose vertices $V$ correspond to jobs, and whose edges represent precedence relations among jobs. Without loss of generality, we assume that the DAG has a single source and a single sink vertex. The duration of a job depends on how much resource it receives. For each job $v\in V$, there is a non-increasing duration function $t_v(r)$ that denotes the time required to complete job $v$ using $r$ units of resources.
We call $\langle r,t_v(r)\rangle$ a {\em resource-time tuple} associated with job (vertex) $v$. 
We consider three classes of duration functions -- general non-increasing step functions, $k$-way splitting functions, and recursive binary splitting functions. 

\para{General non-increasing step function.}
Let $l_v$ be the number of resource-time tuples associated with job $v$. The $i$-th resource-time tuple is $\langle r_{v,i}, t_v(r_{v,i})\rangle$ where $1\le i \le l_v$. Then, the duration function $t_v(r)$ is a step function with $l_v$ steps described as follows:
 \begin{equation}
 \label{eq:duration-function-arb}
 t_v(r)=
 \begin{cases}
 t_v(r_{v,i}), & \text{if}\ r_{v,i}\le r < r_{v,i+1}, 1 \le i < l_v, \\
 t_v(r_{v,l_v}). & \text{if}\ r_{v,l_v} \le r,
 \end{cases}
 \end{equation} 
% where $r_{v,1} < r_{v,2} <\cdots < r_{v,l_{v}}$ and $t_v(r_{v,1}) \ge t_v(r_{v,2}) \ge \cdots \ge t_v(r_{v,l_{v}})$. 
%
where $r_{v,1} = 0, r_{v,j} < r_{v,j + 1}$ and $t_v(r_{v,j}) \ge t_v(r_{v,j + 1})$ for $1 \leq j < l_v$.

\hide{
 \begin{equation}
 t_v(r)=
 \begin{cases}
\infty, & \text{if}\ 0\le r < r_{v,1}, \\
 t_v(r_{v,i}), & \text{if}\ r_{v,i}\le r < r_{v,i+1}, 1 \le i < l_v, \\
 t_v(r_{v,l_v}). & \text{if}\ r_{v,l_v} \le r,
 \end{cases}
 \end{equation} 
% where $r_{v,1} < r_{v,2} <\cdots < r_{v,l_{v}}$ and $t_v(r_{v,1}) \ge t_v(r_{v,2}) \ge \cdots \ge t_v(r_{v,l_{v}})$. 
%
where $r_{v,j} < r_{v,j + 1}$ and $t_v(r_{v,j}) \ge t_v(r_{v,j + 1})$ for $1 \leq j < l_v$.
}
% hide ends
%
 
\para{$k$-way splitting.}
A {\em $k$-way split reducer} utilizes $k$ units of extra space, $S_v = \{s_1, s_{2},..,s_k\}$, associated with a vertex $v$, with $2 \le k \le d^{(in)}_{v}$, such that the write operations associated with incoming edges at $v$ are distributed among the vertices $S_{v}$, which then have edges linking each $s_i$ to $v$.
\hide{Refer to Figure~\ref{fig:k-way}, where a vertex $v$ with $m=d_v$ incoming edges becomes a vertex of in-degree $k$, by means of a $k$-way splitter, whose vertices $s_i$ each have in-degree $\lceil[\big]{m/k}\rceil $ incoming edges. Note that the in-degree contribution, $m=d_v$, of vertex $v$ to any path $\pi$ through $v$ is changed, via this $k$-way splitter at $v$, to $(\lceil[\big]{d_{v}/k}\rceil + k)$.}
The duration function that results from $k$-way split reducers is given by
 \begin{equation}
 t_v(r)=
 \begin{cases}
 t_v(0), & \text{if}\ k \in \{0,1\}  \\
 \lceil{t_v(0)/k}\rceil  + k, & \text{if}\ 2 \le k \le \lfloor\sqrt{t_v(0)}\rfloor \\
 t_v(\lfloor\sqrt{t_v(0)}\rfloor). & \text{if}\  \lfloor\sqrt{t_v(0)}\rfloor < k.
 \end{cases}
 \end{equation}

 \para{Recursive binary splitting.}
 The duration function that results from a recursive binary split reducer is given by a step function, as follows.
 The resource-time tuples are defined for $r = 0$ and $2^i$ where $0\le i\le k$   and $k = \lfloor \log_2 t_v(0) - \log_2 \log_2 e\rfloor$. The duration function $t_v(2^k) = \lceil{t_v(0)/2^k}\rceil  + k +1$ is minimized when $k= \lfloor \log_2 t_v(0) - \log_2 \log_2 e\rfloor$ (by differentiating $t_v(2^k)$ w.r.t. $k$). 
 
  \begin{equation}
  t_v(r)=
  \begin{cases}
  t_v(0), & \text{if}\ r = 0,1  \\
  \lceil{t_v(0)/2^i}\rceil  + i + 1 ,  & \text{if}\ r = 2^i, 2\le i\le k \\
  t_v(2^i) ,  & \text{if}\ 2^i \le r < 2^{i+1}, 2\le i\le k \\
  t_v(2^k) ,  & \text{if}\ i > k
  \end{cases}
  \end{equation}
  
When utilizing a reducer, extra space serves as the limited resource and the time taken for race-free writing at a vertex $v$ is the duration of the job corresponding to $v$. Both the $k$-way splitting duration function and the recursive binary splitting duration function are special cases of general non-increasing function. 

%\subsection{Problem statements}

We consider jobs whose duration functions are of the types described above, and we distinguish between two optimization problems, depending on the objective function:
    
\para{Minimum-Makespan Problem.} Given a resource budget of $B$, assign the resources to the vertices $V$ such that the makespan of the project is minimized. Resources can be reused over a path.
  
\para{Minimum-Resource Problem.}
 Given a makespan target of $T$, minimize the amount of resources to achieve target makespan. Resources can be reused over a path. 
 
%xxx
 % \joe{Where should we describe the distinction between ``jobs are vertices'' and ``jobs are edges''??}

Finally, we remark that instead of jobs corresponding to vertices of the DAG, we can transform the DAG $D$ into another DAG $D'$ in which jobs correspond to edges of $D'$, and the precedence relations among jobs are enforced by introducing dummy edges, as follows: For each node $v$ in $D$, we introduce an edge $e_{v}=(a_v,b_v)$ in $D'$ (which then has the corresponding duration function, specified, e.g., by resource-time tuples). For each edge $(u,v)$ of $D$, we introduce a dummy edge, $e=(b_u,a_v)$ in $D'$, from the endpoint $b_u$ of edge $e_u=(a_u,b_u)$ to the origin $a_v$ of edge $e_v=(a_v,b_v)$, with resource-time function $t_{e}(r) = 0$ for all valid resource levels $r$. 

% \joe{Somewhere we should say that when the resource is not discrete, the problem is easy/LP??} % xxx

\section{Approximation Algorithms}
\label{sec:approx}

% REZAUL: reviwed till the end of Lemma \ref{lemma25}.

\subsection{Bi-criteria Approximation for Non-increasing Duration Functions}
\label{bi-criteria-general}

% Often linear programming techniques are used to find approximation algorithms for NP-hard problems.
We use linear programming in our approximation algorithms. First, we relax the discrete duration function to a linear one. We transform the DAG so that a relaxed linear non-increasing duration function can be used. The transformation happens in two steps.

\para{Activity on arc reduction.}  % this is somewhat redundant now, as we have this also earlier, in problem formulation??  Joe
We reduce the input DAG $D$ into an equivalent DAG $D^{'}$ with activities on arcs instead of nodes. This is a simple transformation described earlier in Section \ref{sec:probs}.

\hide{
In the first step, we reduce the DAG $D$ into another DAG $D^{'}$ where jobs are represented by edges and the precedence relations among jobs are enforced by introducing dummy edges. For each node $v$ in $D$, we introduce an edge $e_{v}$ in $D^{'}$. The duration function of node $v$ from $D$ is assigned to edge $e_{v}$ of $D^{'}$. For each edge $(u,v)$ in $D$, we add a dummy edge in $D^{'}$ from the end point of edge $e_u$ to the starting point of edge $e_v$. The resource-time function for any dummy edge $e$ is $t_{e}(r) = 0$ for all valid $r$. 
}
% hide ends

\hide{Joe: here is the rephrasing of the above paragraph that I included earlier now:
  Finally, we remark that instead of jobs corresponding to vertices of the DAG, we can transform the DAG $D$ into another DAG $D'$ in which jobs correspond to edges of $D'$, and the precedence relations among jobs are enforced by introducing dummy edges, as follows: For each node $v$ in $D$, we introduce an edge $e_{v}=(a_v,b_v)$ in $D'$ (which then has the corresponding duration function, specified, e.g., by resource-time tuples). For each edge $(u,v)$ of $D$, we introduce a dummy edge, $e=(b_u,a_v)$ in $D'$, from the endpoint $b_u$ of edge $e_u=(a_u,b_u)$ to the origin $a_v$ of edge $e_v=(a_v,b_v)$, with resource-time function $t_{e}(r) = 0$ for all valid resource levels $r$. 
}

\begin{figure*}[h!]
\begin{minipage}{\textwidth}
\vspace{2cm}
	\centering
	\includegraphics[scale=0.4]{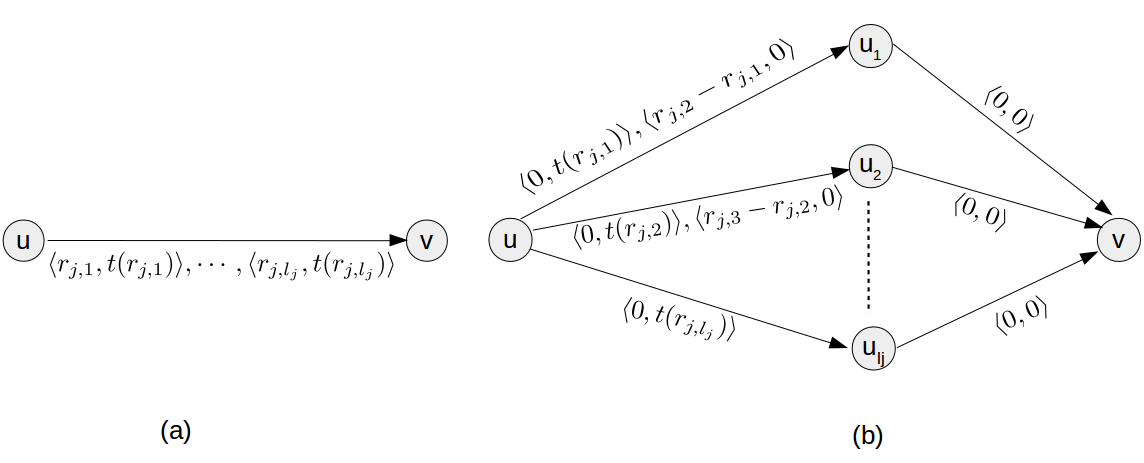}
	\vspace{-0.1cm}
	\figcaption{Transforming {\bf{\boldmath{$(a)$}}} a DAG with $l_j \ge 2$ resource-time tuples on each arc into {\bf{\boldmath{$(b)$}}} one with at most two resource-time tuples on each arc (Section \ref{bi-criteria-general})}.
	\label{fig:two-tuples}

	\vspace{0.5cm}

	\includegraphics[scale=0.4]{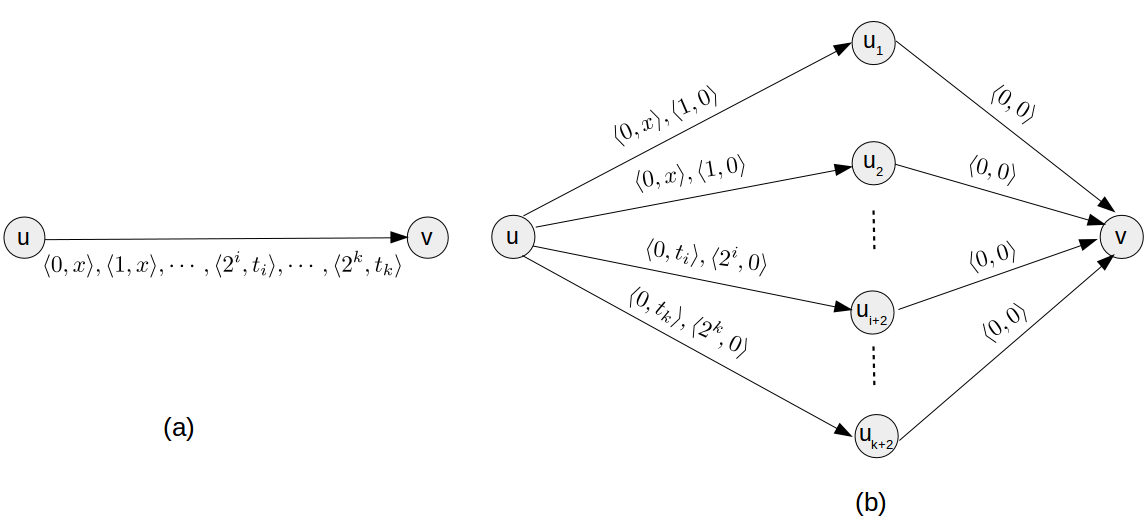}
	\vspace{-0.3cm}
	\figcaption{Transforming {\bf{\boldmath{$(a)$}}} a DAG with $(k+1)$ resource-time tuples on each arc based on the recursive binary splitting function into {\bf{\boldmath{$(b)$}}} one with at most two resource-time tuples on each arc (Section \ref{bi-criteria-bin})}.
	\label{fig:binary_approx}
\vspace{2cm}
\end{minipage}
\end{figure*}

\hide{
\begin{figure*}[h!]
	\centering
	\includegraphics[scale=0.35]{pics/two-tuples}
	\vspace{-0.3cm}
	\caption{Transforming {\bf{\boldmath{$(a)$}}} a DAG with an general number of resource-time tuples on each arc into {\bf{\boldmath{$(b)$}}} one with at most two resource-time tuples on each arc (Section \ref{bi-criteria-general})}.
	\label{fig:two-tuples}
\end{figure*}
}
% hide ends

\para{Activity with two tuples.}
Following \cite{skutella1998approximation}, we create a DAG $D^{''}$ from $D^{'}$ such that all activities in $D^{''}$ are still on arcs and each such activity has at most $2$ resource-time tuples as shown in Figure \ref{fig:two-tuples}(b). 
%
% The set of vertices in $D^{''}$ is the same as that of $D^{'}$. \hide{Rezaul: is this true?}
%
Let $j$ be a job with $l_j \ge 2$ resource-time tuples $\langle r_{j, i}, t_j(r_{j, i})\rangle, 1 \leq i \leq l_{j}$ with $0 = r_{j,1} < r_{j,2} <\cdots < r_{j,l_{j}}$ and $t_j(r_{j,1}) \ge t_j(r_{j,2}) \ge \cdots \ge t_j(r_{j,l_{j}})$ (following Equation \ref{eq:duration-function-arb}). Let edge $(u,v)$ of $D^{'}$ represent job $j$. We add $l_j$ parallel chains, each consisting of two edges in $D^{''}$ (Figure \ref{fig:two-tuples}). For $1\le i\le l_j$, we create a chain of two edges $(u, u_i)$ and $(u_i, v)$.
We create a job $j_i$ for arc $(u, u_i)$ and associate two resource-time tuples with it.  
For $1 \le i < l_j$, job $j_i$ can be finished either using $0$ resource in $t_j(r_{j,i})$ units of time or using $(r_{j,i+1} - r_{j,i})$ units of resource in $0$ unit of time. The logic is that job $j$'s duration can be reduced from $t_j(r_{j,i})$ to $t_j(r_{j,i+1})$ provided the resource difference $(r_{j,i+1} - r_{j,i})$ is allocated to $j_i$. Thus the duration function is $t_{j_i}(0) = t_j(r_{j,i})$ and $t_{j_i}(r_{j,i+1} - r_{j,i}) = 0$. Job $j_{l_{j}}$'s (bottom most edge in the $l_j$ parallel edges for job $j$) duration cannot be further improved from $t_j(r_{j,l_j})$ units of time by using extra resources. The resource-time tuple at edge $(u_i,v)$ is $\langle0,0\rangle$ where $1\le i\le l_j$. 

There is a canonical mapping of resource usages and durations for jobs $j_i$ to that of job $j$. Let $x_i$ be the units of resource used for job $j_i$, then for job $j$, $\sum_{i=1}^{l_j}x_i$ units of resource are used. The time taken to finish job $j$ is $\max\{t_{j_i}(x_i) | 1 \le i \le l_j\}$. Without loss of generality, if we use $0$ unit of resource for job $j_i$ if $t_{j,i}(0) \le \max\{t_{j,1}(x_1), t_{j,2}(x_2), \cdots , t_{j,i-1}(x_{i-1})\}$, then this mapping is bijective. Thus we get the following lemma.  

\begin{lemma}
	Any approximation algorithm $\mathcal{A}$ on DAG $D^{''}$ (activity on edge and each edge has at most two resource-time tuples) with an approximation ratio $\alpha$ implies an approximation algorithm with the same approximation ratio $\alpha$ on general DAG $D$ (activity on vertex and each job can have more than two resource-time tuples). 	
\end{lemma}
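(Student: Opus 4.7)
The plan is to show that the two DAGs $D$ and $D''$ are equivalent as instances of the minimum-makespan (and minimum-resource) problem, in the sense that their optimal values coincide and any feasible solution on one can be converted to a feasible solution on the other of equal cost. Once that equivalence is established, running $\mathcal{A}$ on $D''$ and pulling the solution back to $D$ immediately yields approximation ratio $\alpha$ on $D$.

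First I would formalize the bijection already sketched in the text. Given a feasible allocation on $D''$, for each original job $j$ let $x_i$ be the resource assigned to sub-job $j_i$ (for $1 \le i \le l_j$). Define the corresponding allocation on $D$ by assigning $j$ the resource total $R_j = \sum_{i=1}^{l_j} x_i$. Using the two-tuple construction, the duration of the $i$-th parallel chain in $D''$ equals $t_{j_i}(x_i)$ (which is either $t_j(r_{j,i})$ or $0$), so the completion time of job $j$ in $D''$ is $\max_{1\le i \le l_j} t_{j_i}(x_i)$. A short case analysis then shows this maximum equals $t_j(R_j)$, matching the duration of job $j$ in $D$. In the reverse direction, given a resource allocation $R_j$ on $D$ with $r_{j,k} \le R_j < r_{j,k+1}$, set $x_i = r_{j,i+1}-r_{j,i}$ for $i < k$, set $x_k = R_j - r_{j,k}$, and $x_i = 0$ for $i > k$; this uses exactly $R_j$ resource and achieves duration $t_j(r_{j,k}) = t_j(R_j)$ on the heaviest chain.

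Next I would argue that this translation is well-defined at the level of schedules, not just per-job allocations. Since the chains $(u,u_i),(u_i,v)$ replacing job $j$ all connect the same pair of endpoints $(u,v)$, the predecessor/successor relations in $D''$ mirror those in $D$; in particular, a job downstream of $j$ cannot start until all $l_j$ chains complete, which is exactly when $j$ would complete in $D$. Hence any valid schedule on $D''$ yields a valid schedule on $D$ of the same makespan, and vice versa. The tie-breaking convention stated in the excerpt (set $x_i = 0$ whenever $t_{j_i}(0)$ does not exceed the maximum duration already incurred by $j_1,\ldots,j_{i-1}$) removes redundancy and makes the map a bijection between canonical allocations.

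From these two facts it follows that $\mathrm{OPT}(D)=\mathrm{OPT}(D'')$, both for the minimum-makespan objective (with resource budget $B$) and for the minimum-resource objective (with makespan target $T$), since the transformations preserve both the resource sum and the makespan. Therefore if $\mathcal{A}$ returns a solution on $D''$ of cost at most $\alpha\cdot\mathrm{OPT}(D'')$, converting it back to $D$ via the mapping above gives a solution on $D$ of cost at most $\alpha\cdot\mathrm{OPT}(D)$, proving the lemma. The main obstacle is the per-job equivalence of durations: one must verify carefully that the non-increasing, step-function shape of $t_j$ together with the ``$0$-time at cost $r_{j,i+1}-r_{j,i}$'' two-tuple gadgets reproduce $t_j(R_j)$ exactly, for every resource level $R_j$, including the levels strictly between two successive breakpoints where rounding down to $r_{j,k}$ is implicit.
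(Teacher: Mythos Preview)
Your approach matches the paper's: the paper gives no separate proof of this lemma and simply states it as a consequence of the canonical bijection described in the preceding paragraph (resource $\sum_i x_i$, duration $\max_i t_{j_i}(x_i)$, with the stated tie-breaking rule). Your write-up is considerably more detailed than what the paper offers.

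One small point to tighten. In your forward direction you assert that for an \emph{arbitrary} allocation $(x_i)$ on $D''$ the maximum $\max_i t_{j_i}(x_i)$ \emph{equals} $t_j(R_j)$ with $R_j=\sum_i x_i$. That equality need not hold: if, say, $x_1=0$ but $x_2=r_{j,3}-r_{j,2}$, the first chain forces duration $t_j(0)$ while $t_j(R_j)$ may already be strictly smaller. What is true in general is the inequality $t_j\bigl(\sum_i x_i\bigr)\le \max_i t_{j_i}(x_i)$ (for each $i$ with $t_j(r_{j,i})$ exceeding the max, the chain must have been fully paid, so $\sum_i x_i\ge r_{j,k}$ for the first $k$ with $t_j(r_{j,k})\le\max$). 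That inequality is exactly what you need for $\mathrm{OPT}(D)\le \mathrm{OPT}(D'')$; equality holds only for the canonical allocations you describe later, which handles the reverse direction. So the argument is fine once you replace ``equals'' by ``is at least'' in the forward step and reserve equality for the canonical reverse map. Since resources here are routed as flow, it is also worth saying explicitly that splitting a flow of value $R_j$ through $(u,v)$ among the $l_j$ parallel chains (and merging it back at $v$) preserves flow conservation and total source outflow, so the budget $B$ is unaffected by the transformation.
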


From now on, we will only consider DAGs whose edges represent jobs, with each edge having at most two resource-time tuples.

\para{Linear relaxation.}
In $D^{''}$, any edge $(u,v)$ can have either two resource-time tuples $\{\langle 0, t_{(u,v)}(0) \rangle, \langle r_{(u,v)}, 0 \rangle \}$ or a single resource-time tuple $\{\langle 0, t_{(u,v)}(0) \rangle\}$. 
With linear relaxation, $r \in [0, r_{(u,v)}]$ units of resource can be used to reduce the completion time of the job corresponding to edge $(u,v)$ that has two resource-time tuples. The corresponding duration function $t_{(u,v)}(r)$ is as follows:

\begin{equation}
t_{(u,v)}(r) = \frac{t_{(u,v)}(0)}{r_{(u,v)}}r \text{   for  } r \in [0,r_{(u,v)}]
\end{equation}

The linear duration function $t_{(u,v)}(r)$ for the job $(u,v)$ with single resource-time tuple is as follows:

\begin{equation}
t_{(u,v)}(r) = t_{(u,v)}(0) \text{  for all}\ r \ge 0
\end{equation}

\hide{
With linear relaxation, $r \in [0, r_{(u,v)}]$ units of resource can be used to reduce the completion time of the job corresponding to edge $(u,v)$. The non-increasing function $t_{(u,v)}(r)$ that denotes the time taken to finish the job $(u,v)$ using $r \in [0, r_{(u,v)}]$ units of resource is as follows.
\hide{
\begin{equation}
t_{(u,v)}(r)=
\begin{cases}
\frac{t_{(u,v)}(0)}{r_{(u,v)}}r, & \begin{tabular}{@{}c@{}} if $0\le r < r_{(u,v)}$ and\\ $(u,v)$ has two tuples\end{tabular} \\
0, & \begin{tabular}{@{}c@{}} if $r_{(u,v)} \le r$ and\\ $(u,v)$ has two  tuples\end{tabular} \\
t_{(u,v)}(0), & \begin{tabular}{@{}c@{}} for any valid $r$ and\\ if $(u,v)$ has one tuple\end{tabular}
\end{cases}
\end{equation}  
}
% hide ends
%
\begin{equation*}
t_{(u,v)}(r)=
\begin{cases}
\frac{t_{(u,v)}(0)}{r_{(u,v)}}r, & \text{if}\ 0\le r < r_{(u,v)} \text{ and }\ (u,v) \text{ has two  tuples}\ \\
0, & \text{if}\ r_{(u,v)} \le r \text{ and }\ (u,v) \text{ has two  tuples}\ \\
t_{(u,v)}(0), & \text{for any valid }\ r \text{ and if }\ (u,v) \text{ with one tuple}\
\end{cases}
\end{equation*}  
}
% hide ends

\para{Linear programming formulation.}
Since we are allowed to reuse resources over a path we can model the problem as a network flow problem where resources are allowed to flow from the source to the sink in the DAG $D^{''}$. Let $E$ be the set of edges in $D^{''}$.
%
% Let vertex $v$ in $D^{''}$ denote event $v$. 
%
Let $f_{(u,v)}$ denote the amount of resources that flow through the edge $(u,v)$. Using linear relaxation on edge $(u,v)$, the time taken to finish the activity is $t_{(u,v)}(f_{(u,v)})$. Let the vertices in $D^{''}$ denote events. From now onwards, we use a vertex and its corresponding event synonymously. Let $E_v = \{(x,v)\}$ be the set of edges that are incident on vertex $v$. Event $v$ occurs if and only if all the jobs corresponding to the edges in set $E_v$ are finished. Let $T_v$ denote the time when event $v$ occurs. Let $s$ and $t$ denote the source vertex and the sink vertex, respectively. For source vertex $s$, we assume $T_s = 0$. All variables are non-negative.

\textbf{Constraints:}
\begin{equation}
\label{two-tuples-upper}
f_{(u,v)} \le r_{(u,v)} \text{  , }\ \forall (u,v)  \text{   with two resource-time tuples.}
\end{equation}

\begin{equation}
 \label{order}
T_{u} + t_{u,v}(f_{(u,v)}) \le T_{v}   \text{  , }\ \forall (u,v)\in E
\end{equation}
 
 \begin{equation}
 \label{conservation}
 \sum_{w}f_{(v,w)} + \sum_{u}f_{(u,v)} = 0 \text{  , }\ \forall v\notin \{s,t\}
 \end{equation}
 
\begin{equation}
\label{budget}
\sum_{k}f_{(s,k)} \le B
\end{equation}

\textbf{Objective function:}
\begin{equation}
\label{first-lp}
\min T_{t} 
\end{equation}

Inequality \ref{two-tuples-upper} upper bounds the resource flow variable $f_{(u,v)}$ for edges with two tuples. This ensures that these variables remain in the range $[0,r_{(u,v)}]$ and the duration function is linear in this range. Note that there is no such upper bound on the edges with single resource-time tuple (except the trivial total resource budget $B$ upper bound). This allows the flow of more resources over an edge that can be used later on a path.
Equation \ref{conservation} is a flow conservation constraint for all the vertices $v\notin \{s,t\}$. Inequality \ref{budget} constrains the flow of resources from source $s$ to be upper bounded by the resource budget.

\para{Solving the LP and rounding.}
We first solve the LP described above. This might give solution as fractional flow $f^{*}_{e}$  and duration $t_{e}(f^{*}_{e})$ at edge $e = (u,v)$. Let the resource-time tuples at edge $e$ be $\{\langle0, t_e(0)\rangle, \langle r_e, 0\rangle\}$. The range of feasible duration of activity $e$ is $[0, t_{e}(0)]$. We divide this range into two parts $[0, \alpha t_{e}(0)), [\alpha t_{e}(0), t_{e}(0)]$ where $0 < \alpha < 1$. If $t_{e}(f^{*}_{e}) \in [0, \alpha t_{e}(0) )$ we round it down to $0$, otherwise, we round it up to $t_{e}(0)$. Observe that in the first case, the resource requirement at $e$ can be increased by at most a factor of $1/(1-\alpha)$. In the second case, the completion time can be increased at most by a factor of $1/\alpha$. Let $f^{'}_e$ denote the rounded integer resource requirement at edge $e$.

\para{Computing min-flow.}
After rounding the LP solution, we get an integral resource requirement $f^{'}_e \in \{0, r_e\}$ for every edge $e$. We now compute a min-flow through this DAG where $f^{'}_e$ serves as the lower bound on the flow through (or resource requirement at) edge $e$. 

\hide{
Let $f$ be the min-flow solution. We prove that $f$ is a $1/(1-\alpha)$ approximation of the optimal flow $f^{*}$ (from the LP solution).
}
% hide ends

\textbf{Constraints:}
\begin{equation}
 \label{round-lp-c1}
f_{(u,v)} \ge f^{'}_{(u,v)}  \text{  , }\ \forall (u,v)\in E 
\end{equation}

\begin{equation}
 \label{round-lp-c2}
\sum_{w}f_{(v,w)} + \sum_{u}f_{(u,v)} = 0 \text{  , }\ \forall v\notin \{s,t\}
\end{equation}

\textbf{Objective function:}
\begin{equation}
\label{round-lp}
\min\sum_{k}f_{(s,k)}
\end{equation}

Let, $f$ and $f^*$ be the optimal solutions of LP \ref{round-lp-c1}--\ref{round-lp} and LP \ref{two-tuples-upper}--\ref{first-lp}, respectively.   

\begin{lemma}
	\label{feasible}
 $f^*/(1-\alpha)$ is a feasible solution of min-flow LP \ref{round-lp-c1}--\ref{round-lp}.
\end{lemma}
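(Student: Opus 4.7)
The plan is to verify that $f^*/(1-\alpha)$ satisfies the two families of constraints defining the min-flow LP~(\ref{round-lp-c1})--(\ref{round-lp}), namely the per-edge lower bound~(\ref{round-lp-c1}) and the flow-conservation equations~(\ref{round-lp-c2}). I would begin with the easier conservation constraints, which are inherited essentially for free: $f^*$ already satisfies the identical conservation equations~(\ref{conservation}) at every internal vertex of $D''$, and scaling a flow by the positive constant $1/(1-\alpha)$ preserves every such equality. So no work is needed there.

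The substantive step is the lower bound $f^*_e/(1-\alpha)\ge f'_e$ for every edge $e$, which I would handle by a case split on how $f'_e$ was produced by the rounding. If $e$ carries only a single resource-time tuple, or if $e$ is a two-tuple edge whose fractional duration $t_e(f^*_e)$ fell into $[\alpha\, t_e(0),\,t_e(0)]$ and was rounded up to $t_e(0)$, then $f'_e=0$ and the inequality is trivial. The only nontrivial case is a two-tuple edge $e=(u,v)$ with tuples $\{\langle 0,t_e(0)\rangle,\,\langle r_e,0\rangle\}$ for which $t_e(f^*_e)<\alpha\, t_e(0)$ and hence $f'_e=r_e$. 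Here I would invoke the linear interpolation of the duration on $e$---which equals $t_e(0)$ at $r=0$ and $0$ at $r=r_e$---to translate the strict duration inequality $t_e(f^*_e)<\alpha\, t_e(0)$ into the resource inequality $f^*_e>(1-\alpha)\, r_e$. Dividing by $(1-\alpha)$ then yields $f^*_e/(1-\alpha)>r_e=f'_e$, as required.

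I do not expect any real obstacle; the argument amounts to inverting the linear resource-time relation to see that ``small duration after relaxation'' forces ``large fractional resource after relaxation,'' so scaling by $1/(1-\alpha)$ is enough to reach the rounded lower bound. The single point that deserves a sentence in the write-up is that the min-flow LP deliberately omits the upper bound~(\ref{two-tuples-upper}) on $f_e$, so pushing $f^*_e/(1-\alpha)$ above $r_e$ on rounded-down edges is legal; this is precisely why a uniform scaling of the optimum of the first LP is a feasible starting point for the min-flow rounding step.
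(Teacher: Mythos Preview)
Your proposal is correct and follows essentially the same approach as the paper, which simply invokes the earlier observation that rounding increases the resource requirement on an edge by at most a factor $1/(1-\alpha)$ to conclude $f'_e\le f^*_e/(1-\alpha)$. You have filled in the details the paper leaves implicit---the case split on how $f'_e$ arises, the inversion of the linear duration to obtain $f^*_e>(1-\alpha)r_e$, the preservation of conservation under scaling, and the absence of upper bounds in the min-flow LP---but the underlying argument is the same.
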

\begin{proof}
	Let $f^{*}_{e}$ be the optimal solution of LP \ref{two-tuples-upper}--\ref{first-lp}. We know that $f^{'}_e \le f^*_e/(1-\alpha)$. Hence, $f^*/(1-\alpha)$ is a feasible solution of that LP as it meets the resource requirement $f^{'}_e$ at every edge $e$.
\end{proof}

\begin{lemma}
	\label{feasible-alpha}
  $f$ is an integral flow and $f \le f^*/(1-\alpha)$, where $0 < \alpha < 1$.	
\end{lemma}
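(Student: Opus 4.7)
My plan is to handle the two claims in the lemma separately, both of which follow quickly from standard facts once the previous lemma (\ref{feasible}) is in place.

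For the bound $f \le f^*/(1-\alpha)$ (interpreted as a comparison of the objective values, i.e., total resource entering the source), I would simply invoke Lemma \ref{feasible}: the scaled vector $f^*/(1-\alpha)$ satisfies every edge lower bound $f'_e$ (by the rounding rule, $f'_e$ is either $0$ or $r_e \le f^*_e/(1-\alpha)$) and also the flow-conservation constraints (since $f^*$ itself satisfies them and scaling preserves them). Hence $f^*/(1-\alpha)$ is feasible for the min-flow LP \ref{round-lp-c1}--\ref{round-lp}. Since $f$ is optimal for that LP, its total outflow from $s$ is no larger than that of $f^*/(1-\alpha)$, giving $\sum_k f_{(s,k)} \le \sum_k f^*_{(s,k)}/(1-\alpha)$, which is the claimed inequality.

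For integrality, I would argue via total unimodularity. The constraint matrix of LP \ref{round-lp-c1}--\ref{round-lp} consists of the node–arc incidence matrix of the DAG $D''$ (for flow conservation at internal vertices) together with the identity-like lower-bound constraints $f_{(u,v)} \ge f'_{(u,v)}$. The incidence matrix of any directed graph is totally unimodular, and augmenting it with identity rows (the lower bounds) preserves total unimodularity. Because all right-hand sides — namely $0$ for conservation and $f'_e \in \{0, r_e\} \subseteq \Z$ for the bounds — are integral, every vertex of the feasible polytope is integral. The simplex method (or any LP solver returning a basic optimum) therefore produces an integral optimal $f$.

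The main (and really only) subtlety is making sure the feasibility argument for $f^*/(1-\alpha)$ is edgewise correct: for edges with two tuples the rounded lower bound is either $0$ or $r_e$, and in the latter case the LP rounding was triggered precisely because $t_e(f^*_e) < \alpha\, t_e(0)$, which by the linear relaxation $t_e(r) = t_e(0)\cdot r/r_e$ forces $f^*_e > (1-\alpha) r_e$, i.e., $f^*_e/(1-\alpha) \ge r_e = f'_e$; for edges with a single tuple, $f'_e = 0$ and the inequality is trivial. Once this per-edge check is written out, both parts of the lemma follow in one or two lines each.
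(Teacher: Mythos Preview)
Your proposal is correct and follows essentially the same approach as the paper: invoke Lemma~\ref{feasible} to certify that $f^*/(1-\alpha)$ is feasible for the min-flow LP, then use optimality of $f$ for the bound, and use integrality of min-flow optima for the first claim. The paper simply asserts ``the minflow problem has integral optimality'' where you supply the total-unimodularity justification, and your final paragraph re-derives the edgewise content of Lemma~\ref{feasible}, which is already given and need not be reproved.
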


\begin{proof}
	The minflow problem has integral optimality. If $f$ is the optimal solution then it is an integral flow. From lemma \ref{feasible} we know that $f^*/(1-\alpha)$ is a feasible solution of LP \ref{round-lp-c1}--\ref{round-lp}. Since $f$ is optimal and $f^*/(1-\alpha)$ is a feasible flow, we have, $f \le f^*/(1-\alpha)$. 
\end{proof}

\para{Bi-criteria approximation.} We now summarize our bi-criteria approximation result for general non-increasing duration functions:

\begin{theorem}\label{thm:bi-criteria-arb}
	For any $\alpha \in (0, 1)$, there is a $(1/\alpha, 1/(1-\alpha))$ bi-criteria approximation algorithm for the discrete resource-time tradeoff problem with an general non-increasing duration function which allows resource reuse over paths. 
\end{theorem}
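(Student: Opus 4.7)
The plan is to chain together the machinery developed in this section: solve the relaxation LP \eqref{two-tuples-upper}--\eqref{first-lp} on the transformed DAG $D''$, apply the threshold rounding, and then use the min-flow LP \eqref{round-lp-c1}--\eqref{round-lp} to commit to an integral resource allocation. Let $(f^\ast, T^\ast)$ be the LP optimum; because every integer feasible schedule on $D$ with budget $B$ maps canonically to a feasible fractional solution on $D''$, we have $T^\ast_t \le \mathrm{OPT}(B)$, where $\mathrm{OPT}(B)$ denotes the optimal integer makespan under the given budget. Now round: on each two-tuple edge $e$, set $f'_e = r_e$ if $t_e(f^\ast_e) < \alpha\, t_e(0)$, and $f'_e = 0$ otherwise, so the rounded edge has duration $0$ or $t_e(0)$ respectively. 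Finally, compute an integral min-flow $f \ge f'$ from the LP \eqref{round-lp-c1}--\eqref{round-lp}.

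For the resource bound I would invoke Lemma \ref{feasible-alpha} directly: whenever we round to $f'_e = r_e$, the linear relaxation forces $f^\ast_e > (1-\alpha)\, r_e$, so the vector $f^\ast/(1-\alpha)$ dominates $f'$ edgewise and is feasible for the min-flow LP. Hence the total resource used satisfies
\[
\sum_{k} f_{(s,k)} \;\le\; \frac{1}{1-\alpha}\sum_{k}f^\ast_{(s,k)} \;\le\; \frac{B}{1-\alpha}.
\]
For the makespan bound I would use the LP order constraints \eqref{order}, which telescope along any source-to-sink path $P$ to give $\sum_{e\in P} t_e(f^\ast_e) \le T^\ast_t$. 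After rounding, each edge satisfies $t_e(f'_e) \le t_e(f^\ast_e)/\alpha$: the edges rounded to $0$ trivially, and the edges rounded to $t_e(0)$ by the threshold condition $t_e(f^\ast_e) \ge \alpha\, t_e(0)$. Because $t_e(\cdot)$ is non-increasing, delivering the larger min-flow value $f_e \ge f'_e$ can only decrease these durations further, so the makespan of the induced integer schedule is at most $T^\ast_t / \alpha \le \mathrm{OPT}(B)/\alpha$.

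The main thing I expect to have to argue carefully is that the integral min-flow $f$ together with the rounded durations actually defines a valid schedule realizing the claimed makespan; in particular that surplus resource routed along a path cannot violate the rounded duration values, which is exactly where monotonicity of $t_e$ is needed. Integral optimality of the min-flow LP (total unimodularity of the network flow constraint matrix with integer lower bounds $f'_e$) guarantees $f \in \mathbb{Z}_{\ge 0}^{|E|}$, so the allocation is implementable. Combining the two bounds yields the $(1/\alpha,\;1/(1-\alpha))$ bi-criteria ratio on $D''$, and lifting it back to the original DAG $D$ is immediate via the canonical bijection between resource assignments on $D$ and on $D''$ established earlier in this section.
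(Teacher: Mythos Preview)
Your proposal is correct and follows essentially the same route as the paper's own proof: solve the LP relaxation on $D''$, threshold-round each two-tuple edge at $\alpha\, t_e(0)$, invoke Lemma~\ref{feasible-alpha} for the resource bound, and bound the makespan edge-by-edge along any $s$--$t$ path. You in fact fill in several details the paper leaves implicit, namely that $T^\ast_t \le \mathrm{OPT}(B)$ because integer solutions are LP-feasible, that the surplus flow $f_e \ge f'_e$ cannot increase durations by monotonicity of $t_e$, and that integrality of the min-flow follows from total unimodularity; the paper's proof is more terse on these points but structurally identical.
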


\begin{proof}
	First, we know from lemma \ref{feasible-alpha} that $f$ is an integral flow and $f \le f^*/(1-\alpha)$, where $0 < \alpha < 1$.	
	
	Second, we claim that the makespan of the DAG used in the minflow LP \ref{round-lp-c1}--\ref{round-lp} is at most a factor of $1/\alpha$ away from that of the LP \ref{two-tuples-upper}--\ref{first-lp} solution. Let us consider any $s-t$ path $\m{P}$. The makespan is at least the sum of completion times of the edges in $\m{P}$. Now, after rounding the LP \ref{two-tuples-upper}--\ref{first-lp} solution, the completion time of an edge may increase at most by a factor of $\alpha$. Hence, the sum of duration of edges along any path is increased at most by a factor of $\alpha$, thus the makespan will be increased by at most a factor of $\alpha$. 
\end{proof} 

\subsection{Single-criteria Approximation for $k$-Way and Recursive Binary Splitting}

%\subsection{Approximation Algorithm for $k$-Way Splitting}

First, observe the prior section gives us a bi-criterian approximation for both $k$-way and recursive binary splitting. Setting $\alpha = 1/2$ in Theorem~\ref{thm:bi-criteria-arb}, we obtain a $(2,2)$ bi-criteria approximation.%
\hide{
, in general. If  job $j$ has $l_j$ resource-time tuples, then DAG $D''$ has $l_j$ parallel jobs $j_i$ $(1 \le i \le l_j)$. After LP-rounding, each job $j_i$'s resource usage or duration can be at most a factor of $2$ away from that of LP solution. Recall that, job $j$'s resource usage is the sum of the resource assigned to all $j_i$ $(1 \le i \le l_j)$. Thus after rounding, the resource usage of job $j$ is at most a factor of $2$ away from that of LP solution. Similarly, the duration of job $j$ is the maximum duration among all $j_i$ $(1 \le i \le l_j)$. Thus after rounding, the duration of job $j$ is at most a factor of $2$ away from that of LP solution.
}
% hide ends
%
Now, after LP rounding, say a job $j$ uses $\overline{r_j}$ units of resource and takes $\overline{t_j}$ units of time.%
\hide{
Suppose the LP solution uses $r_{j}^{*}$ units of resource and takes $t_{j}^{*}$ units of time for job $j$. Please note that $\langle \overline{r_j}, \overline{t_j}\rangle$ is a valid resource-time tuple. From the LP solution,
}
% hide ends
%
Then the optimal solution uses $r_{j}^{*} \ge \overline{r_j}/2$ units of resource and takes $t_{j}^{*} \ge \overline{t_j}/2$ units of time for job $j$.
Recall that job $j$ consists of $l_j$ parallel jobs $j_i$ where $1\le i \le l_j$. Hence,  $\overline{r_j}$ is the sum of the resource (after rounding) used by  $l_j$ parallel jobs and $\overline{t_j}$ is the maximum time (after rounding) taken by $l_j$ parallel jobs.

\para{Approximation algorithm for $k$-way splitting.}
To obtain a single-criteria approximation, in the case of $k$-way splitting, we use at most $r_{j}^{*}$ units of resource for job $j$. If $\overline{r_j} > r_{j}^{*}$, we reduce $\overline{r_j}$ to $k$ (a nonnegative integer) units of resource such that $k \le r_{j}^{*}$. Using $k$ units of resource, job $j$ takes $t_j(k)$ units of time to complete.

\begin{lemma}
	\label{lemma25}
	$ \lceil{d/k}\rceil + k \le 2.5\overline{t_j}$ for $\overline{r_j} > 3$ where $d = t_j(0)$ and $k = \lfloor \overline{r_j}/2\rfloor$.
\end{lemma}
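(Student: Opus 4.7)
The plan is to expand $\overline{t_j}$ using the $k$-way duration function and bound $\lceil d/k\rceil + k$ against it term by term. Write $r := \overline{r_j}$; since $r$ is an integer with $r > 3$, we have $r \ge 4$, so $k = \lfloor r/2 \rfloor \ge 2$ lies inside the nontrivial range of the duration function. First I would dispose of the degenerate case $k > \lfloor\sqrt{d}\rfloor$, in which $t_j(k)$ has already saturated to the plateau value $t_j(\lfloor\sqrt{d}\rfloor) \le \overline{t_j}$, making the claim immediate. Otherwise $k \le \lfloor\sqrt{d}\rfloor$; if moreover $r \le \lfloor\sqrt{d}\rfloor$ then $\overline{t_j} = \lceil d/r \rceil + r$ and in particular $d \ge r^2$, while if $r$ is past the plateau then $\overline{t_j}$ is even larger and the bound is strictly easier. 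So it suffices to treat the case $2 \le k \le r \le \lfloor\sqrt{d}\rfloor$.

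In this main case I would apply the routine estimates $\lceil d/k \rceil \le d/k + 1$ and $\overline{t_j} \ge d/r + r$, and then split on the parity of $r$. For even $r \ge 4$, $k = r/2$, so
\begin{equation*}
\lceil d/k \rceil + k \;\le\; \tfrac{2d}{r} + 1 + \tfrac{r}{2} \;\le\; 2\!\left(\tfrac{d}{r} + r\right) \;\le\; 2\,\overline{t_j},
\end{equation*}
using $1 + r/2 \le 2r$ for $r \ge 1$; the ratio $2 < 5/2$ suffices with room to spare.

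For odd $r \ge 5$, $k = (r-1)/2$, and the target inequality $2d/(r-1) + 1 + (r-1)/2 \le \tfrac{5}{2}\bigl(d/r + r\bigr)$ rearranges, after cross-multiplication, to
\begin{equation*}
d \cdot \frac{\tfrac{5}{2} - \tfrac{r}{2}}{r(r-1)} \;\le\; 2r - \tfrac{1}{2}.
\end{equation*}
The coefficient of $d$ on the left vanishes exactly at $r = 5$ and is strictly negative for every odd $r \ge 7$, so in both cases the left side is nonpositive while the right side is strictly positive, and the inequality holds. Combining the two parity cases yields the lemma.

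The main obstacle I anticipate is the tight odd case $r = 5$, which is precisely where the constant $5/2$ gets pinned down; once one notices that the $d$-coefficient vanishes there and becomes negative beyond, the remainder is arithmetic bookkeeping. The hypothesis $\overline{r_j} > 3$ is exactly what rules out $r \in \{0,1,2,3\}$: for $r \le 1$ the duration function is flat (no speedup from splitting), while $r \in \{2,3\}$ would force $k \in \{1\}$, also outside the nontrivial range, so the clean parity-based comparison would break down.
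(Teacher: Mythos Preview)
Your core parity-split argument is correct and establishes the lemma in the only case that matters, $2 \le k \le r \le \lfloor\sqrt{d}\rfloor$. However, your handling of the side cases is off: when $r$ exceeds the plateau threshold $\lfloor\sqrt{d}\rfloor$, $\overline{t_j}$ equals the \emph{minimum} value $t_j(\lfloor\sqrt{d}\rfloor)$, not something ``even larger,'' so the bound is harder, not easier; and in the case $k>\lfloor\sqrt{d}\rfloor$ the quantity $\lceil d/k\rceil + k$ in the lemma statement is not $t_j(k)$, so the saturation argument does not dispose of it. Fortunately these cases are vacuous in context: $\langle \overline{r_j},\overline{t_j}\rangle$ is a valid resource--time tuple for the $k$-way splitter, hence $\overline{r_j}\le\lfloor\sqrt{d}\rfloor$ automatically, and then $k=\lfloor\overline{r_j}/2\rfloor\le\lfloor\sqrt{d}\rfloor$ as well. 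You should say this rather than argue the cases incorrectly.

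The paper's proof reaches the same conclusion by a shorter route that avoids the parity split. The single observation is that $k=\lfloor r/2\rfloor \ge r/2.5$ for every integer $r\ge 4$ (equality exactly at $r=5$, which is why $2.5$ is the right constant). From this, $\lceil d/k\rceil \le d/k+1 \le 2.5\,d/r+1 \le 2.5\lceil d/r\rceil+1$, and combining with $k\le r$ and $r+2\le 2.5r$ gives $\lceil d/k\rceil+k \le 2.5(\lceil d/r\rceil+r)=2.5\,\overline{t_j}$ directly. Your approach trades this one-line inequality for a case analysis and an algebraic rearrangement; both identify $r=5$ as the tight point, but the paper's version makes the role of the constant $2.5$ more transparent.
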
  
\begin{proof}
Since $k = \lfloor \overline{r_j}/2\rfloor \ge  \overline{r_j}/2.5$ for $\overline{r_j} > 3$, we have $\lceil{d/k}\rceil \le d/k + 1 \le 2.5d/\overline{r_j}+1 \le 2.5\lceil{d/\overline{r_j}}\rceil+1$.
Also since $k= \lfloor \overline{r_j}/2\rfloor \le \overline{r_j}+1$ and $2.5\overline{r_j} \ge \overline{r_j}+2$ for $\overline{r_j} > 3$, we have
$\lceil{d/k}\rceil+k \le 2.5\lceil{d/\overline{r_j}}\rceil+1+\overline{r_j}+1 \le 2.5\left(\lceil{d/\overline{r_j}}\rceil+\overline{r_j}\right)$.
Hence, $t_j(k) \le 2.5\overline{t_j}$.
\end{proof}

\hide{
	\begin{equation}
	k = \lfloor \overline{r_j}/2\rfloor \ge  \overline{r_j}/2.5 \text{   if  }\ \overline{r_j} > 3
	\end{equation}
	
	\begin{equation}
	\lceil{d/k}\rceil \le d/k + 1 \le 2.5d/\overline{r_j}+1 \le 2.5\lceil{d/\overline{r_j}}\rceil+1
	\end{equation}
	
	\begin{equation}
	k= \lfloor \overline{r_j}/2\rfloor \le \overline{r_j}+1
	\end{equation}
	Since, $2.5\overline{r_j} \ge \overline{r_j}+2$ for $\overline{r_j} > 3$
	\begin{equation}
	\lceil{d/k}\rceil+k \le 2.5\lceil{d/\overline{r_j}}\rceil+1+\overline{r_j}+1 \le 2.5\big(\lceil{d/\overline{r_j}}\rceil+\overline{r_j}\big)   
	\end{equation}
}
% hide ends

\begin{lemma}
	\label{approx5}
	If $\overline{r_j} > 3$ then $t_j(k) \le 5t_{j}^{*}$.
\end{lemma}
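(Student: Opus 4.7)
The plan is to chain Lemma~\ref{lemma25} with the LP-rounding guarantee already established earlier in this subsection. Lemma~\ref{lemma25} tells us that under the $k$-way splitting duration function, the time $t_j(k)$ we actually pay after reducing the assigned resource to $k = \lfloor \overline{r_j}/2 \rfloor$ is at most $2.5\,\overline{t_j}$, where $\overline{t_j}$ is the time this job took after LP rounding in the DAG $D''$. So I would first invoke Lemma~\ref{lemma25} directly to get the bound $t_j(k) \le 2.5\,\overline{t_j}$.

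Next I would use the bi-criteria rounding analysis of Section~\ref{bi-criteria-general} (specialized to $\alpha = 1/2$, which Theorem~\ref{thm:bi-criteria-arb} gives a $(2,2)$ guarantee for) to relate $\overline{t_j}$ to the optimal makespan contribution $t_j^*$. Concretely, after LP rounding with $\alpha = 1/2$, the duration of each of the $l_j$ parallel edges representing job $j$ in $D''$ grows by at most a factor of $1/\alpha = 2$ over its LP value. Since $\overline{t_j}$ is the maximum of the rounded durations of these parallel edges and $t_j^*$ is the maximum of the corresponding optimal durations, we obtain $\overline{t_j} \le 2\,t_j^*$.

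Combining the two inequalities,
\[
t_j(k) \;\le\; 2.5\,\overline{t_j} \;\le\; 2.5 \cdot 2\, t_j^{*} \;=\; 5\, t_j^{*},
\]
which is exactly the claim. The hypothesis $\overline{r_j} > 3$ is needed solely to apply Lemma~\ref{lemma25}; the factor-of-two LP-rounding bound holds regardless.

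The only subtle point, and the one I would be most careful about, is making sure the factor-2 duration bound from the bi-criteria analysis transfers correctly from individual parallel edges in $D''$ to the aggregated job $j$ in the original DAG. Because durations aggregate over the $l_j$ parallel edges by taking a maximum (while resources aggregate by summing), the maximum of values each bounded by $2t_{j,i}^*$ is itself bounded by $2\,\max_i t_{j,i}^* = 2 t_j^*$, so this step is clean. No other technical obstacles are expected.
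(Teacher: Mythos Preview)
Your proposal is correct and follows essentially the same route as the paper: invoke Lemma~\ref{lemma25} to get $t_j(k)\le 2.5\,\overline{t_j}$, then use the $(2,2)$ bi-criteria rounding bound $\overline{t_j}\le 2t_j^{*}$ (from setting $\alpha=1/2$ in Theorem~\ref{thm:bi-criteria-arb}) to conclude $t_j(k)\le 5t_j^{*}$. Your extra remark on why the factor-$2$ duration bound transfers from the parallel edges in $D''$ to the aggregated job via the maximum is a useful clarification that the paper leaves implicit.
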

\begin{proof}
	We know $t_j(k) = \lceil{d/k}\rceil + k$ as $k \ge 4$.
	Also in lemma \ref{lemma25}, we prove $t_j(k) \le 2.5\overline{t_j}$. However, we show that $\overline{t_j} \le 2t_{j}^{*}$. Hence, combining these two results we get $t_j(k) \le 5t_{j}^{*}$. 
\end{proof}

\begin{lemma}
	\label{opt2units}
	If $t_{j}^{*} = d/4$ then $r_{j}^{*} \ge 2$. 
\end{lemma}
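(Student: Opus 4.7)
The plan is to work in the transformed DAG $D''$ from Section~\ref{bi-criteria-general}. For $k$-way splitting, a job $j$ with $d = t_j(0)$ has resource-time tuples beginning $\langle 0, d\rangle,\langle 2, \lceil d/2\rceil + 2\rangle,\langle 3, \lceil d/3\rceil + 3\rangle,\dots$ (resource $1$ never improves on $0$, so it is skipped). The two-tuple reduction turns these into $l_j$ parallel chains $u\to u_i\to v$, and after the linear relaxation the first two LP-reducible chains have duration functions $t_{j_1}(x_1) = d(1 - x_1/2)$ on $x_1 \in [0,2]$ and $t_{j_2}(x_2) = (\lceil d/2\rceil + 2)(1 - x_2)$ on $x_2 \in [0,1]$, where $x_i$ is the LP flow routed through chain $j_i$.

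Since the chains are in parallel between $u$ and $v$ and the dummy edges $(u_i,v)$ have zero duration, the LP time charged to job $j$ is $t_j^* = \max_i t_{j_i}(x_i)$, and by flow conservation at $u$ the LP resource charged to $j$ is $r_j^* = \sum_i x_i$. Substituting the hypothesis $t_j^* = d/4$ into the inequality $t_{j_1}(x_1) \le d/4$ gives $x_1 \ge 3/2$, and substituting it into $t_{j_2}(x_2) \le d/4$ gives $x_2 \ge 1 - \frac{d/4}{\lceil d/2\rceil + 2}$. Using the trivial bound $\lceil d/2\rceil + 2 \ge d/2$, this second inequality collapses to $x_2 \ge 1/2$. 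Summing the two bounds yields $r_j^* \ge x_1 + x_2 \ge 2$, which is the desired conclusion.

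The main point requiring care is verifying that chain $j_2$ actually exists, i.e.\ that $l_j \ge 3$. Because the bottom chain $j_{l_j}$ always contributes its unreducible time $t_j(r_{j,l_j}) = \lceil d/\lfloor\sqrt d\rfloor\rceil + \lfloor\sqrt d\rfloor = \Theta(\sqrt d)$ to the max, the hypothesis $t_j^* = d/4$ automatically forces $d$ large enough (roughly $d \ge 64$) that the $k$-way tuple list has at least three entries; for any smaller $d$ the hypothesis is vacuous and there is nothing to prove. This boundary check is the only real obstacle, and it is discharged by a direct calculation on the $k$-way duration function.
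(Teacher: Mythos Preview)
Your argument is correct and follows essentially the same route as the paper: both work in $D''$, identify the first two parallel chains $j_1$ (tuples $\{\langle 0,d\rangle,\langle 2,0\rangle\}$) and $j_2$ (tuples $\{\langle 0,\lceil d/2\rceil+2\rangle,\langle 1,0\rangle\}$), and bound the LP flow on each by $x_1\ge 3/2$ and $x_2\ge 1/2$ to get $r_j^*\ge 2$. Your write-up is more explicit (giving the linear-relaxation formulas and the estimate $\lceil d/2\rceil+2\ge d/2$), and your boundary check that the hypothesis $t_j^*=d/4$ is vacuous for small $d$ (because the unreducible bottom chain already contributes $\Theta(\sqrt d)$) is a point the paper leaves implicit.
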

\begin{proof}
	Recall that in $D^{''}$, job $j$ is represented as $l_j$ parallel jobs $j_i$ where $1\le i \le l_j$. The resource-time tuples of jobs $j_1$ and $j_2$ are $\{\langle0, d \rangle, \langle 2, 0\rangle\}$ and $\{\langle0, \lceil{d/2}\rceil+2 \rangle, \langle 1, 0\rangle\}$, respectively. To attain $d/4$ duration, $j_1$ requires at least $3/2$ units of resource and job $j_2$ requires $1/2$ unit of resource (applying linear relaxation). Hence, $r_{j}^{*}\ge (3/2+1/2)=2$ units of resource to achieve $t_{j}^{*} = d/4$.  
\end{proof}

\begin{lemma}
	\label{approx4}
	If $\overline{r_j} \le 3$ then $t_j(k) \le 4t_{j}^{*}$.
\end{lemma}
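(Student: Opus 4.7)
The plan is to proceed by case analysis on $\overline{r_j} \in \{0, 1, 2, 3\}$, leveraging two key ingredients: the LP-rounding guarantee $t_j^* \ge \overline{t_j}/2$, and the structure of the parallel jobs $j_1, \ldots, j_{l_j}$ that replace each job in $D''$ under the $k$-way splitting duration function. Recall that $j_1$ requires $2$ units of resource to collapse its duration from $d = t_j(0)$ down to $0$, while each subsequent $j_i$ (for $2 \le i < l_j$) needs only $1$ unit to collapse its duration $t_j(r_{j,i})$ to $0$. After LP-rounding with $\alpha = 1/2$, each $\overline{r_{j_i}}$ is either $0$ or its full requirement, and whenever $\overline{r_{j_i}} = 0$ we have $\overline{t_j} \ge t_{j_i}(0) = t_j(r_{j,i})$.

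If $\overline{r_j} \le 1$, then $j_1$ cannot have received its required $2$ units, so $\overline{r_{j_1}} = 0$ and $\overline{t_j} \ge d$. The algorithm's choice $k \le 1$ gives $t_j(k) = d \le \overline{t_j} \le 2 t_j^*$, well within the desired $4$-approximation. The same argument disposes of every sub-case in which $\overline{r_{j_1}} = 0$, including those with $\overline{r_j} \in \{2, 3\}$ where all the rounded resource is spread over $j_2, \ldots, j_{l_j}$. In the complementary sub-cases $\overline{r_{j_1}} = 2$, I split on the size of $r_j^*$: whenever $r_j^*$ is large enough that the algorithm can choose $k = 2$ or $k = 3$, the identity $t_j(k) = t_j(r_{j,i+1}) = t_{j_{i+1}}(0) \le \overline{t_j}$, combined with an elementary comparison between $\lceil d/2 \rceil + 2$ and $2(\lceil d/3 \rceil + 3)$, yields $t_j(k) \le 2\overline{t_j} \le 4 t_j^*$.

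The main obstacle, and the reason the bound must be $4$ rather than $2$, is the sub-case $\overline{r_j} = 3$ with $\overline{r_{j_1}} = 2$, $\overline{r_{j_2}} = 1$, and $r_j^* \in [3/2, 2)$: the algorithm is then forced to pick $k \le 1$, so $t_j(k) = d$. Here the rounding bound $t_j^* \ge \overline{t_j}/2 \ge (\lceil d/3 \rceil + 3)/2$ only delivers $4 t_j^* \ge 2d/3 + 6$, which falls short of $d$ once $d > 18$. To close the gap I would bypass the rounding inequality and argue directly from the LP: the condition $\overline{r_{j_2}} = 1$ forces $f_{j_2}^* > 1/2$, and since the total flow through the parallel edges is at most $r_j^* < 2$, this gives $f_{j_1}^* < 3/2$, whence $t_{j_1}^{\mathrm{LP}} = d(1 - f_{j_1}^*/2) > d/4$. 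Since $t_j^* \ge t_{j_1}^{\mathrm{LP}}$, we conclude $4 t_j^* > d = t_j(k)$. The remaining sub-case in which the single $1$-unit of rounded resource flows to some $j_i$ with $i \ge 3$ is handled by the same flow-accounting with $j_i$ in place of $j_2$, and the $\overline{r_j} = 2$ variant with $\overline{r_{j_1}} = 2$ and $r_j^* \in [1, 2)$ is dispatched symmetrically using $t_{j_2}^{\mathrm{LP}} \ge (\lceil d/2 \rceil + 2)/2 > d/4$.
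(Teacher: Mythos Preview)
Your argument is correct and reaches the same bound, but the paper organizes the proof more economically. Rather than splitting on the value of $\overline{r_j}$ and on which parallel edges $j_i$ receive rounded resource, the paper splits only on whether $r_j^* < 2$ or $r_j^* \ge 2$, since that is precisely what governs which integer $k \le r_j^*$ the algorithm may select. When $r_j^* \ge 2$ it sets $k = 2$ and uses the chain
\[
t_j(2) \;\le\; 2\,t_j(3) \;\le\; 2\,t_j(\overline{r_j}) \;\le\; 2\,\overline{t_j} \;\le\; 4\,t_j^*,
\]
where $t_j(\overline{r_j}) \le \overline{t_j}$ holds because any allocation of $\overline{r_j}$ units among the parallel edges achieves duration at least $t_j(\overline{r_j})$. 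When $r_j^* < 2$ it sets $k = 0$ and invokes the separately stated Lemma~\ref{opt2units}: to push the LP duration down to $d/4$, edge $j_1$ alone needs $3/2$ units and edge $j_2$ needs at least $1/2$ more, so $r_j^* < 2$ forces $t_j^* > d/4$, giving $d < 4 t_j^*$ directly.

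Your flow-accounting in the ``hard'' sub-cases (deducing $f_{j_2}^* > 1/2$ from $\overline{r_{j_2}} = 1$, then $f_{j_1}^* < 3/2$, then $t_{j_1}^{\mathrm{LP}} > d/4$; and the dual argument from $\overline{r_{j_2}} = 0$ giving $t_{j_2}^{\mathrm{LP}} \ge t_{j_2}(0)/2$) is exactly Lemma~\ref{opt2units} re-derived inline, once per sub-case. The paper's decomposition buys uniformity: it never needs to track which $j_i$ received the rounded units and handles all $\overline{r_j} \le 3$ at once. Your decomposition buys transparency: you make explicit where the rounding inequality $t_j^* \ge \overline{t_j}/2$ alone is too weak (your remark that $4 t_j^* \ge 2d/3 + 6$ falls short once $d > 18$) and which LP constraint rescues it.
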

\begin{proof}
If $\overline{r_j} \le 3$ and $r_{j}^{*} < 2$, then we round down $\overline{r_j}$ to $k = 0$.
So, from Lemma \ref{opt2units} it follows that after rounding down to $0$ unit of resource, job $j$ takes $d \le 4t^{*}$ units of time.
 
If $\overline{r_j} \le 3$ and $r_{j}^{*} \ge 2$, then we round $\overline{r_j}$ to $k = 2$. It is true that $t_j(2) \le 2t_j(3)$ because $(\lceil{d/2}\rceil+2) \le 2(\lceil{d/3}\rceil+3)$. Also, $t_j(3)\le t_j(\overline{r_j})\le 2t_{j}^{*}$. Combining this two results we get $t_j(2) \le 4t^{*}$.  
\end{proof}

\hide{
We use $k \le r_{j}^{*}$ units of resource for job $j$ and $t_j(k) \le 5t_{j}^{*}$.
We compute a min-flow in $D^{'}$ where $k$ is the resource requirement for job $j$. Note that we are now working on DAG $D^{'}$ that does not have $l_j$ parallel chains for job $j$. Let $f$ be the min flow from source of $D^{'}$ such that all the resource requirements are met. The flow $f^*$ from LP solution is also a valid flow for the resource requirement $k$ for job $j$ as $k \le r_{j}^{*}$. We know that min-flow gives an optimal integral solution. Thus $f \le f^{*}$.
}
% hide ends

So, now we have the following result.

\begin{theorem}
	There is a $5$-approximation algorithm for the minimum-makespan problem with $k$-way splitting duration function.
\end{theorem}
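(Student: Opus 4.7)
The plan is to piggy-back on the bi-criteria result of Theorem~\ref{thm:bi-criteria-arb} (with $\alpha=1/2$) and then ``shrink'' the rounded resource assignment on each job so that it never exceeds the optimal allocation, while proving that the corresponding duration grows by at most a further constant factor. Concretely, I would first solve the LP \ref{two-tuples-upper}--\ref{first-lp} on $D''$, round with $\alpha=1/2$, and compute the min-flow of LP \ref{round-lp-c1}--\ref{round-lp}. This yields, for every original job $j$, a resource usage $\overline{r_j}\le 2r_j^{*}$ and a duration $\overline{t_j}\le 2t_j^{*}$, where $r_j^*, t_j^*$ denote the (unknown) optimum on job~$j$.

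Next I would post-process the solution job by job. Since $\overline{r_j}$ may exceed the optimal budget $r_j^{*}$, I replace it by a smaller integer $k\le r_j^{*}$ as follows: if $\overline{r_j}>3$, take $k=\lfloor\overline{r_j}/2\rfloor$; if $\overline{r_j}\le 3$, take $k=2$ when $r_j^{*}\ge 2$ and $k=0$ otherwise. Lemmas~\ref{lemma25} and~\ref{approx5} already show that in the first regime $t_j(k)\le 2.5\,\overline{t_j}\le 5t_j^{*}$; Lemmas~\ref{opt2units} and~\ref{approx4} show that in the second regime $t_j(k)\le 4t_j^{*}$. Hence in either case $t_j(k)\le 5t_j^{*}$ and $k\le r_j^{*}$, which is the key per-job bound.

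With these reduced integer requirements in hand, I would reinterpret the assignment as a flow problem on the ``activity-on-arc'' DAG $D'$ (not $D''$): set the lower bound on each arc representing job $j$ to $k$ and compute a min-flow. Feasibility follows because the optimum flow $f^{*}$ of the tradeoff problem already ships at least $r_j^{*}\ge k$ units through every such arc, so the min-flow value is at most $|f^{*}|\le B$ and the resource budget is respected. Finally, to convert the per-job bound into a bound on the makespan, I would argue path-by-path: for any $s$-$t$ path $\mathcal{P}$ in $D'$, the total duration under my assignment is $\sum_{j\in\mathcal{P}} t_j(k)\le 5\sum_{j\in\mathcal{P}} t_j^{*}\le 5\cdot\mathrm{OPT}$, since the right-hand side is dominated by an optimal-path contribution to the optimal makespan. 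Taking the maximum over paths gives makespan at most $5\cdot\mathrm{OPT}$.

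The main obstacle I anticipate is the case analysis around small values of $\overline{r_j}$ (i.e., $\overline{r_j}\le 3$): the step-function nature of the $k$-way duration means simple halving does not always give a valid integer resource level bounded by $r_j^*$, which is why we need the separate Lemma~\ref{opt2units} guaranteeing $r_j^{*}\ge 2$ whenever the LP duration is as low as $d/4$. Once that lemma is in place the rest is a routine combination of the bi-criteria bound with a path-length argument.
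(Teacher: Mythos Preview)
Your plan is essentially the paper's own proof: invoke Lemmas~\ref{approx5} and~\ref{approx4} to get the per-job bound $t_j(k)\le 5t_j^{*}$ with $k\le r_j^{*}$, then run a min-flow on $D'$ with lower bounds $k$ and observe that the LP flow $f^{*}$ is feasible, giving $f\le f^{*}\le B$. One terminological wrinkle: you call $r_j^{*},t_j^{*}$ the ``(unknown) optimum on job~$j$'', but throughout Section~\ref{bi-criteria-general} these are the \emph{LP} values (known, since you just solved the LP), and your case split ``$k=2$ when $r_j^{*}\ge 2$, else $k=0$'' only makes algorithmic sense under that reading; the final comparison to the true $\mathrm{OPT}$ then goes through because the LP makespan lower-bounds $\mathrm{OPT}$.
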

\begin{proof}
	Combining Lemmas \ref{approx4} and \ref{approx5} we get $t_j(k) \le 5t_{j}^{*}$ for all valid $\overline{r_j}$. This proves that the makespan is at most 5 times the optimal solution. We now calculate the total amount of resource required to flow from the source of $D^{'}$. We compute a min-flow in $D^{'}$ where $k$ is the resource requirement for job $j$. Note that we are now working on $D^{'}$ that does not have $l_j$ parallel chains for job $j$. Let $f$ be the min flow from the source of $D^{'}$ such that all the resource requirements are met. The flow $f^*$ from the LP solution before rounding is also a valid flow for the resource requirement $k$ for job $j$ as $k \le r_{j}^{*}$. We know that min-flow gives an optimal integral solution. Hence, $f \le f^{*}$. 
\end{proof}

\para{Approximation algorithm for recursive binary splitting.}
We have the following result.

\begin{theorem}
There is a $4$-approximation algorithm for the minimum-makespan problem with recursive binary splitting function.
\end{theorem}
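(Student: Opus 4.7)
The plan is to mirror the $5$-approximation proof for $k$-way splitting immediately above, tightening the constants to exploit the geometric spacing of recursive-binary resource levels. I would start by invoking Theorem~\ref{thm:bi-criteria-arb} at $\alpha = 1/2$, yielding, for each job $j$, a rounded allocation $\overline{r_j}$ with duration $\overline{t_j}$ satisfying $\overline{r_j}\le 2 r_j^*$ and $\overline{t_j}\le 2 t_j^*$. The next step is to coerce $\overline{r_j}$ into a valid recursive-binary level: set $\tilde r_j$ to be the largest power of two at most $\lfloor\overline{r_j}/2\rfloor$, and collapse $\tilde r_j$ to $0$ whenever $\lfloor\overline{r_j}/2\rfloor<4$. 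By construction $\tilde r_j\le r_j^*$, so running a min-flow on $D'$ with requirement $\tilde r_j$ per edge---exactly as in the last paragraph of the $k$-way proof---returns an integral flow of value at most $B$.

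The heart of the analysis is the single-job bound $t_j(\tilde r_j)\le 2\overline{t_j}$, analogous to Lemma~\ref{lemma25}. When $\overline{r_j}\in[2^a,2^{a+1})$ with $a\ge 3$, one has $\tilde r_j = 2^{a-1}$, and the step function gives $t_j(\overline{r_j}) = \lceil d/2^a\rceil+a+1$ and $t_j(\tilde r_j) = \lceil d/2^{a-1}\rceil+a$. Using $\lceil d/2^{a-1}\rceil\le 2\lceil d/2^a\rceil$ together with $a\le 2a+2$, a short manipulation yields $t_j(\tilde r_j)\le 2 t_j(\overline{r_j})\le 2\overline{t_j}\le 4 t_j^*$. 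For the boundary regime $\overline{r_j}<8$, I would imitate Lemmas~\ref{opt2units} and~\ref{approx4} by a case split on $r_j^*$: if $r_j^*<4$, the recursive-binary duration function pins $t_j^* = d$, so $\tilde r_j = 0$ already gives $t_j(\tilde r_j)=d=t_j^*\le 4 t_j^*$; if $r_j^*\ge 4$, then $\tilde r_j = 4$ is feasible, and an algebraic inequality of the form $t_j(4)\le 2 t_j(\overline{r_j})$, combined with an LP-based lower bound on $\overline{t_j}$ adapted from Lemma~\ref{opt2units} to the recursive-binary two-tuple chains, delivers $t_j(4)\le 2\overline{t_j}\le 4 t_j^*$.

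The main obstacle I anticipate is exactly this boundary analysis. Unlike $k$-way splitting, where the valid resource levels are consecutive integers so that rounding $\overline{r_j}$ down loses at most a single unit, the recursive-binary levels $\{0,4,8,16,\dots\}$ grow multiplicatively, so a naive rounding can collapse $\overline{r_j}$ all the way to $0$. Re-deriving the LP-based lower bound on $\overline{t_j}$ from Lemma~\ref{opt2units} for the recursive-binary two-tuple chains, and checking that the additive $+i+1$ term in the duration formula does not consume the factor-of-two slack in $\overline{t_j}\le 2 t_j^*$, is the delicate step; the logarithmic growth of this additive term, as opposed to the linear $+k$ in the $k$-way formula, is what I expect to be responsible for the final approximation ratio dropping from $5$ to $4$.
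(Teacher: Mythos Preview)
Your plan is considerably more intricate than the paper's, and the extra complexity stems from missing one key observation: after the $(2,2)$ bi-criteria rounding of Section~\ref{bi-criteria-general}, the pair $\langle\overline{r_j},\overline{t_j}\rangle$ is already one of job $j$'s original resource-time tuples. The rounding on the parallel chains of $D''$ sets each chain's resource to $0$ or to its full value $r_{j,i+1}-r_{j,i}$, and after the canonical normalization (zeroing any chain whose zero-resource duration is already dominated) the nonzero chains form a prefix, so $\overline{r_j}=r_{j,i}$ for some $i$. For recursive binary splitting this means $\overline{r_j}$ is automatically $0$ or a power of~$2$; there is nothing to ``coerce.''

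With that in hand the paper's proof is one line: whenever $\overline{r_j}>r_j^*$, replace $\overline{r_j}$ by $\overline{r_j}/2$, which is again a valid level and satisfies $\overline{r_j}/2\le r_j^*$ because $r_j^*\ge\overline{r_j}/2$. The recursive-binary formula gives $t_j(\overline{r_j}/2)\le 2\,t_j(\overline{r_j})$ directly (from $\lceil d/2^{i-1}\rceil+i\le 2(\lceil d/2^i\rceil+i+1)$), hence $t_j(\overline{r_j}/2)\le 2\overline{t_j}\le 4t_j^*$. No separate boundary regime is needed, and no analogue of Lemmas~\ref{lemma25}--\ref{approx4} appears.

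Your boundary analysis also contains a concrete gap. The assertion ``if $r_j^*<4$, the recursive-binary duration function pins $t_j^*=d$'' conflates the LP's linearly-relaxed duration with the step-function value. Here $r_j^*$ and $t_j^*$ come from the LP optimum on the parallel chains of $D''$, and with (for instance) $r_j^*=2$ spread across the first two chains the LP already achieves $t_j^*$ well below $d$. So the first branch of your case split does not go through; repairing it would require an LP lower-bound argument in the spirit of Lemma~\ref{opt2units}, not an appeal to the integral duration function. The cleaner route is simply to use that $\overline{r_j}$ is already a power of~$2$, which collapses your construction to the paper's and eliminates the boundary cases entirely.
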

\begin{proof}
\hide{
Setting $\alpha = 1/2$, we get a $(2,2)$ bi-criteria approximation. Let after rounding the LP solution, job $j$ uses $\overline{r_j}$ units of resource and takes $\overline{t_j}$ units of time. Let LP solution uses $r_{j}^{*}$ units of resource and takes $t_{j}^{*}$ units of time for job $j$. Please note that $\langle \overline{r_j}, \overline{t_j}\rangle$ is a valid resource-time tuple for job $j$. This also implies that $r_j$ is either $0$ or a power of $2$. We know, $r_{j}^{*} \ge \overline{r_j}/2$ and $t_j(r_{j}^{*}) = t_{j}^{*} \ge \overline{t_j}/2$.\\
}
% hide ends
%
As in the case of k-way splitter, to get a single-criteria approximation, we use no more than $r_{j}^{*}$ units of resource for job $j$. If $\overline{r_j} > r_{j}^{*}$, we reduce $\overline{r_j}$ to $\overline{r_j}/2$. We know that $t_j(\overline{r_j}/2) \le  2t_j(\overline{r_j})$ from the properties of the recursive binary splitting function. Thus, $t_j(\overline{r_j}/2) \le  2t_j(\overline{r_j}) \le 4t_j(r_{j}^{*}) = 4t_{j}^{*}$.
\end{proof}

% REZAUL: reviwed once.

\subsection{Improved Bi-criteria Approximation for Recursive Binary Splitting Functions}
\label{bi-criteria-bin}

Putting $\alpha = 3/4$ in Theorem \ref{thm:bi-criteria-arb} we obtain a $(4/3, 4)$ bi-criteria approximation algorithm for general non-increasing duration functions. Hence, if we use $4/3$ times more resources than OPT (i.e., the optimal solution), we are guaranteed to get a makespan within factor of 4 of OPT. In this section we show that the bound can be improved to $(4/3, 14/5)$ for recursive binary splitting functions.

For a node with in-degree $x$, the resource-time tuples based on the recursive binary splitting function are as follows: $\{\langle 0,x\rangle, \langle 1,x\rangle, \langle 2, t_1\rangle,$ $... , \langle 2^i, t_i\rangle, \langle 2^{i+1}, t_{i+1}\rangle\,...,\langle 2^k, t_k\rangle\}$ where $t_j = \lceil{x/2^j}\rceil + j+1$ for $j \ge 2$ and $k= \lfloor \log_2 x - \log_2 \log_2 e\rfloor$ is the largest value of $j$ for which $t_j$ decreases with the increase of $j$. See Figure \ref{fig:binary_approx}.

\hide{
\begin{figure*}[h!]
	\centering
	\includegraphics[scale=0.30]{pics/binary_approx}
	\vspace{-0.3cm}
	\caption{Transforming {\bf{\boldmath{$(a)$}}} a DAG with $(k+1)$ resource-time tuples on each arc based on the recursive binary splitting function into {\bf{\boldmath{$(b)$}}} one with at most two resource-time tuples on each arc (Section \ref{bi-criteria-bin})}.
	\label{fig:binary_approx}
\end{figure*}
}
% hide ends

After solving LP \ref{two-tuples-upper}--\ref{first-lp} from Section \ref{bi-criteria-general}, we sum up the (possibly fractional) resources allocated to all the $l_j$ parallel edges corresponding to job $j$. Let $r$ be that sum. Let $t$ be the maximum among the time values given by the LP solution for the $l_j$ parallel edges. Thus, the LP takes $t$ units of time for job $j$.

We round $r$ to an integer $\overline{r}$ based on the following criteria.

\begin{equation*}
\overline{r} = 
\begin{cases}
0, & \text{if}\ r < 1 \\
2^i & \text{if}\ 2^i \le r < (2^i + 2^{i+1})/2,  0\le i \le k \\
2^{i+1}, & \text{if}\ (2^i + 2^{i+1})/2 \le r < 2^{i+1},  0\le i \le k 
\end{cases}
\end{equation*}

We want to find a constant $\rho$, such that if $t = t_i/\rho$, then the LP must use at least  $(2^i + 2^{i+1})/2 = 3(2^{i-1})$ units of resources.
We compute $r$ as follows. 
In Figure \ref{fig:binary_approx}$(b)$, each of the top two edges $(u, u_1)$ and $(u,u_2)$ requires $(1-(1/x)t)$ units of resource to finish in time $t$. Each edge $(u, u_{j+2})$ for $1 \le j \le i+1$ requires $\Big(2^j - (2^j/t_j)t\Big)$ units of resource to finish in time $t$. Summing over all these edges, we get the expression of $r$
\begin{equation*}
\label{equate}
\begin{aligned}
r &=  2\left(1 - \frac{1}{x}t\right) + \sum_{j=1}^{i+1}\left(2^j-\frac{2^j}{t_j}t\right) 
  = 8\cdot (2^{i-1}) - \frac{t_i}{\rho}\left(2/x + \sum_{j=1}^{i+1}\frac{2^j}{t_j}\right)
\end{aligned}
\end{equation*}

\hide{
\begin{equation}
\label{equate}
\begin{aligned}
r &= \big(1 - \frac{1}{x}t\big) + \big(1 - \frac{1}{x}t\big) + \sum_{j=1}^{i+1}\big(2^j-\frac{2^j}{t_j}t\big) \\
&= \big(2 - \frac{2}{x}t\big) + \sum_{j=1}^{i+1}\big(2^j-\frac{2^j}{t_j}t\big) \\
&= \big(2+\sum_{j=1}^{i+1}2^j\big) - t\big(2/x + \sum_{j=1}^{i+1}\frac{2^j}{t_j}\big) \\
&= 1+ \frac{2^{i+2}-1}{2-1} - t\big(2/x + \sum_{j=1}^{i+1}\frac{2^j}{t_j}\big) \\
&= 8\cdot (2^{i-1}) - \frac{t_i}{\rho}\big(2/x + \sum_{j=1}^{i+1}\frac{2^j}{t_j}\big). 
\end{aligned}
\end{equation}
}
% hide ends

Since we want to have $r \ge 3(2^{i-1})$, we want to find the smallest value of $\rho$ such that

\begin{equation*}
\label{rho}
\begin{aligned}
 & \frac{t_i}{\rho}\left(2/x + \sum_{j=1}^{i+1}\frac{2^j}{t_j}\right) \le 5\cdot (2^{i-1}) \Rightarrow & \rho \ge 1/5 \left(\frac{t_i}{2^{i-2}x}+\sum_{j=1}^{i+1}\frac{t_i}{2^{i-j-1}t_j}\right). 
\end{aligned}
\end{equation*}

Now,

\begin{equation*}
\begin{aligned}
~~~~~~ & \frac{t_i}{2^{i-2}x}+\sum_{j=1}^{i+1}\frac{t_i}{2^{i-j-1}t_j}
= \frac{\lceil{\frac{x}{2^i}}\rceil+ i+1}{x(2^{i-2})} + \sum_{j=1}^{i+1}\frac{\lceil{\frac{x}{2^i}}\rceil+ i+1}{(\lceil{\frac{x}{2^j}}\rceil+ j+1)2^{i-j-1}} \\
&< \frac{\frac{x}{2^i} + i + 2}{x(2^{i-2})} + \sum_{j=1}^{i+1}\frac{\frac{x}{2^i} + i + 2}{(\frac{x}{2^j}+ j+1)2^{i-j-1}} \\
&=\frac{1}{2^i}\frac{1}{2^{i-2}} + \frac{i+2}{x(2^{i-2})} + \sum_{j=1}^{i+1}\frac{\frac{1}{2^{i-j}}(\frac{x}{2^j} + j+1) + i+2 - \frac{j}{2^{i-j}}-\frac{1}{2^{i-j}}}{(\frac{x}{2^j} + j+1)2^{i-j-1}} \\
&\le \left(\frac{i+2}{x}\frac{1}{2^{i-2}}+ \sum_{j=1}^{i+1}\frac{i+2}{(\frac{x}{2^j}+j+1)2^{i-j-1}}\right) + \left(\frac{1}{2^i}\frac{1}{2^{i-2}}+\sum_{j=1}^{i+1}\frac{1}{2^{i-j}}\frac{1}{2^{i-j-1}}\right) \\
&=\left(\frac{i+2}{x}\frac{1}{2^{i-2}}\right)+ \left(\sum_{j=1}^{i+1}\frac{i+2}{(\frac{x}{2^j}+j+1)2^{i-j-1}}\right) + \left(\frac{32}{3} + \frac{1}{3}\frac{1}{4^{i-1}}\right)
\end{aligned}
\end{equation*}

Let, $A = \frac{i+2}{x}\frac{1}{2^{i-2}}$ , $B = \sum_{j=1}^{i+1}\frac{i+2}{(\frac{x}{2^j}+j+1)2^{i-j-1}}$ and 
$C = 32/3 + \frac{1}{3}\frac{1}{4^{i-1}}$. 

Note that $i+2 = (i+1) + 1\le (\log_2 x-\log_2\log_2 e) +1$, since $i+1 \le k$.
Hence,
\begin{equation*}
\begin{aligned}
A &\le \frac{(\log_2 x-\log_2\log_2 e) +1}{x}\frac{1}{2^{i-2}}
\le \frac{2}{e}\frac{1}{2^{i-2}}.
\end{aligned}
\end{equation*} 
\\
Now, $x/2^j + j+1 \ge (\log_2 x - \log_2\log_2 e + \frac{1}{\ln{2}})$ and hence, 
\hide{
This is true because $\frac{2(\log_2 x-\log_2\log_2 e) +1}{x}$ is maximized when $x = \frac{e\log_2 e}{2}$.\\
}

\begin{equation*}
\begin{aligned}
B &\le \sum_{j=1}^{i+1}\frac{(\log_2 x -\log_2\log_2 e)+1}{(\log_2 x - \log_2\log_2 e + \frac{1}{\ln{2}}+1) }\frac{1}{2^{i-j-1}}\\ 
  & < \sum_{j=1}^{i+1}\frac{1}{2^{i-j-1}} 
%&= \sum_{j=0}^{i-2}\frac{1}{2^j} \\
%&= \frac{1- \frac{1}{2^{i-1}}}{1- 1/2} \\
= 2- \frac{1}{2^{i-2}}.
\end{aligned}
\end{equation*}
\\Thus, $A+B+C < \frac{2}{e}\frac{1}{2^{i-2}} + 2- \frac{1}{2^{i-2}} + 32/3 + \frac{1}{3}\frac{1}{4^{i-1}} \le 14$.
\\
Therefore, $(t_i/x)\frac{1}{2^{i-2}} + \sum_{j=1}^{i+1}\frac{t_i}{t_j}\frac{1}{2^{i-j-1}} < 14$.
\\
So, by setting $\rho = 14/5$, we get $\rho > 1/5\left((t_i/x)\frac{1}{2^{i-2}} + \sum_{j=1}^{i+1}\frac{t_i}{t_j}\frac{1}{2^{i-j-1}}\right)$.
\\
Summarizing, we get the following lemmas from the computation above. 
\begin{lemma}
	\label{binary_time}
	To achieve a duration of $t = t_i/(14/5)$ for any job $j$, the LP solution uses at least $3(2^{i-1})$ units of resources for $0\le i \le k$.
\end{lemma}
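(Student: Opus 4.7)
The plan is to run the two-tuple LP of Section \ref{bi-criteria-general} on the transformed DAG $D''$ of Figure \ref{fig:binary_approx}(b), where the job $j$ with in-degree $x$ has been split into $i+3$ parallel chains, each carrying at most two resource-time tuples extracted from consecutive points of the recursive binary splitting step function. Under the linear relaxation, each chain's minimum resource requirement as a function of a common deadline $t$ is immediate from the slope of its duration function: $1 - t/x$ for the two ``flat'' chains coming from $\langle 0,x\rangle$ and $\langle 1,x\rangle$, and $2^j - (2^j/t_j)\,t$ for the chain associated with the transition from $\langle 2^{j-1}, t_{j-1}\rangle$ to $\langle 2^j, t_j\rangle$ for $1 \le j \le i+1$. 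Any chain whose unassisted duration $t_j$ is already below $t$ can be given $0$ resource and ignored; since $t_i/(14/5) < t_{i+1}$ but $> t_{i+2}$, exactly the $i+3$ chains listed above are active.

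I would then set $t = t_i/\rho$ and add the resulting contributions. Using $2 + \sum_{j=1}^{i+1} 2^j = 2^{i+2} = 8\cdot 2^{i-1}$, the total resource that the LP is forced to push through job $j$ takes the form
\begin{equation*}
r \;=\; 8\cdot 2^{i-1} \;-\; \frac{t_i}{\rho}\left(\frac{2}{x} + \sum_{j=1}^{i+1}\frac{2^j}{t_j}\right).
\end{equation*}
Requiring $r \ge 3\cdot 2^{i-1}$, i.e.\ that the LP is forced past the midpoint threshold at which the rounding rule of Section \ref{bi-criteria-bin} would push $\bar{r}$ up from $2^i$ to $2^{i+1}$, rearranges into the explicit lower bound $\rho \ge (1/5)\!\left(t_i/(2^{i-2}x) + \sum_{j=1}^{i+1} t_i/(2^{i-j-1} t_j)\right)$.

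The main obstacle is then to show that the right-hand side is at most $14/5$, uniformly in $x$ and $i$. The key tool will be the bound $i+1 \le k = \lfloor \log_2 x - \log_2\log_2 e\rfloor$, which lets me replace $i+2$ by $\log_2 x - \log_2\log_2 e + 1$, and the elementary bound $t_j = \lceil x/2^j\rceil + j + 1 \ge x/2^j + \log_2 x - \log_2\log_2 e + 1/\ln 2$ on the denominators. I would split the target sum into three pieces: an ``$(i+2)/x$'' piece $A$, a ``ratio'' piece $B$ whose $j$-th term pairs $(i+2)$ against $t_j$, and the purely geometric piece $C = 32/3 + (1/3)\cdot 4^{1-i}$. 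The logarithmic bound kills the $x$-dependence in $A$, leaving $A \le (2/e)\cdot 2^{2-i}$, while each term of $B$ collapses by cancellation of the dominant logarithmic factor to at most $1/2^{i-j-1}$, giving $B < 2 - 2^{2-i}$. Summing, $A + B + C < 14$, so $\rho = 14/5$ is admissible and the lemma follows.
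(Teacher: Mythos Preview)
Your proposal is correct and follows essentially the same route as the paper: you set up the same expression $r = 8\cdot 2^{i-1} - (t_i/\rho)\bigl(2/x + \sum_{j=1}^{i+1} 2^j/t_j\bigr)$, impose $r \ge 3\cdot 2^{i-1}$, and bound the resulting expression via the identical three-part split $A+B+C$ with the same logarithmic substitutions coming from $i+1 \le k$.

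One minor remark: your claim that ``exactly the $i+3$ chains listed above are active'' (justified via $t_i/(14/5) > t_{i+2}$) is not needed and not quite airtight for all parameter values; what matters is only that summing over these $i+3$ chains already gives a \emph{lower} bound on the LP resource, which suffices for $r \ge 3\cdot 2^{i-1}$. The paper simply writes the sum without making this ``exactly'' claim.
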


Lemma \ref{binary_time} implies the following.
\begin{lemma}
	\label{binary_time_ratio}
If the LP uses $2^i \le r < 3(2^{i-1})$ units of resources and we round $r$ down to $\overline{r} = 2^i$ where $0\le i \le k$, then $t_i \le (14/5)t$ where $t$ is the duration from the LP solution.
\end{lemma}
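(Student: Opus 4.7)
The plan is to derive Lemma \ref{binary_time_ratio} as essentially the contrapositive of Lemma \ref{binary_time}, invoking the monotonicity of the LP's effective resource-to-duration mapping for a single job $j$. Since Lemma \ref{binary_time} already pins down the resource requirement at the specific duration threshold $t_i/(14/5)$, the present lemma only needs to transfer that statement to the band of resource values $2^i \le r < 3(2^{i-1})$ via a monotonicity argument.

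First I would make explicit that, in the LP relaxation over the transformed DAG with $l_j$ parallel chains corresponding to job $j$ (Figure \ref{fig:binary_approx}(b)), the minimum attainable maximum-duration over those chains is a non-increasing function of the total resource $r$ allocated to job $j$. This is because each chain's duration function is linear and non-increasing in the resource it receives, and the budget constraint on job $j$ is simply that the per-chain allocations sum to $r$: any feasible allocation summing to $r$ remains feasible for $r' \ge r$ (by distributing the surplus to any chain whose upper bound has not been reached), with no increase in the resulting max-duration. Hence the LP's effective resource-to-duration mapping for job $j$, call it $T_j(r)$, is non-increasing in $r$.

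Next I would invoke Lemma \ref{binary_time} in contrapositive form. That lemma states that achieving $t = t_i/(14/5)$ on job $j$ requires at least $3(2^{i-1})$ units of resource; by the monotonicity of $T_j(\cdot)$, this is equivalent to saying $T_j(r) > t_i/(14/5)$ whenever $r < 3(2^{i-1})$. Therefore, if the LP allocates $r$ to job $j$ with $2^i \le r < 3(2^{i-1})$ and we round down to $\overline{r} = 2^i$, the LP-duration $t = T_j(r)$ satisfies $t > t_i/(14/5)$, which rearranges to $t_i < (14/5)\,t$, and in particular $t_i \le (14/5)\,t$ as claimed.

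The main (and only) obstacle is stating the monotonicity of $T_j$ precisely for the parallel-chain construction, but this is essentially routine from the linear form of each chain's duration function together with flow conservation at the intermediate vertices; no further estimation beyond the computation already performed for Lemma \ref{binary_time} is required.
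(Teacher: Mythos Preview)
Your proposal is correct and matches the paper's approach: the paper provides no separate proof of this lemma, merely stating that ``Lemma~\ref{binary_time} implies the following,'' and your contrapositive-plus-monotonicity argument is precisely the implicit reasoning behind that one-line claim. You have simply made explicit what the paper leaves to the reader (the monotonicity of $T_j(\cdot)$ is already baked into the linear formula $r = 8\cdot 2^{i-1} - t\cdot(\text{positive quantity})$ derived just before Lemma~\ref{binary_time}).
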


\begin{lemma}
	\label{binary_unit_resource}
	With $r < 1$ units of resource, the LP cannot achieve a duration of $t < x/2$ for job $j$. 
\end{lemma}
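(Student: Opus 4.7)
The plan is to prove the contrapositive: if the LP achieves duration $t < x/2$ for job $j$, then job $j$ consumes strictly more than one unit of resource. First I would revisit the gadget in Figure \ref{fig:binary_approx}$(b)$, which, following the transformation of Section \ref{bi-criteria-general}, replaces the list of $k+2$ resource-time tuples of job $j$ with $k+2$ parallel chains, one chain per consecutive tuple pair. The crucial observation is that the first two resource-time tuples of the recursive binary reducer are $\langle 0,x\rangle$ and $\langle 1,x\rangle$, both yielding duration $x$, so the top two chains each carry an activity-bearing edge whose own resource-time tuples are $\{\langle 0,x\rangle,\langle 1,0\rangle\}$.

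Next I would invoke the linear relaxation of Section \ref{bi-criteria-general}: on an edge with tuples $\{\langle 0,T\rangle,\langle R,0\rangle\}$, the relaxed duration function is $T\,(1-r/R)$ for $r\in[0,R]$, so achieving any duration $t \le T$ demands $r \ge R(1-t/T)$ units of resource. Specialising to the two distinguished edges above (with $T = x$, $R = 1$), each of them must receive at least $1 - t/x$ units of resource. Because the duration of job $j$ equals the maximum of the durations over its $k+2$ parallel sub-jobs, in order to finish $j$ in time $t$ both of these top edges must individually complete in time at most $t$.

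Finally, for any $t < x/2$ the lower bound $1 - t/x$ is strictly greater than $1/2$ on each of the two distinguished edges, and since resources are allocated to the parallel chains independently, summing gives a total consumption on job $j$ of strictly more than $1$ unit; this already exceeds the available budget of $r < 1$, so the LP cannot attain $t < x/2$. I expect no real obstacle beyond bookkeeping: the argument reduces to reading off the linear relaxation at the two ``trivial'' top edges of the gadget and adding the resulting lower bounds, while any resource spent on the $k$ remaining chains can only drive the total consumption higher.
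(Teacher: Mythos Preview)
Your proposal is correct and follows essentially the same approach as the paper: both arguments isolate the top two parallel edges of the gadget in Figure~\ref{fig:binary_approx}$(b)$, observe that each carries the tuple pair $\{\langle 0,x\rangle,\langle 1,0\rangle\}$, and compute via the linear relaxation that attaining duration $t$ on either edge costs at least $1-t/x$ units of resource, hence at least one unit total when $t\le x/2$. The paper states the computation at the boundary value $t=x/2$ whereas you work with a general $t<x/2$ and strict inequalities, but this is only a cosmetic difference.
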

\begin{proof}
	The first edge has resource-time tuples $\{\langle0,x\rangle, \langle1,0\rangle\}$. To achieve a duration of $x/2$, the LP has to use $1/2$ unit of resource on the first edge. The second edge also has the same resource-time tuples $\{\langle0,x\rangle, \langle1,0\rangle\}$, and it also takes $1/2$ unit of resource. Thus, the first two edges alone need $1$ unit of resource to achieve a duration of $x/2$ for all $l_j$ parallel edges of job $j$.   
\end{proof}

Lemma \ref{binary_unit_resource} implies the following.
\begin{lemma}
	\label{binary_round_zero}
	If the LP uses $r<1$ unit of resource and we round $r$ down to  $0$, then $t_i \le 2t$, where $t$ is the duration from the LP solution.
\end{lemma}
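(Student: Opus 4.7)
The plan is to recognize this lemma as an almost immediate corollary of Lemma \ref{binary_unit_resource}, which I would assume as stated. The core observation is that the notation $t_i$ in the statement is being used to denote the new (rounded) duration of job $j$ after rounding the resource allocation down; since we are rounding $r$ down to $\overline{r}=0$, the job ends up with zero units of resource and hence runs for its unaided duration $t_j(0)=x$, where $x$ is the in-degree of the underlying node.

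First, I would identify the quantity $t_i$ in this context explicitly. Under the recursive binary splitting duration function, $t_j(0)=t_j(1)=x$, so any job whose LP resource allocation $r$ lies in $[0,1)$ and is then rounded down to $\overline{r}=0$ has post-rounding duration exactly $x$. Thus the inequality we must establish is simply $x\le 2t$, where $t$ is the maximum LP-assigned duration over the $l_j$ parallel edges representing job $j$ in $D''$.

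Next, I would invoke Lemma \ref{binary_unit_resource} directly. That lemma already tells us that when the LP spends fewer than one unit of resource on job $j$, the LP's own duration for job $j$ satisfies $t\ge x/2$; the reason recorded there is that the first two parallel edges each have the tuple $\{\langle 0,x\rangle,\langle 1,0\rangle\}$, so even under linear relaxation each requires $1/2$ unit of resource to reach duration $x/2$, which together already exhaust the one-unit budget. Rearranging gives $x\le 2t$, which is exactly $t_i\le 2t$ for the rounded duration $t_i=x$.

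Finally, I would combine the two steps in a single line: $t_i=t_j(\overline{r})=t_j(0)=x\le 2t$, the last inequality by Lemma \ref{binary_unit_resource}. I do not anticipate any real obstacle here; the only thing to be careful about is making sure the notation $t_i$ is correctly interpreted as the post-rounding duration (matching the usage in Lemma \ref{binary_time_ratio}), rather than one of the tuple values $t_1,t_2,\dots$ from the duration function; once that is clarified, the lemma follows by one substitution and one application of the previous lemma.
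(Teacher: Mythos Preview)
Your proposal is correct and matches the paper's approach exactly: the paper simply states that Lemma~\ref{binary_unit_resource} ``implies the following'' and gives no further argument, so your unpacking---identifying $t_i$ as the post-rounding duration $t_j(0)=x$ and invoking $t\ge x/2$ from Lemma~\ref{binary_unit_resource}---is precisely what is intended.
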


\begin{lemma}
	\label{binary_resource_ratio}
	If $r$ rounded to $\overline{r}$ then $\overline{r} \le (4/3)r$
\end{lemma}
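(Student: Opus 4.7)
The plan is a direct case analysis on which branch of the rounding rule applies. Recall the rule defines $\overline{r}=0$ when $r<1$, $\overline{r}=2^i$ when $2^i\le r<\tfrac{2^i+2^{i+1}}{2}=3\cdot 2^{i-1}$, and $\overline{r}=2^{i+1}$ when $3\cdot 2^{i-1}\le r<2^{i+1}$. Since $r\ge 0$ throughout, I would simply verify $\overline{r}/r\le 4/3$ in each branch.

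First I would dispatch the easy branches. If $r<1$ then $\overline{r}=0$ and the inequality $0\le(4/3)r$ is immediate. If $2^i\le r<3\cdot 2^{i-1}$, rounding is downward, so $\overline{r}=2^i\le r\le(4/3)r$ trivially, with no arithmetic needed.

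The only branch that actually uses the constant $4/3$ is the upward-rounding case $3\cdot 2^{i-1}\le r<2^{i+1}$, where $\overline{r}=2^{i+1}$. Here the ratio $\overline{r}/r$ is maximized at the left endpoint $r=3\cdot 2^{i-1}$, giving
\[
\frac{\overline{r}}{r}\;\le\;\frac{2^{i+1}}{3\cdot 2^{i-1}}\;=\;\frac{4}{3},
\]
so $\overline{r}\le(4/3)r$ as claimed.

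There is essentially no obstacle, since the breakpoint $3\cdot 2^{i-1}$ in the rounding rule was chosen exactly so that the worst-case overshoot in the upward branch matches the target ratio $4/3$. The only thing worth double-checking is that the three branches together cover all $r\ge 0$ (they do, since $[2^i,2^{i+1})$ for $i\ge 0$ partitions $[1,\infty)$ and $[0,1)$ is the first branch), so the case analysis is exhaustive.
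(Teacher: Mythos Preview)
Your proof is correct and follows essentially the same argument as the paper: the paper's proof only spells out the upward-rounding case $\overline{r}=2^{i+1}$ with $r\ge 3\cdot 2^{i-1}$, giving the same $4/3$ ratio, and leaves the trivial downward-rounding and $\overline{r}=0$ cases implicit. Your version is simply more explicit about the exhaustive case split.
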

\begin{proof}
When we use $\overline{r} = 2^{i+1}$ units of resource after rounding, the LP uses at least $ 3(2^{i-1}) \le r \le 2^{i+1}$ units. Thus, $\overline{r} \le (4/3)r$.
\end{proof}

From Lemma \ref{binary_time_ratio} and Lemma \ref{binary_resource_ratio}, we get the following theorem.
\begin{theorem}
	\label{binary_thm}
	There is a $(4/3, 14/5)$ bi-criteria approximation algorithm for the discrete resource-time tradeoff problem with resource reuse along paths when the recursive binary duration function is used.
\end{theorem}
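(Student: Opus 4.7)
The plan is to combine the three rounding lemmas established immediately above with the LP-plus-min-flow framework developed in Section~\ref{bi-criteria-general}, specialized to the recursive binary splitting duration function.

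First I would solve the linear-programming relaxation (LP~\ref{two-tuples-upper}--\ref{first-lp}) on the transformed DAG $D''$, where each job $j$ with in-degree $x$ is represented by $l_j$ parallel two-edge chains corresponding to the resource-time tuples $\{\langle 0,x\rangle,\langle 1,x\rangle,\langle 2,t_1\rangle,\ldots,\langle 2^k,t_k\rangle\}$. For every original job $j$ let $r$ be the total (possibly fractional) resource the LP sends through the $l_j$ parallel chains and let $t$ be the maximum duration among them, so that $t$ is the time the LP charges to $j$.

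Next I would apply exactly the rounding rule defined just before Lemma~\ref{binary_time_ratio}: round $r$ down to the power of two $2^i$ when $2^i \le r < 3(2^{i-1})$, round $r$ up to $2^{i+1}$ when $3(2^{i-1}) \le r < 2^{i+1}$, and round to $0$ when $r<1$. This yields an integer $\overline{r}$ in $\{0\}\cup\{2^i\}$, which is precisely the domain on which the recursive binary splitting duration function is defined. Then, bounding the makespan reduces to a job-by-job argument: Lemma~\ref{binary_time_ratio} handles the round-down case (giving duration blow-up at most $14/5$), and Lemma~\ref{binary_round_zero} handles the $r<1$ case (blow-up at most $2 \le 14/5$); in the round-up case the duration only decreases. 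Because the LP constraint~\eqref{order} forces durations to be additive along every source-to-sink path, a per-job factor of $14/5$ lifts to a factor of $14/5$ on the makespan of the rounded solution, and the LP makespan is itself a lower bound on the optimal integral makespan.

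For the resource side I would then run a min-flow on $D'$ (the activity-on-arc DAG before splitting jobs into parallel chains) using the integer demands $\overline{r}$ as lower bounds on the arcs, exactly as done for the general bi-criteria result. Lemma~\ref{binary_resource_ratio} gives $\overline{r}\le (4/3)r$ per job, so the vector $(4/3)\cdot f^{*}$ — where $f^{*}$ is the LP flow — is feasible for this min-flow instance; by integrality of min-flow, the optimal integral flow $f$ satisfies $f\le (4/3)f^{*}\le (4/3)\,\textrm{OPT}$. Combining the two bounds gives the claimed $(4/3,14/5)$ bi-criteria approximation.

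The main obstacle is really bundled into Lemma~\ref{binary_time_ratio}: one has to show that the choice of the breakpoint $3(2^{i-1})$ together with the constant $\rho = 14/5$ is tight enough that whenever $r < 3(2^{i-1})$ the LP duration $t$ cannot beat $t_i/(14/5)$. That was the calculation that produced the bound $A+B+C<14$ in the previous subsection, and every other step in the proof of the theorem is routine bookkeeping on top of it.
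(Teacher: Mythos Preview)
Your proposal is correct and follows essentially the same approach as the paper: the paper's proof is the single line ``From Lemma~\ref{binary_time_ratio} and Lemma~\ref{binary_resource_ratio}, we get the following theorem,'' and you have unpacked exactly those two per-job bounds into the LP-plus-min-flow framework of Section~\ref{bi-criteria-general}, together with Lemma~\ref{binary_round_zero} for the $r<1$ case. Your explicit mention of running min-flow on $D'$ (rather than $D''$) after aggregating the rounded demands $\overline{r}$ back to a single per-job lower bound is the right bookkeeping step and mirrors what the paper does in the single-criteria proofs of Section~3.2.
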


\subsection{Exact Algorithm for Series-Parallel Graphs}

We consider now the special case in which the underlying DAG $D$ is a series-parallel graph.
A series-parallel graph $G$ can be transformed into (and represented as) a rooted binary tree $T_G$ in polynomial time
by decomposing it into its atomic parts according to its series and parallel compositions (see, e.g., \cite{mohring1989computationally}).
% [reference...xxx].
% Such a decomposition is  based on the series and parallel compositions.
%
In $T_G$, the leaves correspond to the vertices of $G$. Internal nodes of $T_G$ are labeled as ``$s$" or ``$p$" based on series or parallel composition. We associate each internal node $v$ of $T_G$  with the series-parallel graph $G_v$, induced by the leaves of the subtree rooted at $v$. 

Let $T(v, \lambda)$ denote the makespan of $G_v$ using $0 \le \lambda \le B $ units of resources where $B$ is the resource budget. We want to solve for $T(s, B)$, where $s$ is the root of $T_G$. This can be done using dynamic programming, solving for the leaves first, and then progressing upward to the root of $T_G$. We compute $T(v, \lambda)$ as follows which assumes that node $v$ corresponds to job $j$ if it is a leaf, otherwise it has two children $v_1$ and $v_2$.

\hide{
$T(v, \lambda) = t_j(\lambda)$. If $v$ is an internal node with left child $v_1$ and right child $v_2$, then we compute $T(v, \lambda)$ as follows. If $v$ is an internal node with label ``$s$", then $T(v, \lambda) = T(v_1, \lambda) + T(v_2, \lambda)$. If $v$ is an internal node with label ``$p$", then $T(v, \lambda) = min_{0\le i\le\lambda}[max(T(v_1, i) , T(v_2, \lambda - i))]$. 
}
% hide ends

\begin{equation*}
T(v, \lambda) = 
\begin{cases}
\begin{tabular}{p{1.6in}l} $t_j(\lambda)$  & if $v$ is a leaf\end{tabular} \\
\begin{tabular}{p{1.6in}l} $T(v_1, \lambda) + T(v_2, \lambda)$ & \begin{tabular}{@{}l@{}} if $v$ is an internal\\node with label ``$s$''\end{tabular}\end{tabular} \\
\begin{tabular}{p{1.6in}l} $min_{0\le i\le\lambda}\left\{max\left\{\begin{tabular}{@{}c@{}} $T(v_1, i)$,\\ $T(v_2, \lambda - i)$\end{tabular}\right\}\right\}$ &\begin{tabular}{@{}l@{}}if $v$ is an internal\\node with label ``$p$''\end{tabular}\end{tabular} \\
\end{cases}
\end{equation*}

% \textbf{Time complexity: } 
There are $\Oh{m}$ nodes in $T_G$ if $G$ has $m$ edges. For each node $v$ we compute $T(v,\lambda)$ for $0 \le \lambda \le B $. Computing $T(v,\lambda)$ for any particular value of $\lambda$ takes $\Oh{\lambda}$ time, since, if the node is a ``$p$" node, then for $0\le i\le\lambda$ we need to look up values $T(v_1, i)$. Thus, for any internal node $v$, it takes $\sum_{\lambda = 0}^{B}{\Oh{\lambda}} = \Oh{B^2}$ time. As there are $\Oh{m}$ nodes in $T_G$, the (pseudo-polynomial) time complexity of the algorithm is $\Oh{mB^2}$. 

\section{NP-Hardness}
\label{sec:NPHardness}
In this section we give a variety of NP-hardness and inapproximability results related to the discrete time-resource tradeoff problem in the offline setting (i.e., when the entire DAG is available offline). All problems consider the version where there is resource reuse over paths, but they vary the cost-function, graph structure, and minimization goal. Section~\ref{general-hard} gives several reductions from 1-in-3SAT. Theorem~\ref{thm:strong-hardness} gives a base reduction for the problem with general non-increasing duration function which will provide the ideas and structure for later more complex proofs. Theorems~\ref{thm:hardness-approx-makespan} and \ref{thm:hardness-approx-resources} adapt this proof to give constant factor inapproximability for the minimum-resource and minimum-makespan problems. Section~\ref{sec:splitting-hardness} adapts the NP-hardness proof to apply when the cost function is restricted to be the recursive binary splitting and the $k$-way splitting.

Section~\ref{sec:BoundedTreewidthHardness} considers the problem in bounded treewidth graphs. We show weak NP-hardness by a reduction from Partition.

\begin{figure*}[h!]
\begin{minipage}{\textwidth}
\vspace{1cm}
\begin{tabular}{cc}
	\includegraphics[scale=0.25]{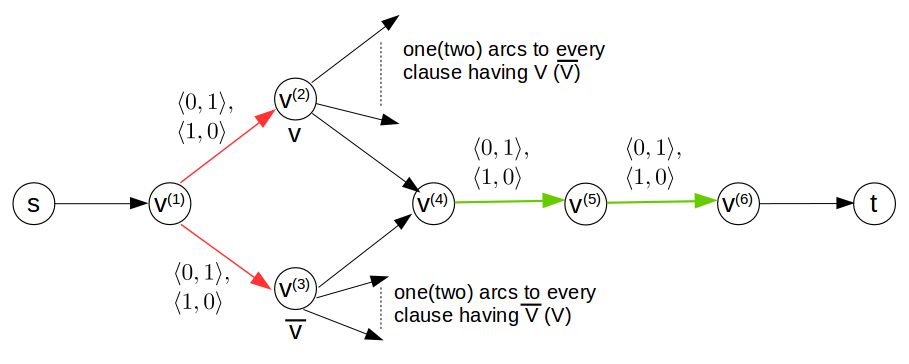} & \includegraphics[scale=0.23]{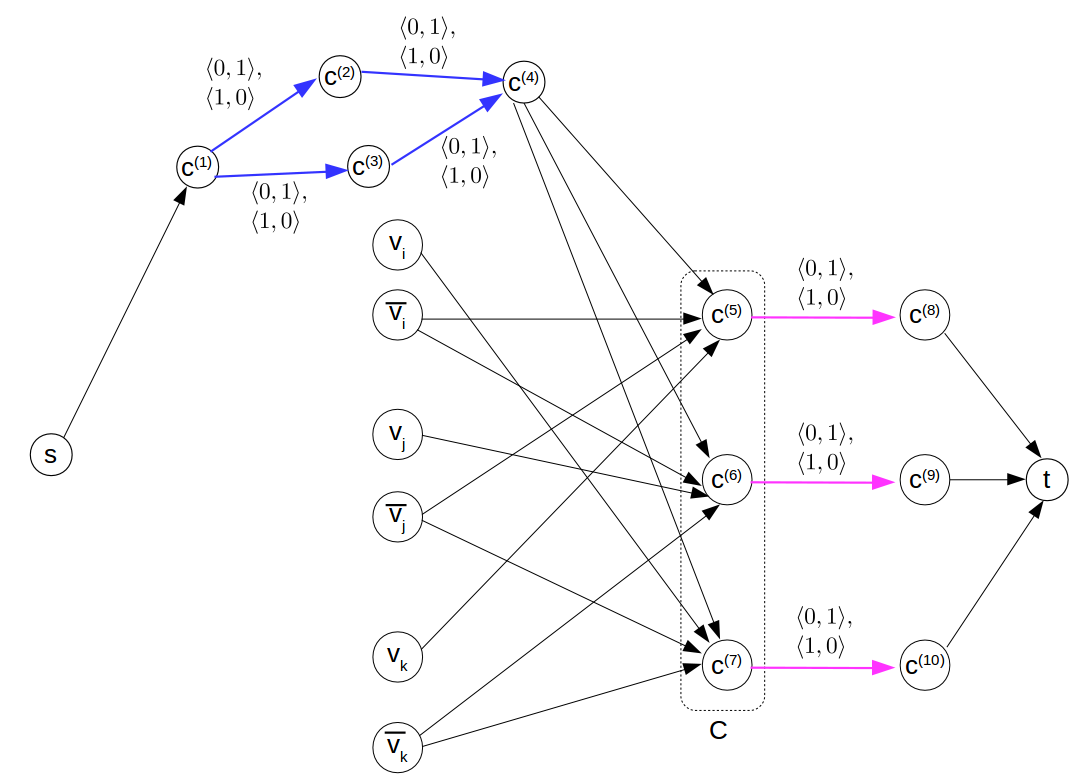}\\
	{\bf{\boldmath{$(a)$}}} & {\bf{\boldmath{$(b)$}}}\\
\end{tabular}
	\figcaption{{\bf{\boldmath{$(a)$}}} Gadget for variable V, and {\bf{\boldmath{$(b)$}}} gadget for clause $C = (V_i\lor V_j\lor V_k)$ (Section \ref{general-hard}).}
	\label{fig:var-clause}
	\vspace{0.5cm}
%\end{figure*}

\hide{
\begin{figure*}[h!]
	\centering
	\includegraphics[scale=0.35]{pics/var.png}
	\caption{Gadget for variable V}
	\label{fig:var}
\end{figure*}

\begin{figure*}[h!]
	\centering
	\includegraphics[scale=0.32]{pics/clause.png}
	\caption{Clause gadget for $C = (V_i\lor V_j\lor V_k)$.}
	\label{fig:clause}
\end{figure*}
}
% hide ends

%\begin{figure*}[h!]
	\centering
	\includegraphics[scale=0.32]{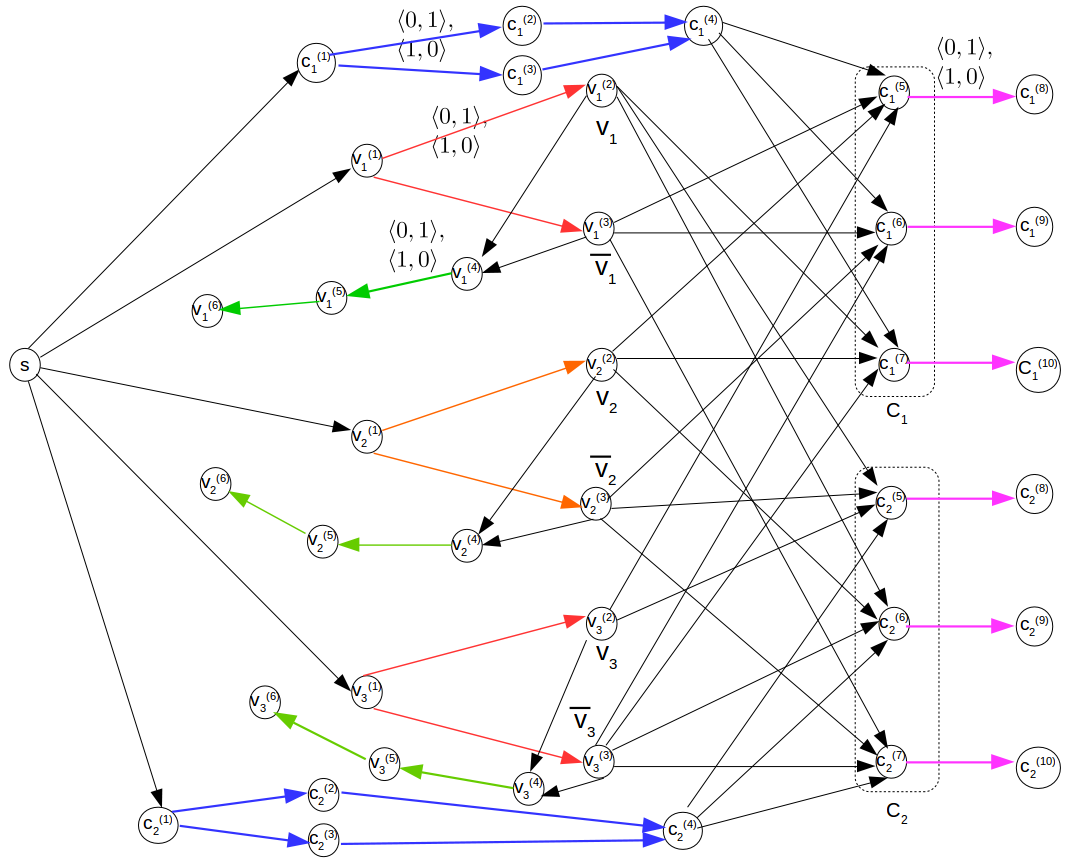}
	\figcaption{The complete construction for $(V_1\lor \neg V_2\lor V_3)\land (\neg V_1\lor V_2\lor V_3)$ is satisfiable with the truth assignment: $V_1 = $ TRUE, $V_2 = $ TRUE, $V_3 = $ FALSE (Section \ref{general-hard}).}
	\label{fig:example}
\vspace{1cm}
\end{minipage}
\end{figure*}

\vspace{-0.2cm}
\subsection{Reuse Over a Path with General Non-increasing Duration Function}
\label{general-hard}

\vspace{-0.1cm}
\begin{theorem}
  \label{thm:strong-hardness}
  It is (strongly) NP-hard to decide if there exists a solution to the (offline) discrete resource-time tradeoff problem, with resource reuse over paths and a non-increasing duration function, satisfying a resource bound $B$ and a makespan bound $T$.
% The Minimum-Makespan Problem, with reuse, is strongly NP-hard.
\end{theorem}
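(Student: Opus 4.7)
The plan is to reduce from \textsc{1-in-3SAT} (which is strongly NP-hard) to our decision problem, using the gadget structure hinted at in the referenced figure. Given an instance of \textsc{1-in-3SAT} with $n$ variables and $m$ clauses, I would build a DAG whose source-to-sink structure consists of a ``variable layer'' followed by a ``clause layer,'' with the total resource budget $B$ and makespan bound $T$ chosen so that a witness for the decision problem corresponds exactly to a satisfying assignment where each clause has exactly one true literal.

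First, for each variable $V_i$ I would construct a variable gadget whose source-to-sink subpath splits into two parallel branches, one labeled ``true'' and one labeled ``false.'' Each branch consists of a chain of jobs that each require one unit of resource to meet the tight time bound; every job on the branch has a two-tuple duration function that is ``fast'' with one resource unit and ``slow'' otherwise. Because resources flow along a single source-to-sink path, routing the sole unit of resource allocated to $V_i$ through one branch forces the other branch to run at its slow duration, and the outgoing edges of the gadget encode which literals become true. The variable layer thereby contributes $n$ to the budget $B$ (one unit per variable) and forces a binary true/false choice per variable.

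Next, I would design the clause gadget for each clause $C = (V_i \lor V_j \lor V_k)$ so that its three input ports correspond to the three literals, a resource unit entering the gadget through a port ``carries'' the signal ``this literal is true,'' and the internal jobs of the gadget can meet the local makespan target only when exactly one of the three ports receives a unit of resource. Mechanically this is achieved by chaining jobs whose duration functions ``speed up'' when the unique unit passes through, but whose fast-path completion is destroyed (serialized) when two or three resource units compete for the shared interior path, and whose slow-path completion blows past $T$ when no unit arrives. The outgoing edges of the variable gadgets are wired into the clause gadget ports so that routing the unit for $V_i$ along its ``true'' branch delivers a resource unit to the corresponding port of every clause containing $V_i$, and along the ``false'' branch to every clause containing $\neg V_i$.

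Finally, I would set $B = n$ (plus whatever constant padding the gadgets require) and tune $T$ so that the entire DAG achieves makespan $\le T$ if and only if every clause gadget sees exactly one resource-carrying port, i.e., exactly one true literal per clause. Correctness follows in two directions: a satisfying 1-in-3 assignment routes the $n$ resource units to produce exactly one live input per clause, meeting both bounds; conversely, any feasible schedule induces a routing with $n$ units, which partitions the variables into true/false and supplies each clause with exactly one live port. Strong NP-hardness is inherited because all numbers in the construction (budgets, durations, and tuple counts) are polynomial in $n+m$. The main obstacle will be the clause gadget: I must engineer its duration functions and internal topology so that the makespan strictly discriminates the ``one unit'' case from both the ``zero units'' and the ``two or three units'' cases under the resource-reuse-over-paths model, since extra units cannot simply be discarded but must be routed through the gadget and may shortcut other jobs unless the wiring forbids it.
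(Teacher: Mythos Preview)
Your high-level plan (reduce from \textsc{1-in-3SAT}, variable gadgets choosing a branch, clause gadgets checking exactly-one) matches the paper, but the mechanism you propose for coupling variables to clauses has two genuine gaps.

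First, you write that ``routing the unit for $V_i$ along its `true' branch delivers a resource unit to the corresponding port of every clause containing $V_i$.'' In this model each unit of resource is routed along a \emph{single} source-to-sink path; it cannot be duplicated. If $V_i$ occurs in several clauses, its one unit can reach at most one of them, so resource flow cannot broadcast a truth value. Second, you want the clause gadget to fail when two or three units arrive (``fast-path completion is destroyed when two or three resource units compete''). That is impossible here: the duration function is non-increasing in the resource, so extra resource can never lengthen any job. You can penalize \emph{too few} units, but never \emph{too many}.

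The paper avoids both issues by decoupling the signal from the resource. Variable gadgets do not send resource to clauses at all; they send \emph{time}. An edge from the unchosen branch of a variable gadget into a clause vertex raises that vertex's earliest start time from $0$ to $1$. Each clause gadget receives its own two fresh units of resource directly from the source (so the budget is $B=n+2m$, not $n$), and has three internal paths whose last edges have tuples $\{\langle 0,1\rangle,\langle 1,0\rangle\}$. With two units the clause can shortcut two of the three paths; the third path meets makespan $1$ only if its start time is $0$. The wiring from variables is arranged so that exactly one of the three clause vertices has start time $0$ precisely when exactly one literal is true (otherwise all three have start time $1$ and the two available units are insufficient). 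This precedence-based broadcast is what makes the reduction go through; your resource-based broadcast cannot.
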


Our proof is based on a polynomial-time reduction from the strongly NP-hard problem 1-in-3SAT \cite{schaefer1978complexity}:
Given $n$ variables ($V_i, 1\le i\le n$) and $m$ clauses ($C_j, 1\le j\le m$),with each clause a disjunction of three literals, is there a truth assignment to the variables such that each clause has exactly one true literal? 

\hide{
\begin{problem}
\textbf{1-in-3SAT :} Given $n$ variables ($V_i, 1\le i\le n$) and $m$ clauses ($C_j, 1\le j\le m$) where each clause is a disjunction of three literals, is there a truth assignment to the variables such that each clause has exactly one true literal? 
\end{problem}
}

\para{Variable gadget.} The gadget for variable $V$ consists of nodes $V^{(1)}$, $V^{(2)}$, $V^{(3)}$, $V^{(4)}$, $V^{(5)}$, and $V^{(6)}$ as shown in Figure \ref{fig:var-clause}$(a)$. We show in the hardness proof that a variable gadget will get exactly one unit of extra resource, otherwise the makespan will be greater than the target makespan of $1$. Sending one unit of resource to node $V^{(2)}$ (Figure \ref{fig:var-clause}$(a)$) corresponds to setting the variable $V$ to TRUE and sending the unit of resource to $V^{(3)}$ corresponds to setting $V$ to FALSE. The remaining vertices ensure the extra resource is used in the variable and not transferred into one of the clauses.

\para{Clause gadget.} The gadget corresponding to clause $C$ has $10$ vertices $C^{(i)}$ ($1\le i\le 10$) as shown in Figure \ref{fig:var-clause}$(b)$. Arcs $(C^{(1)}, C^{(2)})$, $(C^{(2)}, C^{(4)})$, $(C^{(1)}, C^{(3)})$ and $(C^{(3)}, C^{(4)})$ have resource-time pairs as $\{\langle 0, 1\rangle,\langle 1, 0\rangle\}$. If clause $C$ has three literals $V_i, V_j$ and $V_k$, then vertex $C^{(5)}$ is connected to the vertices $V_{i}^{(3)}, V_{j}^{(3)}$ and $V_{k}^{(2)}$. These vertices correspond to $\neg V_i, \neg V_j$ and $V_k$ respectively. Vertex $C^{(6)}$ is connected to $V_{i}^{(3)}, V_{j}^{(2)}$ and $V_{k}^{(3)}$ . These vertices correspond to $\neg V_i, V_j$ and $\neg V_k$. Vertex $C^{(7)}$ is connected to $V_{i}^{(2)}, V_{j}^{(3)}$ and $V_{k}^{(3)}$. These vertices correspond to $V_i, \neg V_j$ and $\neg V_k$. Arcs $(C^{(5)}, C^{(8)})$, $(C^{(6)}, C^{(9)})$, and $(C^{(7)}, C^{(10)})$ have resource-time pairs as $\{\langle 0, 1\rangle,\langle 1, 0\rangle\}$. The part of the clause gadget consisting of $C^{(1)}, C^{(2)}, C^{(3)}$ and $C^{(4)}$ demand at least two units of memory be allocated there and then these units of resource go to satisfy two of $C^{(5)},C^{(6)}$ and $C^{(7)}$. There is still one of these lines that has no allocated resource so it's cost is 1. Thus the corresponding variable must have had it's path length reduced (by setting it true).

Figure \ref{fig:example} shows the complete construction of $(V_1\lor \neg V_2\lor V_3)\land (\neg V_1\lor V_2\lor V_3)$ as an example.

\vspace{-0.1cm}
\begin{lemma}
	\label{lem:zero-one}
	There exists a solution to the input instance of 1-in-3SAT iff there exists a valid flow of resources through the DAG achieving a makespan of $1$ under a resource bound of $B = n + 2m$.
\end{lemma}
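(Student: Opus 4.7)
The plan is to prove both directions of the biconditional by exploiting the gadget structure directly. For the forward direction, given a satisfying 1-in-3SAT assignment, I construct an explicit flow of $n+2m$ units achieving makespan $1$. For each variable $V$ set to TRUE (respectively FALSE), I route one unit through $V^{(1)} \to V^{(2)}$ (respectively $V^{(1)} \to V^{(3)}$) and continue this unit along a source-to-sink path through the clause gadgets where $V$ appears, thereby zeroing out the variable-internal arcs on the active branch. For each clause $C$, I push two units through the diamond $C^{(1)} \to \{C^{(2)}, C^{(3)}\} \to C^{(4)}$, making the four top arcs cost $0$. Because exactly one literal per clause is true, exactly one of $\{C^{(5)},C^{(6)},C^{(7)}\}$, say $C^{(\ell)}$, has its three incoming literal-lines all carrying variable-side flow (the literal pattern associated with $C^{(\ell)}$ is precisely the assignment that makes exactly that literal true and the other two false). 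I then route the two clause-top units from $C^{(4)}$ to the other two vertices in $\{C^{(5)},C^{(6)},C^{(7)}\}$, giving their outgoing arcs cost $0$, while $C^{(\ell)}$'s outgoing arc is fed by a variable-originated unit. Every arc ends up with cost $0$, yielding makespan $1$.

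For the backward direction, assume a valid flow of $B=n+2m$ units achieves makespan $1$. I will first establish a per-gadget lower bound on resource usage: each variable gadget requires at least $1$ unit, since otherwise some internal arc forces a source-to-sink path cost exceeding $1$ (the auxiliary vertices $V^{(4)}, V^{(5)}, V^{(6)}$ are specifically present to prevent a unit from bypassing the gadget without being consumed there), and each clause gadget requires at least $2$ units through its top diamond so that both the $C^{(2)}$-branch and $C^{(3)}$-branch of the diamond have cost $0$ on their arcs. Summing, the total lower bound is exactly $B$, so every inequality is tight: each variable uses exactly $1$ unit, each clause exactly $2$ at its top. I then define the truth assignment of $V$ according to whether its unique unit flows through $V^{(2)}$ (TRUE) or $V^{(3)}$ (FALSE). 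For any clause $C$, the two top units can reach only two of $\{C^{(5)},C^{(6)},C^{(7)}\}$, leaving one, call it $C^{(\ell)}$, with its outgoing arc having cost $1$. In order for every source-to-sink path through this cost-$1$ arc to have total cost at most $1$, every other arc on such paths must have cost $0$; in particular, each of the three literal-lines entering $C^{(\ell)}$ must be cost-$0$ on the corresponding variable side. By inspection of which literal endpoints ($V^{(2)}$ vs.\ $V^{(3)}$) feed $C^{(\ell)}$, this uniquely forces exactly the truth pattern in which $C^{(\ell)}$'s designated literal is true and the other two literals of $C$ are false. Thus the clause is satisfied with exactly one true literal, proving the 1-in-3SAT instance satisfiable.

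The main obstacle will be the budget-tightness argument in the backward direction. Because resources are reused along paths, variable-originated units also flow through clauses, so one cannot naively separate the $n$ units from the $2m$ units; a careful accounting is needed to exclude configurations in which one gadget spends more than its nominal allocation while another compensates. The rigidity of the variable gadget provided by $V^{(4)}, V^{(5)}, V^{(6)}$ is what makes this accounting clean: each variable's single unit must be committed locally to one of the two truth branches, and cannot be diverted in a way that would free resources for other gadgets. Translating the makespan-$1$ constraint on the path through the cost-$1$ middle arc into the \emph{exactly-one} pattern (rather than merely at-least-one) via the three specific literal patterns of $C^{(5)}, C^{(6)}, C^{(7)}$ will be the delicate combinatorial step of the proof.
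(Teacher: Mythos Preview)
Your forward direction contains a real error that contradicts both the gadget design and your own backward argument. You route each variable's unit out of its gadget and into a clause, intending it to supply the outgoing arc of the ``satisfied'' vertex $C^{(\ell)}$. But (i) the tail $V^{(4)},V^{(5)},V^{(6)}$ exists precisely to trap that unit inside the variable gadget on its way to the sink---diverting it leaves an internal variable-gadget arc unresourced and breaks the makespan bound, as you yourself note when discussing the backward direction; (ii) even waiving that, a variable appearing in several clauses has only one unit to send, and the vertex $C^{(\ell)}$ you target has \emph{three} incoming literal edges from three \emph{different} variables, so your plan silently demands three variable-originated units to converge at a single clause vertex. As a symptom, your conclusion ``every arc ends up with cost $0$'' would give makespan $0$, not~$1$.

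The mechanism the construction actually uses---and that your backward direction essentially rediscovers---is timing, not flow. The variable's unit never enters a clause; it travels along $\langle S, V^{(1)}, V^{(2)}\text{ or }V^{(3)}, V^{(4)}, V^{(5)}, V^{(6)}, T\rangle$, and the choice of branch fixes the earliest start time ($0$ versus $1$) at $V^{(2)}$ and $V^{(3)}$. Those times propagate along the zero-cost precedence arcs to $C^{(5)},C^{(6)},C^{(7)}$. Under a 1-in-3 assignment exactly one of them starts at time~$0$ and can therefore absorb its cost-$1$ outgoing arc with no resource; the two clause-top units coming through $C^{(4)}$ zero out the other two outgoing arcs. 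The makespan is exactly~$1$ because several arcs---the unused variable branch and the unresourced $(C^{(\ell)},C^{(\ell+3)})$---retain cost~$1$. Your backward direction is essentially the paper's; only the forward construction needs to be rewritten along these lines.
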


\vspace{-0.2cm}
\begin{proof}
\noindent{\bf Forward direction.} We prove that if there is a solution to the 1-in-3SAT instance with $n$ variables and $m$ clauses, then the reduced DAG has a solution of makespan $1$ with $(n+2m)$ units of resource.
If a variable $V$'s truth assignment is TRUE, then we allow one unit of resource to flow through vertex $V^{(2)}$ along the path $\langle S, V^{(1)}, V^{(2)}, V^{(4)}, V^{(5)}, V^{(6)}, T\rangle$, otherwise we allow one unit of resource to flow through vertex $V^{(3)}$ along the path $\langle S, V^{(1)}, V^{(3)}, V^{(4)}, V^{(5)}, V^{(6)}, T\rangle$. For every clause $C$, we allow one unit of resource to flow through the path $\langle S, C^{(1)}, C^{(2)}, C^{(4)}\rangle$ and another unit of resource through the path $\langle S, C^{(1)}, C^{(3)}, C^{(4)}\rangle$. Thus, $2$ units of resource can be flowed from vertex $C^{4}$.
In a valid assignment of 1-in-3SAT, for each clause $C$, exactly $2$ vertices of $C^{(5)}, C^{(6)}$ and $C^{(7)}$ will have the earliest start time of $1$ and the other one will have $0$ (Table \ref{table:1}). 

Also, if only one literal is true in a clause, exactly two vertices among $C^{(5)}, C^{(6)}$ and $C^{(7)}$ need one unit of extra resource each to meet the makespan requirement (from Table \ref{table:1}). We are allowed to flow $2$ units of resource from vertex $C^{(4)}$. Thus the project makespan is $1$ using $(n+2m)$ units of resource.

\noindent{\bf Backward direction.} Now, we prove that if there exists a solution of makespan $1$ using $(n+2m)$ units of resource in the reduced DAG, then there also exists a solution to the 1-in-3SAT instance. 
To achieve a makespan of $1$, every variable gadget needs $1$ unit of resource and each clause gadget needs $2$ units of resource, otherwise the makespan would be greater than $1$. Also, any resource that is used in a variable gadget cannot be used further in any other variable or clause gadget because the resource can be reused over a path only. Similarly, any resource that is used in any clause gadget, cannot be reused in any other gadget. Only one vertex that is either $V^{(2)}$ or $V^{(3)}$, will have the earliest start time $0$. Both cannot be $0$, as there is only $1$ unit of resource per variable gadget.
%
%
% Rathish changed; previous was:
% Both cannot be $1$ as in a clause $C$ where the literal $V$ or $\neg V$ is present, the clause gadget $C$ will require $3$ units of resource to achieve makespan of $1$. Each clause gadget can have exactly $2$ units of extra resource. Thus, for every variable, it has to be a valid assignment ($V$ is set to either TRUE or FALSE).
%
Both cannot be $1$ as in a clause $C$ where the literal $V$ or $\neg V$ is present, each of $C^{(5)}, C^{(6)}$ and $C^{(7)}$ would have earliest starting time of $1$. This requires use of $3$ units of resource in the clause gadget $C$ to achieve a makespan of $1$. However, each clause gadget can have exactly $2$ units of resource. Thus, for every variable, it has to be a valid assignment ($V$ is set to either TRUE or FALSE).
From Table \ref{table:1}, if a clause has exactly one TRUE literal, then the clause gadget requires $2$ units of resource to achieve a makespan of $1$. Otherwise, the clause gadget would have a makespan of $2$ with the same amount of resource or would require more resource to achieve the target makespan of $1$. Thus, each clause has exactly one TRUE literal. This satisfies the 1-in-3SAT instance.
\end{proof}
\vspace{-0.1cm}

\begin{table}[!ht]
	\centering
	\scalebox{0.88}[0.88]{
		{\small 
		\hspace{-0.25cm}	\begin{colortabular}{ | c c c c c c |}
				\hline                       
				\rowcolor{tabletitlecolor} $V_i$ & $V_j$ & $V_k$ & $C^{(5)}$ & $C^{(6)}$ & $C^{(7)}$ \\  \hline

			True & True & True & $max(1,1,0)=1$ & $max(1,0,1)=1$ & $max(0,1,1)=1$ \\ 
			\rowcolor{tablealtrowcolor} False & True & True & $max(0,1,0)=1$ & $max(0,0,1)=1$ & $max(1,1,1)=1$ \\ 
			True & False & True & $max(1,0,0)=1$ & $max(1,1,1)=1$ & $max(0,0,1)=1$ \\ 
			\rowcolor{tablealtrowcolor} True & True & False & $max(1,1,1)=1$ & $max(1,0,0)=1$ & $max(0,1,0)=1$ \\ 
			False & False & True & $max(0,0,0)=0$ & $max(0,1,1)=1$ & $max(1,0,1)=1$ \\ 
			\rowcolor{tablealtrowcolor} False & True & False & $max(0,1,1)=1$ & $max(0,0,0)=0$ & $max(1,1,0)=1$ \\ 
			True & False & False & $max(1,0,1)=1$ & $max(1,1,0)=1$ & $max(0,0,0)=0$ \\ 
			\rowcolor{tablealtrowcolor} False & False & False & $max(0,0,1)=1$ & $max(0,1,0)=1$ & $max(1,0,0)=1$ \\
								
				\hline
			\end{colortabular}
		}
	}
		\vspace{-0.3cm}
	\caption{Makespan at vertices $C^{(5)}$, $C^{(6)}$ and $C^{(7)}$ for different truth value assignments to $V_i, V_j$ and $V_k$ in Figure \ref{fig:var-clause}$(b)$.} 
	\vspace{-1cm}
	\label{table:1}
	\end{table}

% \subsection{Hardness of approximation}

We also prove hardness of approximation, both for the minimum-makespan problem and for the minimum-resource problem.  We begin with the minimum-makespan problem.

\vspace{-0.1cm}
\begin{theorem}
  \label{thm:hardness-approx-makespan}
	The minimum-makespan discrete resource-time tradeoff problem that allows resources to be reused only over paths cannot have a polynomial-time approximation algorithm with approximation factor less than $2$ unless $P = NP$.
\end{theorem}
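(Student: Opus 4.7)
The plan is to reuse the reduction from 1-in-3SAT constructed for Theorem~\ref{thm:strong-hardness} and observe that it is gap-preserving because the constructed DAG has integer makespan under every feasible resource assignment.

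First, I would observe the \emph{integrality} of the reduction. Every arc in both the variable gadget (Figure~\ref{fig:var-clause}$(a)$) and the clause gadget (Figure~\ref{fig:var-clause}$(b)$) has resource-time tuple set $\{\langle 0,1\rangle,\langle 1,0\rangle\}$, or is a zero-duration connector. Since resources are integral and the only allowed durations on each arc are $0$ and $1$, the duration of every arc in any feasible assignment is an integer in $\{0,1\}$. The makespan, being the maximum over source-to-sink paths of a sum of such integer durations, is therefore a non-negative integer.

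Next, I would apply Lemma~\ref{lem:zero-one}. In the forward direction, a satisfying assignment of the 1-in-3SAT instance yields a resource flow with budget $B = n + 2m$ and makespan exactly $1$. In the contrapositive of the backward direction, if the 1-in-3SAT instance is unsatisfiable then no feasible flow with budget $B$ achieves makespan $\le 1$; combined with integrality, the optimum makespan with budget $B$ must be at least $2$. So the reduction produces an instance whose optimum makespan is $1$ in YES cases and $\ge 2$ in NO cases.

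Finally, I would derive the inapproximability. Suppose, toward contradiction, that there is a polynomial-time $\rho$-approximation algorithm for minimum makespan under resource reuse over paths, with $\rho < 2$. Running it on the constructed DAG with budget $B = n + 2m$ would return, in the YES case, a makespan strictly less than $\rho \cdot 1 < 2$, hence exactly $1$ by integrality; in the NO case, it would return a makespan at least the optimum, hence $\ge 2$. Thus the algorithm decides 1-in-3SAT in polynomial time, contradicting $P \ne NP$. The only real obstacle is the integrality observation, and that falls out immediately from the structure of the gadgets, so the gap argument is clean.
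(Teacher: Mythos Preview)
Your proposal is correct and follows essentially the same route as the paper: reuse the 1-in-3SAT reduction from Theorem~\ref{thm:strong-hardness}, invoke Lemma~\ref{lem:zero-one} to get makespan $1$ in YES instances and $\ge 2$ in NO instances, and conclude that any $(\!<\!2)$-approximation would decide 1-in-3SAT. Your explicit integrality observation (all arc durations are in $\{0,1\}$, hence the optimum is an integer) is a welcome clarification; the paper asserts ``OPT's makespan is greater than or equal to $2$'' in the NO case without spelling out why strictly greater than $1$ implies at least $2$.
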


\vspace{-0.2cm}
\begin{proof}
	We prove the theorem by contradiction. Let's assume that there is a polynomial time approximation algorithm with factor less than $2$. Given a formula with $n$ variables and $m$ clauses, we construct the reduced DAG as described in the proof of Lemma~\ref{lem:zero-one}. If the formula is a valid 1-in-3SAT instance, then OPT (i.e., the optimal solution) has a makespan of $1$ using $(n+2m)$ units of resource in the reduced DAG. The approximation algorithm will return a schedule with makespan less than $2$ using $(n+2m)$ units of resource. If the formula is not a valid 1-in-3SAT instance, then OPT's makespan is greater than or equal to $2$. So, the approximation algorithm will have a schedule with makespan greater than or equal to $2$. Thus, using a polynomial time algorithm one can solve a strongly NP-hard problem. This is a contradiction. Hence, there exists no polynomial time approximation algorithm for resource-time-reuse-path problem with factor less than $2$ unless $P = NP$. 
\end{proof}

\begin{figure*}[h!]
\begin{minipage}{\textwidth}
\vspace{3cm}
	\centering
	\includegraphics[scale=0.35]{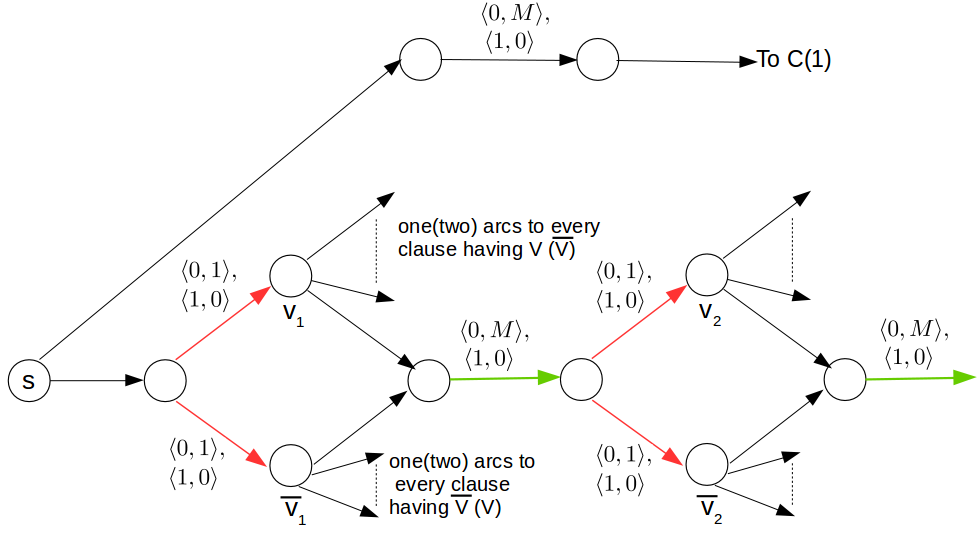}
	\figcaption{The variable gadgets chained together for the hardness of approximation of the minimum-resource problem (Theorem \ref{thm:hardness-approx-resources}).}
	\label{fig:vars}

	\vspace{0.5cm}

	\includegraphics[scale=0.35]{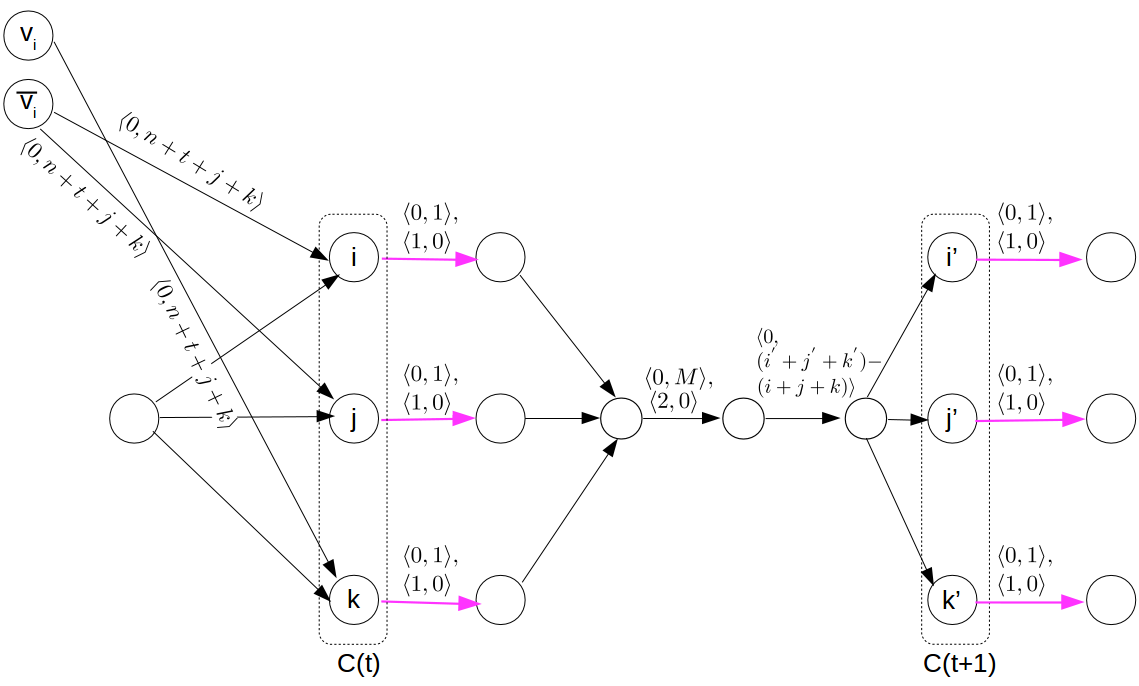}
	\figcaption{The clause gadgets chained together for the hardness of approximation of minimum-resource problem.}
	\label{fig:clauses}

\vspace{1cm}
\end{minipage}
\end{figure*}

Now, we turn attention to the minimum-resource problem:

\begin{theorem}
  \label{thm:hardness-approx-resources}
	The minimum-resource discrete resource-time tradeoff problem that allows resources to be reused only over paths cannot have a polynomial-time approximation algorithm with approximation factor less than $3/2$ unless $P = NP$.
%
%	There exists no polynomial time approximation algorithm for the minimum-resource problem (with reuse) with factor less than $3/2$ unless $P = NP$.
\end{theorem}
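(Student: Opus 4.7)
The plan is to mirror the reduction used in Theorem~\ref{thm:hardness-approx-makespan}, but now to amplify the gap on the \emph{resource} side rather than on the makespan side. I would again start from a 1-in-3SAT instance with $n$ variables and $m$ clauses and build a DAG whose optimum resource cost, subject to a fixed target makespan $T=1$, is $R$ if the formula is satisfiable and at least $\tfrac{3}{2}R$ otherwise. Given such a construction, any polynomial-time approximation with ratio $\rho<3/2$ would satisfy: on a yes-instance it returns a schedule of cost at most $\rho R < \tfrac{3}{2}R$, while on a no-instance every feasible schedule (including its output) costs at least $\tfrac{3}{2}R$. Since $\rho R$ and $\tfrac{3}{2}R$ are separated by a known polynomial-time-computable threshold, we could decide 1-in-3SAT in polynomial time, contradicting $P\neq NP$.

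To obtain the $3/2$ gap I would use chained versions of the variable and clause gadgets of Theorem~\ref{thm:strong-hardness}, as depicted in Figures~\ref{fig:vars} and \ref{fig:clauses}. Concretely, I would place two copies of the variable gadget in series along the same source-to-sink path so that a single unit of resource, routed consistently through either the ``TRUE'' side or the ``FALSE'' side in both copies, suffices to meet the makespan bound; any inconsistent routing forces an extra half-unit of resource along the chain. The clause gadgets are chained analogously so that the two units that must reach $C^{(4)}$ can be reused along subsequent clause copies on the same path, making the ``satisfiable'' cost precisely $R=2(n+2m)$ (or a similar constant multiple), while any violation of the exactly-one-true-literal condition in some clause forces one additional unit to be injected into that clause's chain, pushing the total to at least $3(n+2m)=\tfrac{3}{2}R$.

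With these gadgets in place, the forward direction of the reduction follows as in Lemma~\ref{lem:zero-one}: a satisfying truth assignment gives a flow of exactly $R$ units realizing makespan $1$, because each variable chain carries one unit and each clause chain carries two units that are re-used across the chained copies. For the backward direction, I would verify that if every variable gadget receives a consistent assignment and every clause has exactly one true literal, the same $R$ units suffice; conversely, any assignment failing one of these properties necessarily forces the resource budget up by a factor of at least $3/2$, because the unsatisfied clause (or inconsistent variable) requires a full extra unit injected in each chained copy while valid chains only reuse the existing flow.

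The main obstacle is engineering the chained gadgets so that the $3/2$ ratio is tight and \emph{rigid}: one must rule out clever resource splittings across the chain or across parallel branches that would allow an unsatisfiable instance to be served with fewer than $\tfrac{3}{2}R$ units. This requires a careful case analysis of how flow may be partitioned among the chained copies of a gadget, and choosing the resource-time tuples on each arc so that partial reuse still incurs the full half-unit penalty. Once this combinatorial bookkeeping is done, the inapproximability bound follows immediately from the gap argument above.
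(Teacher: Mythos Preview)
Your proposal has a genuine gap: the construction you describe cannot produce a $3/2$ ratio on the resource side. You set the target makespan to $T=1$ and take $R=2(n+2m)$ as the satisfiable cost, essentially keeping the $n+2m$ parallel gadgets of Lemma~\ref{lem:zero-one} and duplicating each along a path. But if the instance is unsatisfiable, only \emph{one} extra unit is needed at the failing clause, and because resources are reusable along a path, that single extra unit can be threaded through every chained copy of the offending gadget. So a no-instance costs at most $R+O(1)$, and the ratio $(R+O(1))/R\to 1$ as $n,m$ grow. Your claimed jump to $3(n+2m)$ is unsupported: nothing in the chaining forces a \emph{fresh} unit per copy, precisely because the problem allows reuse along paths.

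The paper's construction is structurally different and hinges on an idea you are missing: drive the satisfiable resource cost down to a \emph{constant} (namely $2$) so that one forced extra unit already gives the $3/2$ gap. To do this, all $n$ variable gadgets are strung into a single $s$--$t$ path through which one unit travels, plus a parallel $(s,t)$ edge carrying a second unit; both units emerge at time $n$ and then feed sequentially through all $m$ clause gadgets, also chained into one path. The target makespan is therefore not $1$ but a carefully engineered value (depending on $n$ and the indices $i+j+k$ of each clause), and the edges between consecutive gadgets carry specific ``buffer'' durations so that precedence constraints from the variable gadgets into each clause literal are met \emph{just in time}. With this timing, a satisfying 1-in-3 assignment lets the two units clear every clause; any unsatisfied clause forces a third unit. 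The delicate part is the arithmetic of the buffer durations and the ordering of clauses, not a duplication argument.
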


\begin{proof}
  (Sketch) The proof uses a reduction from 1-in-3SAT; the
  construction is similar to that in the proof of
  Theorem~\ref{thm:strong-hardness}, but has several key differences
  that make it considerably more intricate.

  First, for each variable $x_i$ we have a gadget similar to before (Figure~\ref{fig:var-clause}$(a)$), with the option to send one unit of resource on one of two two-edge paths via a vertex, with the choice of which path indicating whether the variable is set to true or to false.  Unlike the previous construction, we chain the variable gadgets together into a path of gadgets, from a source $s$ to a sink $t$. Refer to Figure~\ref{fig:vars}. A single unit of resource will be moved along the path, using one of each pair of two-edge paths, according to the truth assignments of the variables. A single directed edge, with options $\langle 1,0\rangle$ and $\langle 0,M\rangle$, links variable $x_i$ gadget to variable $x_{i+1}$ gadget.  Node $s$ is connected to the variable $x_1$ gadget with an edge with $\langle 0,0\rangle$.  A property of this construction is that the entry node of the $x_i$ gadget is reached by the unit of resource at exactly time $i-1$, and the exit node of this gadget is reached at time exactly $i$.  At time $n$ the one unit of resource that traverses the path of variable gadgets emerges at time~$n$. Finally, there is also an edge directly from $s$ to $t$ with options $\langle 1,n\rangle$ and $\langle 0, M\rangle$.  In total, two units of resource will be moved through this part of the DAG: one will follow a path through the variable gadgets, according to the truth assignments of the variables, and the other will go directly along the edge $(s,t)$.  Both units of resource will arrive at $t$ at time $n$.

\hide{
  \begin{figure*}[h!]
	\centering
	\includegraphics[scale=0.35]{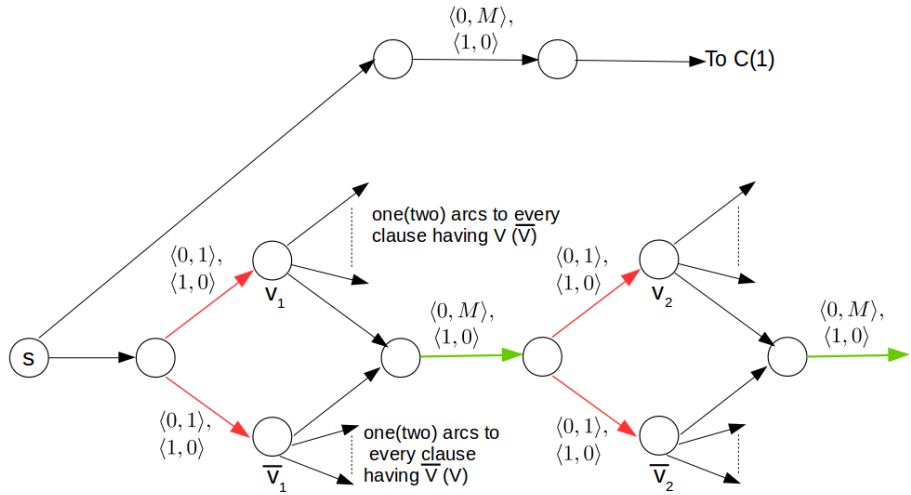}
	\caption{The variable gadgets chained together for the hardness of approximation of the minimum-resource problem (Theorem \ref{thm:hardness-approx-resources}).}
	\label{fig:vars}
\end{figure*}

    \begin{figure*}[h!]
	\centering
	\includegraphics[scale=0.35]{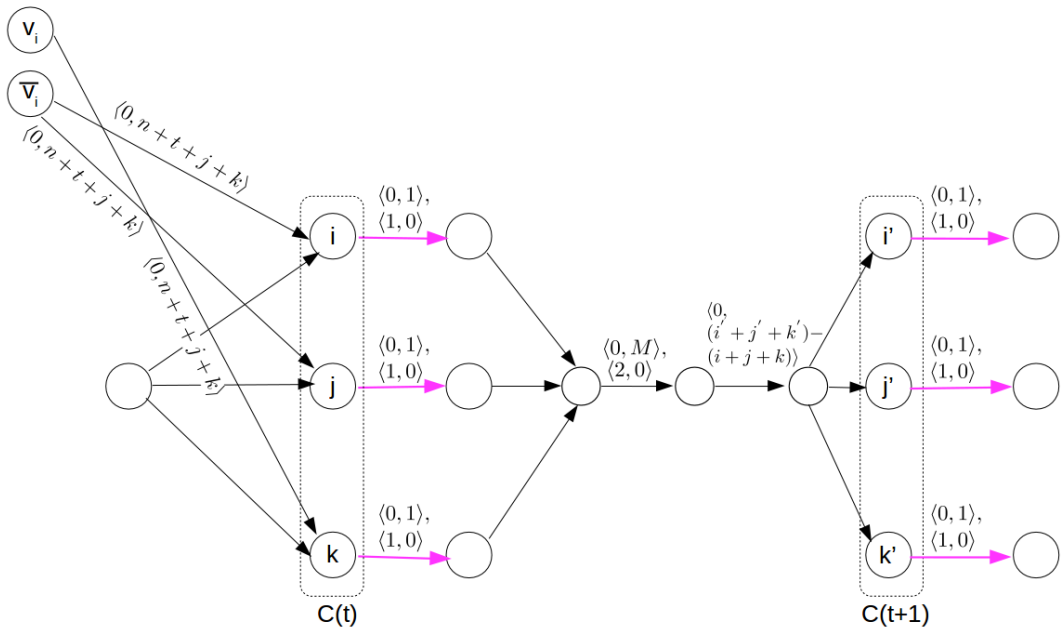}
	\caption{The clause gadgets chained together for the hardness of approximation of minimum-resource problem.}
	\label{fig:clauses}
\end{figure*}
}
% hide ends

  The clause gadget consists of three vertices, each representing a literal. Each clause has an entry vertex and an exit vertex, and they are chained into a path of gadgets, with clauses ordered in a specific way, as described below. Refer to Figure~\ref{fig:clauses}. The exit vertex of one clause has an edge connecting it to the next clause in the order; these edges have specially chosen duration values in order to serve as ``buffers'', as described below. The variable portion of the DAG feeds into the path of clause gadgets, with the 2 units of resource that arrive at $t$ at time $n$ moving along an edge that feeds into the first of the sequence of clause gadgets. Each of the three vertices of a clause gadget corresponds to a literal; each has an input edge coming from one of the two vertices of the variable gadget corresponding to the literal, according to whether the variable appears positively or negatively in the clause. These incoming edges have durations that are carefully chosen, so that the timing is as follows: For a clause with variables $x_i$, $x_j$, and $x_k$, the two units of resource (which came through the variable portion of the DAG before entering the path of clause gadgets) will arrive at the entry to the clause at exactly time $n+i+j+k$.  The incoming edges from variables to the clause literals have durations chosen just so that the precedence constraints are satisfied ``just in time'', for the two units of resource to pass through the clause gadget literals that are {\em not} true (using edges with duration 0, based on the resource of 1), while the one true literal vertex (who was reached within the clause gadget via an edge of duration 1, instead of 0, since there was no resource associated with it) is reached 1 unit of time sooner (from the variable gadget), to compensate.  The net result is that both units of resource emerge out of a clause at time $n+1+i+j+k$, ready to pass into the buffer and the next clause gadget.  The buffers are selected carefully. % , so that xxx... (Joe still writing this part).

  Then, we claim that we can achieve makespan $A$ % (xxx to be written explicitly)
  using just the 2 units of resource if and only if the variables are assigned to satisfy the 1-in-3SAT. If the variables are assigned in a way that does not yield all clauses to be true, then we will need at least 3 units of resource to achieve the target makespan. Thus, it is NP-hard to distinguish between needing 2 units and needing 3 units of resource. This implies that it is NP-hard to achieve an approximation ratio better than factor 3/2.  
\end{proof}

\begin{figure*}[h!]
\begin{minipage}{\textwidth}
	\centering
	\includegraphics[scale=0.30]{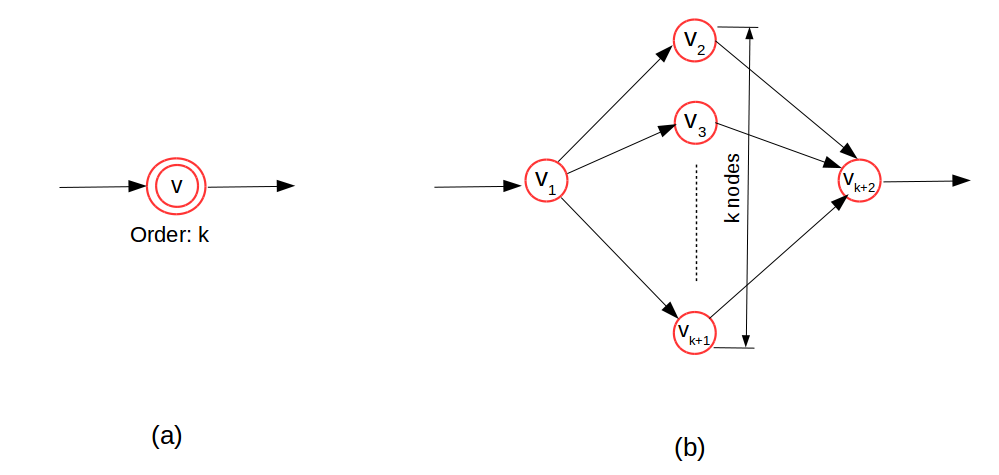}
	\vspace{-0.2cm}
	\figcaption{Composite node (Section \ref{sec:splitting-hardness}).}
	\label{fig:composite}

	\vspace{0.5cm}

	\includegraphics[scale=0.30]{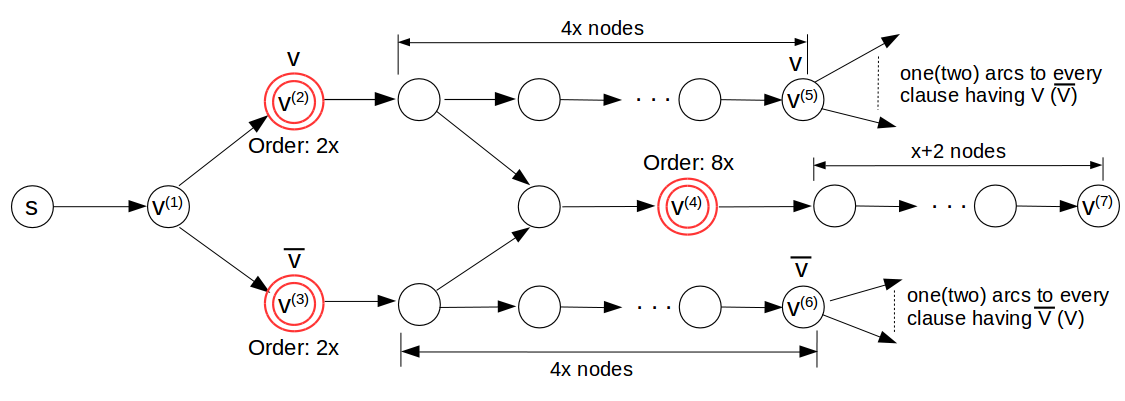}
	\figcaption{Gadget for variable $V$ (Section \ref{sec:splitting-hardness}).}
	\label{fig:binary_var}

	\vspace{0.5cm}

	\includegraphics[scale=0.30]{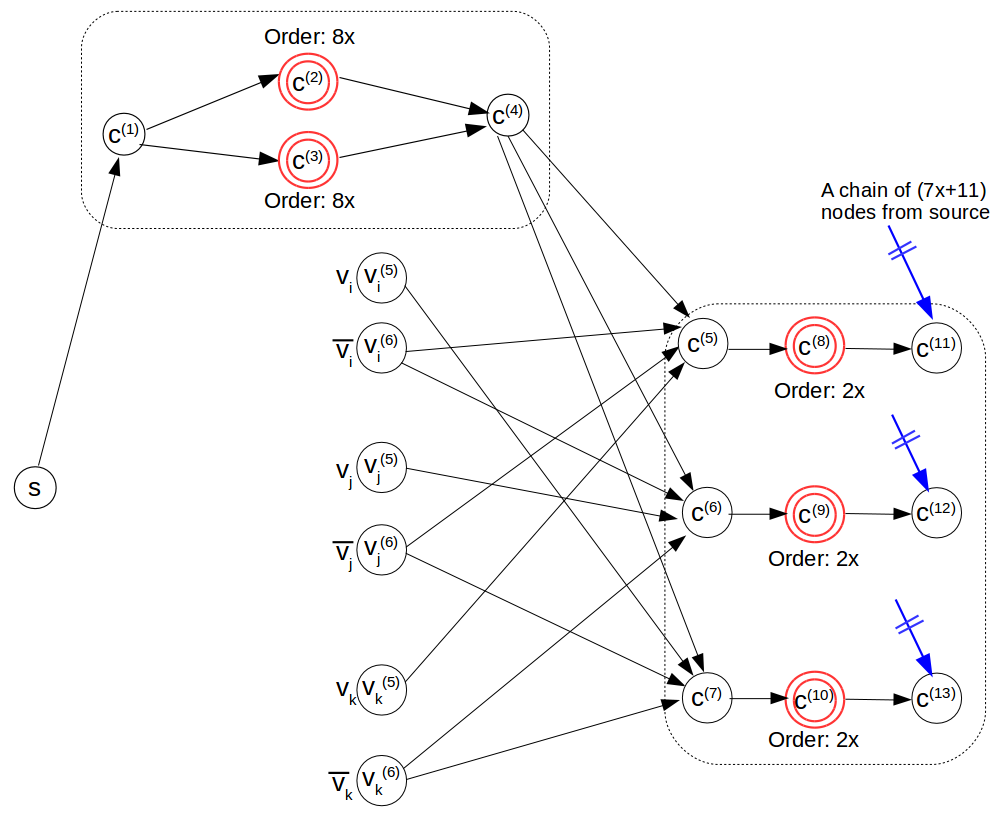}
	\figcaption{Gadget for clause $C=(V_i\lor V_j\lor V_k)$ (Section \ref{sec:splitting-hardness}).}
	\label{fig:binary_clause}
\end{minipage}
\end{figure*}

\vspace{-0.3cm}
\subsection{Reuse Over a Path with Recursive Binary Splitting and $k$-Way Splitting}
\label{sec:splitting-hardness}

We have seen a (strong) NP-hardness proof (Theorem~\ref{thm:strong-hardness}) for the discrete resource-time tradeoff problem with general non-increasing duration functions. In this subsection we strengthen this result by showing that the problem remains hard even when the duration functions arise from recursive binary split reducers and $k$-way split reducers. The proof uses the same general technique as in Section~\ref{general-hard}, but requires more complex gadgets to deal with the restricted duration functions.

\hide{
  \textbf{Makespan of a DAG:} Given a DAG $D$, the cost of a path $\pi$ is the sum of in-degrees of the vertices in path $\pi$. The makespan of $D$ is the maximum cost over all paths in $D$.
We are interested in the problem, given a resource budget $k$, minimize the makespan of $D$. \\
We reduce this problem to the 1-in-3SAT problem which is strongly NP-hard. This will use the same general techniques as in section \ref{general-hard} but will introduce more complex gadgets to deal with the restricted functions (recursive binary splitting and $k$-way splitting).
}

\hide{ 
\begin{figure*}[h!]
	\centering
	\includegraphics[scale=0.35]{pics/composite}
	\caption{Composite node}
	\label{fig:composite}
\end{figure*}
}
% hide ends

\para{Composite node.} A composite node $v$ of order $k$ is a gadget of $(k+2)$ nodes as shown in Figure \ref{fig:composite}. A composite node can have only one incoming edge and only one outgoing edge. Without using any extra resource, a composite node of order $k$ takes $(k+2)$ units of time to finish its activities. This is because there is one write operation on vertex $v_1$, one write operation on vertex $v_i$ $(2\le i \le k+1)$ and $k$ write operations on vertex $v_{k+2}$.
Using $2$ units of resource with the $k$-way splitting function, all activities can be completed in $(2 + k/2 + 2) = (k/2+4)$ time. Similarly using $2$ units of resource with recursive binary splitting function, all activities will be completed in $(2 + k/2 + \log 2 +1) = (k/2+4)$ time. Thus using $2$ units of resource, composite node $v$ takes $(k/2 + 4)$ units of time using either function.

\hide{
\begin{figure*}[h!]
	\centering
	\includegraphics[scale=0.35]{pics/binary_var}
	\caption{Gadget for variable V}
	\label{fig:binary_var}
\end{figure*}
}
% hide ends

\para{Variable gadget.} The gadget for variable $V$ consists of $3$ composite nodes and other nodes as shown in Figure \ref{fig:binary_var}. Composite nodes $V^{(2)}$ and $V^{(2)}$ are of order $2x$. Composite node $V^{(4)}$ is of order $8x$. There is a chain of $4x$ nodes from $V^{(2)}$ to $V^{(5)}$ inclusive. Similarly there is a chain of $4x$ nodes from $V^{(3)}$ to $V^{(6)}$ inclusive. We will see that unless a variable gadget gets exactly $2$ units of resource, its makespan will be greater than $(7x+2y+12)$ which we will use as the target makespan later in our hardness proof. The values of $x$ and $y$ will be described shortly. Sending $2$ units of resource to node $V^{(2)}$ (Figure \ref{fig:binary_var}) corresponds to setting the variable $V$ to TRUE and sending $2$ units of resources to $V^{(3)}$ corresponds to setting $V$ to FALSE. We will see that sending one unit of resource to $V^{(2)}$ and one unit of resource to $V^{(3)}$ will make the makespan greater than the target makespan.

\para{Clause gadget.} The gadget corresponding to clause $C$ has $13$ vertices $C^{(i)}$ ($1\le i\le 13$) as shown in Figure \ref{fig:binary_clause}. Vertices $C^{(2)}$ and $C^{(3)}$ are composite nodes each of order $8x$. If clause $C$ has three literals $V_i, V_j$ and $V_k$, then vertex $C^{(5)}$ is connected to the vertices $V_{i}^{(6)}, V_{j}^{(6)}$ and $V_{k}^{(5)}$. These vertices correspond to $\neg V_i, \neg V_j$ and $V_k$ respectively. Vertex $C^{(6)}$ is connected to $V_{i}^{(6)}, V_{j}^{(5)}$ and $V_{k}^{(6)}$ . These vertices correspond to $\neg V_i, V_j$ and $\neg V_k$. Vertex $C^{(7)}$ is connected to $V_{i}^{(5)}, V_{j}^{(6)}$ and $V_{k}^{(6)}$. These vertices correspond to $V_i, \neg V_j$ and $\neg V_k$. There are $3$ composite nodes $C^{(8)}, C^{(9)}$ and $C^{(10)}$ each of order $2x$. There is a chain of 
$7x+11$ vertices from $s$ to each vertex in $\left\{ C^{(11)}, C^{(12)}, C^{(13)}\right\}$. We define the ``earliest finish time'' of a node $v$ as the time when all the write operations at $v$ are finished.

In a valid assignment of 1-in-3SAT, we show that for each clause $C$, exactly $2$ vertices of $C^{(5)}, C^{(6)}$ and $C^{(7)}$ will have earliest finish time of $(6x+5)$ and the other one will have earliest finish time of $(5x + 8)$. (Table \ref{binary_table:2})

\para{Value of $x$.} There is only one vertex ($V^{(7)}$) with out-degree zero in every variable gadget $V$. Also, in every clause gadget $C$, there are three vertices $C^{(11)}, C^{(12)}$ and $C^{(13)}$, each with zero out-degree. So, if we connect all such vertices to the sink vertex $t$, then in-degree at $t$ will be $(n + 3m)$.
Let $k$ be the smallest power of $2$ such that $k \ge (n + 3m)$. We perform a recursive binary splitting at vertex $t$. Let $y$ be the height of the binary splitting at $t$ where $y = \log k$. To make $8x > (7x+2y+12)$, we define $x = max\big((2y + 13), 8\big)$.
Hence, the path from any vertex from $\left\{ V^{(7)}, C^{(11)}, C^{(12)}, C^{(13)} \right\}$ to sink $t$ will take time $2y$.

\para{Truth value assignment.} Setting variable $V$ to TRUE implies sending $2$ units of resource through composite vertex $V^{(2)}$. The corresponding earliest finish time at vertex $V^{(5)}$ is $1 + (x+4) + 4x = 5x+5$ and at vertex $V^{(6)}$ is $1 + (2x+2) + 4x = 6x+3$.
Similarly, setting variable $V$ to FALSE implies sending $2$ units of resource through vertex $V^{(3)}$. The corresponding earliest finish time at vertex $V^{(5)}$ is $1 + (2x+2) + 4x = 6x+3$ and at vertex $V^{(6)}$ is $1 + (x+4) + 4x = 5x+5$. 

\hide{
\begin{figure*}[h!]
	\centering
	\includegraphics[scale=0.40]{pics/binary_clause}
	\caption{Gadget for clause $C=(V_i\lor V_j\lor V_k)$.}
	\label{fig:binary_clause}
\end{figure*}
}
% hide ends

\begin{lemma}
\label{lm:hardness-k-bin}
	There exists a solution to the input instance of 1-in-3SAT  iff there exists a valid flow of resource through the reduced DAG achieving a makespan of at most $7x + 2y + 12$ using at most $2n + 4m$ units of resource.
\end{lemma}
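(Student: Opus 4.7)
The plan is to mirror the structure of Lemma~\ref{lem:zero-one}, with the additional bookkeeping demanded by the composite nodes and the two chains that offset the finish times of $V^{(5)}$ and $V^{(6)}$.

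Forward direction: given a satisfying 1-in-3SAT assignment, I would route exactly two units of resource through the ``chosen side'' of each variable gadget ($V^{(2)}$ if $V$ is \textbf{TRUE}, $V^{(3)}$ if $V$ is \textbf{FALSE}) and two units through each of $C^{(2)}$ and $C^{(3)}$ in every clause gadget, for a total of $2n + 4m$. Using the composite-node property that a composite node of order $k$ finishes in $(k/2 + 4)$ time with 2 units of resource and in $(k+2)$ time with 0 units, one gets the figures already computed in the construction: the chosen side of a variable gadget puts $V^{(5)}$ (or $V^{(6)}$) at time $5x+5$ and the other side puts its mate at time $6x+3$. For each clause with exactly one true literal, Table~\ref{binary_table:2} says that two of $C^{(5)}, C^{(6)}, C^{(7)}$ attain earliest finish time $6x+5$ and one attains $5x+8$; the two units exiting $C^{(2)}$ and the two exiting $C^{(3)}$ are then steered to speed up the two composite nodes among $C^{(8)}, C^{(9)}, C^{(10)}$ whose inputs are late, while the composite node fed by the single early input of time $5x+8$ can slack through without resource. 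Propagating these finish times through the remaining composite nodes and the length-$(7x+11)$ chains, and then through the binary reducer of height $y$ at the sink $t$, the makespan comes out to exactly $7x + 2y + 12$.

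Backward direction: I would argue that the budget $2n+4m$ is tight, and then that any deviation from the canonical allocation violates the makespan bound. Because each gadget has a single source-side entry, a unit of resource routed into one gadget is unavailable to any other until it has traversed the gadget entirely; summing over source-to-sink paths then shows that if some variable gadget receives fewer than 2 units or some clause gadget receives fewer than 4 units, the corresponding composite node incurs its 0- or 1-resource cost, which, by the choice $x = \max(2y+13, 8)$, exceeds the slack $2y$ provided by the path through the sink reducer. So each variable gadget receives exactly 2 units and each clause gadget exactly 4 units. Inside a variable gadget, splitting the two units as $(1,1)$ between $V^{(2)}$ and $V^{(3)}$ leaves both composite nodes in the slow $(2x+2)$ regime (since $k$-way and binary splitting give no speedup for $k<2$), so both chains finish at $6x+3$, exceeding the target when fed into the rest of the DAG; thus the two units must go through one side, yielding a Boolean assignment. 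Finally, in each clause gadget, the case analysis of Table~\ref{binary_table:2} shows that only the ``exactly one true literal'' rows produce a pattern of three incoming finish times that the 4 clause-side units can repair within the makespan bound; any other row leaves three of $C^{(5)}, C^{(6)}, C^{(7)}$ late simultaneously, demanding a fifth unit of resource. The extracted Boolean assignment is therefore a 1-in-3SAT solution.

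The main obstacle is the backward direction's deviation accounting: specifically, proving that every alternative flow pattern (under-resourcing a gadget, splitting $1$ and $1$ inside a variable gadget, or shifting units between gadgets through some shared source-to-sink path) increases the makespan beyond $7x+2y+12$. The chain lengths $4x$ and $7x+11$, together with the composite orders $2x$ and $8x$ and the parameter $x = \max(2y+13, 8)$, are calibrated so that the timing penalty of even one under-resourced composite node strictly exceeds the $2y$ of slack absorbed by the final binary reducer. Carrying this out cleanly will amount to enumerating the handful of deviation cases and bounding each via the explicit $k$-way and binary-split duration formulas from Section~\ref{sec:probs}.
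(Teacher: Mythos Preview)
Your proposal is correct and follows essentially the same approach as the paper: the forward direction routes $2$ units per variable gadget and $4$ per clause gadget exactly as you describe, and the backward direction argues budget tightness via the order-$8x$ composite nodes (whose unreduced cost $8x+2$ exceeds $7x+2y+12$ since $x>2y+12$), then rules out the $(1,1)$ split by its downstream effect on clauses, and finishes with the Table~\ref{binary_table:2} case analysis. One small numerical correction: when all three of $C^{(5)},C^{(6)},C^{(7)}$ are late, the clause gadget needs $6$ units (two for each of $C^{(8)},C^{(9)},C^{(10)}$), not merely a fifth unit, since a single unit gives no speedup under either splitting function.
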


\begin{proof}
\noindent{\bf Forward direction.} We now prove that if there is a solution to the 1-in-3SAT instance with $n$ variables and $m$ clauses, then the reduced DAG has a makespan of $7x + 2y + 12$ with $2n+4m$ units of resource.
	
	If a variable $V$ is set to TRUE, then we allow $2$ units of resource to flow through vertex $V^{(2)}$ along the path $\langle S, V^{(1)}, V^{(2)}, V^{(4)}\rangle$, otherwise, we allow $2$ units of resource to flow through vertex $V^{3}$ along the path $\langle S, V^{(1)}, V^{(3)}, V^{(4)}\rangle$.
	Assigning TRUE to variable $V$ implies that the earliest finish times at vertex $V^{(5)}$ and $V^{(6)}$ are $5x+5$ and $6x+3$, respectively. Also, the earliest finish time at vertex $V^{(7)}$ is $1+(2+2x)+1+2+(4x+4)+x+2 = 7x+12$. In Figure \ref{fig:binary_clause}, there are $3$ writers from variable gadgets that write on each of the nodes in $\left\{ C^{(5)}, C^{(6)}, C^{(7)} \right\}$. If there are multiple writers ready to write to the same vertex at the same time, we serialize the write operations. For example, if $V_{i} = TRUE, C_{j} = FALSE$ and $V_{k} = FALSE$, then the writer from variable gadget $V_{i}$ is ready to write at time $5x+5$. The writers from $V_{j}$ and $V_{k}$ are ready to write at time $6x+3$. Hence, all three write operations can be completed at time $\max{\left\{ 5x+6, 6x+4, 6x+5 \right\}} = 6x+5$. From Table \ref{binary_table:2}, it is evident that in clause $C$, if only one literal is TRUE and the other two are FALSE, then among $C^{(5)}, C^{(6)}$ and $C^{(7)}$ only one vertex has an earliest finish time of $5x+8$ and the other two have $6x+5$. The vertex with starting time $5x+8$, can finish the activity corresponding to composite node (one of $C^{(8)}, C^{(9)}$ and $C^{(10)}$) of order $2x$, in another $2x+2$ units of time without using any resource. Hence, it will finish at time $5x+8+2x+2= 7x+10$. Each of the other two vertices with earliest finish time of $6x+5$ takes $2$ units of resource flowing from vertex $C^{(4)}$ and finishes the composite node's activity at time $(6x+5)+(x+4) = 7x+9$.
	There is a chain of $7x+11$ nodes from the source vertex to each of the vertices in $\left\{ C^{(11)}, C^{(12)}, C^{(13)} \right\}$. Thus, the earliest finish time at each of those three vertices is $7x+12$. Together, with $2y$ units of time to sink vertex $t$, the total makespan is $7x+2y+12$.  
	
\noindent{\bf Backward direction.} To achieve a makespan of $7x+2y+12$, every variable gadget requires $2$ units of resource and each clause gadget requires $4$, otherwise the makespan will be $8x$ which is larger than $7x+2y+12$ because $x> 2y+12$. Also, any resource used in a variable gadget cannot be used further in any other variable or clause gadget because the resource can be reused over a path only. Similarly, any resource used in any clause gadget cannot be reused in any other gadget. Only one vertex that is either $V^{(5)}$ or $V^{(6)}$, will have the earliest finish time of $5x+5$. Both cannot be $5x+5$, as there is only $2$ units of resource per variable gadget. Both cannot be $6x+3$ as in a clause $C$ where the literal $V$ or $\neg V$ is present, there is an edge from either $V^{(5)}$ or $V^{(6)}$ to each of $C^{(5)}, C^{(6)}$ and $C^{(7)}$. This requires clause gadget $C$ to get $6$ units of resource to achieve a makespan $\leq 7x+2y+12$. But each clause gadget can have exactly $4$ units of resource. Thus, for every variable $V$, for it to be a valid assignment, $V$ is set to either TRUE or FALSE. From Table \ref{binary_table:2}, if a clause has exactly one TRUE literal, then one of the vertices from $C^{(5)}, C^{(6)}$ and $C^{(7)}$ has the earliest finish time of $5x+8$ and the other two have $6x+5$. This requires to have $4$ units of resource to achieve the earliest finish time $\leq 7x+10$ at each of the vertices from $\left\{ C^{(8)}, C^{(9)}, C^{(10)} \right\}$. This can be achieved by assigning $2$ units of resource to those two composite nodes (from $C^{(8)}, C^{(9)}$ and $C^{(10)}$) that start executing at time $6x+5$. The composite node that can start at time $5x+8$ does not use any extra resource. If the clause does not have exactly one TRUE literal, then the clause gadget would require $6$ units of resource to achieve the target makespan. However, we just argued that each clause gadget can have exactly $4$ units of resource. Thus, each clause has exactly one TRUE literal and the 1-in-3SAT instance is also satisfied.
\end{proof}

\begin{table}[!ht]
	\centering
	\scalebox{0.64}[0.7]{
		{\small 
		\hspace{-0.25cm}	\begin{colortabular}{ | c c c c c c |}
				\hline                       
				\rowcolor{tabletitlecolor} $V_i$ & $V_j$ & $V_k$ & $C^{(5)}$ & $C^{(6)}$ & $C^{(7)}$ \\  \hline

			T & T & T & $max( a, a+1, b ) = a+1$ & $max( a, b, a+1 ) = a+1$ & $max( b, a, a+1 ) = a+1$ \\ 
			\rowcolor{tablealtrowcolor} F & T & T & $max( b, a, b+1 ) = a$ & $max( b, b+1, a ) = a$ & $max( a, a+1, a+2 ) = a+2$ \\ 
			T & F & T & $max( a, b, b+1 ) = a$ & $max( a, a+1, a+2 ) = a+2$ & $max( b, b+1, a ) = a$ \\ 
			\rowcolor{tablealtrowcolor} T & T & F & $max( a, a+1, a+2 ) = a+2$ & $max( a, b, b+1 ) = a$ & $max( b, a, b+1 ) = a$ \\ 
			F & F & T & $max( b, b+1, b+2 ) = b+2$ & $max( b, a, a+1 ) = a+1$ & $max( a, b, a+1 ) = a+1$ \\ 
			\rowcolor{tablealtrowcolor} F & T & F & $max( b, a, a+1 ) = a+1$ & $max( b, b+1, b+2 ) = b+2$ & $max( a, a+1, b ) = a+1$ \\ 
			T & F & F & $max( a, b, a+1 ) = a+1$ & $max( a, a+1, b ) = a+1$ & $max( b, b+1, b+2 ) = b+2$ \\ 
			\rowcolor{tablealtrowcolor} F & F & F & $max( b, b+1, a ) = a$ & $max( b, a, b+1 ) = a$ & $max( a, b, b+1 ) = a$ \\								
				\hline
			\end{colortabular}
		}
	}
		\vspace{-0.3cm}
	\caption{Earliest start time at vertices $C^{(5)}, C^{(6)}$ and $C^{(7)}$ for different assignment of truth values of variable $V_i, V_j$ and $V_k$ in Figure~\ref{fig:binary_clause}, where $a = (6x + 4)$ and $b = (5x + 6)$.} 
	\label{binary_table:2}
	\vspace{-1cm}
	\end{table}

\begin{figure*}[h!]
\begin{minipage}{\textwidth}
\vspace{5cm}
	\centering
	\includegraphics[scale=0.4]{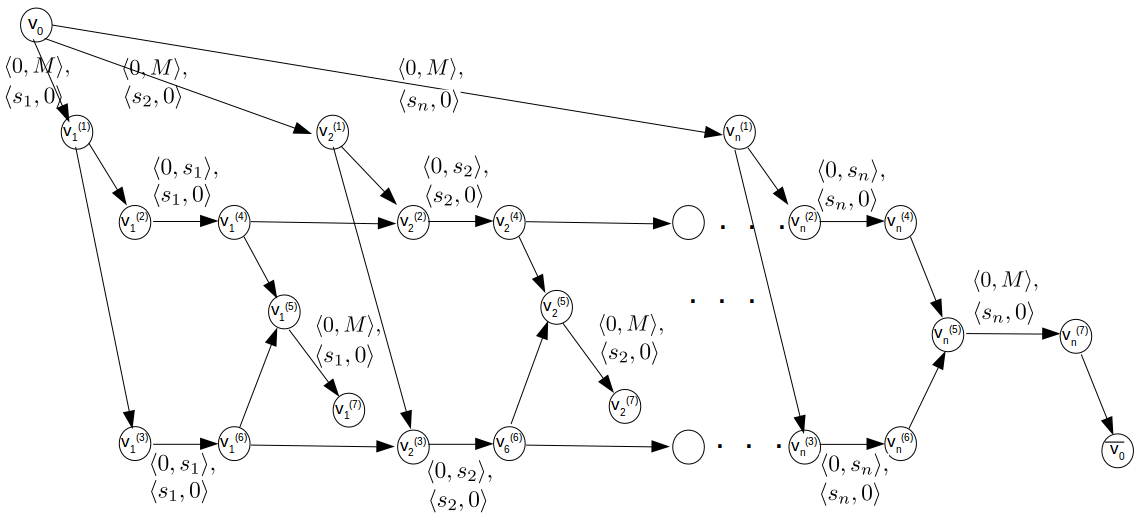}
	\figcaption{Construction for (weak) NP-hardness proof for graphs with bounded treewidth (Section \ref{sec:BoundedTreewidthHardness}).}
	\label{fig:weakly-NP}

	\vspace{0.5cm}

	\includegraphics[scale=0.35]{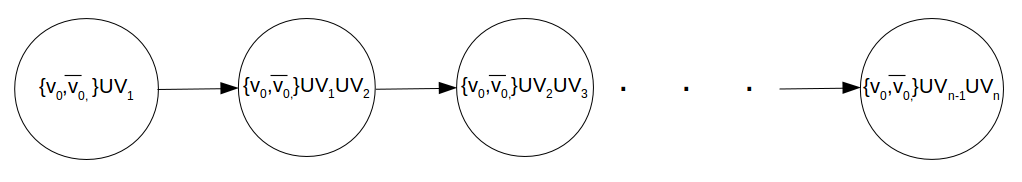}
	\figcaption{Tree decomposition of graph $G$ (Section \ref{sec:BoundedTreewidthHardness}).}
	\label{fig:treewidth}

\vspace{5cm}
\end{minipage}
\end{figure*}

%\subsection{DAGs with Underlying Bounded Treewidth Undirected Graphs}
\subsection{Underlying Bounded Treewidth Graph}
\label{sec:BoundedTreewidthHardness}

\hide{
\begin{figure*}[h!]
	\centering
	\includegraphics[scale=0.4]{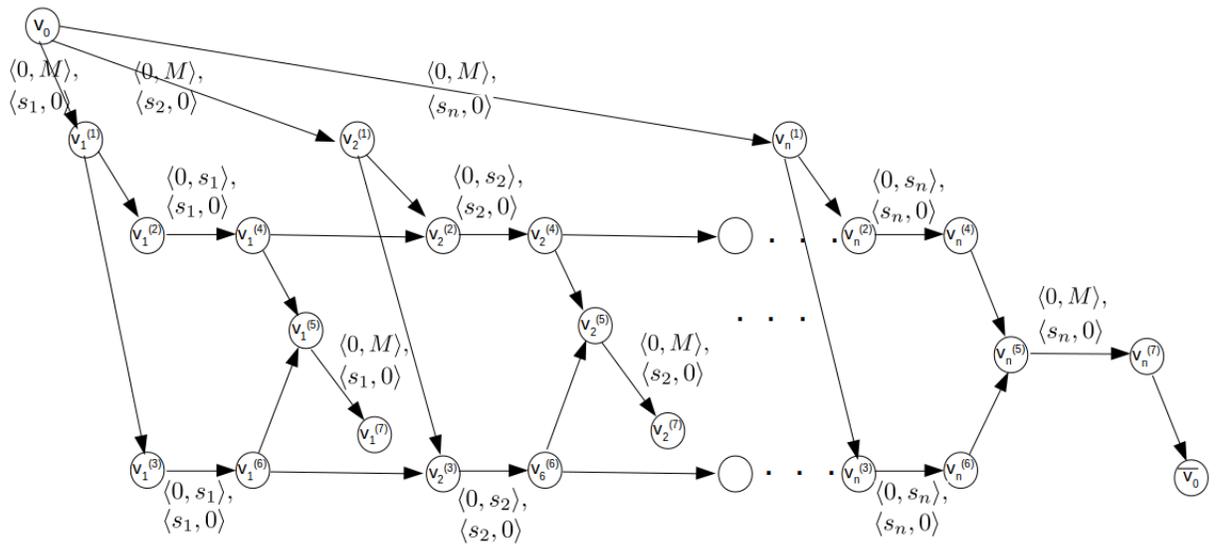}
	\caption{Construction for (weak) NP-hardness proof for graphs with bounded treewidth.}
	\label{fig:weakly-NP}
\end{figure*}

\begin{figure*}[h!]
	\centering
	\includegraphics[scale=0.35]{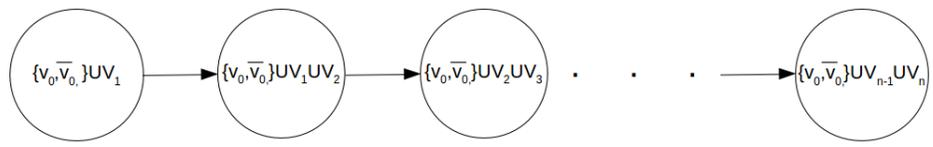}
	\caption{Tree decomposition of graph $G$.}
	\label{fig:treewidth}
\end{figure*}
}
% hide ends

Let $G(D)$ be the undirected graph obtained by ignoring the directedness of the edges of a given DAG $D$.
In the case that $G(D)$ is a graph of bounded treewidth,\footnote{Recall that a tree decomposition of a graph $G=(V,E)$ is a tree $T$ with nodes $X_1, X_2,\ldots, X_n$, $X_i\subseteq V$, satisfying: (1) $\bigcup_i X_i=V$; (2) For edge $(u,v)\in E$ there exists a $X_i$ with $u,v\in X_i$; (3) For any two nodes, $X_i$ and $X_j$, in $T$, if node $X_k$ is in the (unique) path between $X_i$ and $X_j$ in $T$, then $X_i \cap X_j \subseteq X_k$. The {\em width} of the tree decomposition is $\max_i |X_i| - 1$, and the {\em treewidth} of $G$ is the minimum width over all tree decompositions of~$G$.}
we show that the offline minimum-makespan and minimum-resource problems on $D$ are (weakly) NP-hard.
(Note that Theorem~\ref{thm:strong-hardness} proving the strong NP-hardness of the problems does not assume that the underlying undirected graph is of bounded treewidth.)

\begin{theorem}
  It is weakly NP-hard to decide if there exists a solution to the (offline) discrete resource-time tradeoff problem, with resource reuse over paths and a non-increasing duration function, satisfying a resource bound $B$ and a makespan bound $T$, provided the undirected graph obtained by ignoring the directedness of the edges of the input DAG is of bounded treewidth.
% The Minimum-Makespan Problem, with reuse, is weakly NP-hard in the case that the undirected DAG has bounded treewidth.
\end{theorem}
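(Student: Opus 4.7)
The plan is to reduce from the classical (weakly) NP-complete Partition problem: given positive integers $a_1, \ldots, a_n$ summing to $2T$, decide whether there is a subset $S' \subseteq [n]$ with $\sum_{i \in S'} a_i = T$. The proof will follow the general template of the strong NP-hardness arguments in Section~\ref{general-hard}, but with an instance-sensitive construction that blows up only pseudo-polynomially in the weights $a_i$, and with a skeleton whose underlying undirected graph has bounded treewidth.

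First, I would build a DAG $D$ that is series-parallel (hence has treewidth at most $2$). The construction consists of a source $s$, a sink $t$, and a chain of $n$ ``choice gadgets'' $G_1, G_2, \ldots, G_n$ composed in series, where $G_i$ corresponds to element $a_i$. Each gadget $G_i$ has an entry and exit node and two parallel routes between them, a ``top'' and a ``bottom''; routing a resource flow through the top (respectively bottom) of $G_i$ will correspond to placing $a_i$ in the first (respectively second) part of the partition. Because each gadget is a small series-parallel piece and they are concatenated in series, the underlying undirected graph of $D$ is series-parallel and therefore of treewidth at most $2$ (independent of the $a_i$'s).

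Second, I would design the step-wise non-increasing duration functions on the job-edges of each $G_i$ so that the combined resource bound $B$ and makespan bound $M$ encode the Partition question. The design is forced by the reuse-over-paths rule: a unit of resource must commit to a single side of each gadget, so if $k^+_i$ and $k^-_i$ denote the number of flow units going top and bottom of $G_i$ respectively, then $k^+_i + k^-_i = B$ for all $i$. I would set the duration functions so that, with a specific value of $M$, (i) the top side of $G_i$ completes in time compatible with $M$ only when $k^+_i$ is large enough to account for $a_i$, and symmetrically for the bottom, and (ii) the total ``work'' that must be cleared by the top flows across all gadgets equals $T$, and similarly the total work cleared by the bottom flows equals $T$. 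Concretely, a two-tuple function of the form $\langle 0, a_i \rangle, \langle a_i, 0 \rangle$ on each side, combined with a small fixed number of flow units and a carefully chosen $M$ in the range $\Theta(T)$, will give the desired encoding. The weights $a_i$ enter directly as step thresholds, so the reduction is polynomial in the unary encoding of the Partition instance, which is exactly what is needed for weak NP-hardness.

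Third, I would prove the two directions of the equivalence ``Partition is a YES-instance iff $(B, M)$ is feasible''. The forward direction is easy: given a balanced subset $S'$, route enough flow top through $G_i$ for $i \in S'$ and bottom through $G_i$ for $i \notin S'$; the duration functions then guarantee makespan $\le M$ within budget $B$. The reverse direction is the delicate argument: any feasible flow determines, for each $i$, which side of $G_i$ was served by enough resource, and the sum of the $a_i$'s on each side must exactly equal $T$ because any imbalance drives either the makespan above $M$ or the required resource above $B$. Bounded treewidth is then immediate from the series-parallel skeleton. The main obstacle will be tuning the duration functions and the values of $B$ and $M$ so that the local choices at the gadgets are coupled into a single global balance constraint — without such coupling the problem would decouple across gadgets and become trivially solvable — and verifying that no cheating flow pattern (e.g., routing different units asymmetrically through the two sides in unexpected ways) can beat the Partition-induced solution.
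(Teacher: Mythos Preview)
Your plan has a genuine gap, and it is exactly the one you flag in your last sentence: the coupling. In a series chain of parallel two-edge gadgets, write $k_i^+$ for the amount of flow taking the top route through $G_i$ and $k_i^- = B - k_i^+$ for the bottom. Because the flow re-merges before entering $G_{i+1}$, the makespan of the whole DAG is
\[
\sum_{i=1}^{n} \max\bigl(t_i^+(k_i^+),\, t_i^-(B - k_i^+)\bigr),
\]
which is fully separable in the $k_i^+$; the optimum is found by minimizing each term independently in $O(nB)$ time. No choice of step functions $t_i^{\pm}$ on the two sides can manufacture a global balance constraint out of this separable objective, so no series-parallel chain of gadgets can encode \textsc{Partition}. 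This is not a tuning issue but a structural one: the pseudo-polynomial exact algorithm for series-parallel DAGs given earlier in the paper already solves your instance, and with a constant $B$ it even runs in genuinely polynomial time. (A side remark: the sentence ``polynomial in the unary encoding \ldots\ is exactly what is needed for weak NP-hardness'' is backwards --- a reduction must be polynomial in the binary input size, since \textsc{Partition} in unary is in~P.)

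The paper's construction escapes separability precisely by abandoning series-parallel structure. It keeps two long horizontal ``accumulator'' paths --- a top path and a bottom path --- that thread through all $n$ gadgets, so that the makespan along each equals the \emph{sum} of the $s_i$'s assigned to that side. The per-element choice is made with \emph{fresh} resource: a common source $v_0$ injects $s_i$ units into gadget~$i$, the choice of $v_i^{(2)}$ versus $v_i^{(3)}$ decides which accumulator path absorbs the $s_i$ cost, and the resource is then drained to a common sink $\overline{v_0}$ (with a large-$M$ penalty edge) so it cannot be reused horizontally in later gadgets. It is this cross-linking of vertical resource injection with horizontal makespan accumulation --- edges $(v_0,v_i^{(1)})$, $(v_i^{(4)},v_{i+1}^{(2)})$, $(v_i^{(6)},v_{i+1}^{(3)})$, $(v_i^{(7)},\overline{v_0})$ --- that encodes the two partial sums, and it is inherently non-series-parallel: the resulting underlying graph has treewidth about~$15$, not~$2$.
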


\hide{
\begin{theorem}
  In the case that the undirected DAG has bounded treewidth, it is weakly NP-hard to decide if there exists a solution to the discrete resource-time tradeoff problem, with reuse and a non-increasing tradeoff function, satisfying a resource bound $B$ and a makespan bound $T$.
% The Minimum-Makespan Problem, with reuse, is weakly NP-hard in the case that the undirected DAG has bounded treewidth.
\end{theorem}
}
% hide ends

The proof of this theorem is based on a reduction from {\sc Partition}\cite{Garey:1979:CIG:578533}. The
construction is shown in Figure \ref{fig:weakly-NP}. The input instance is a set $S=\{s_1,s_2,\ldots,s_n\}$ of $n$ positive integers; let $B=\sum_{i=1}^n s_i$. The {\sc Partition} problem asks if there is a partition of $S$ into subsets $S_1$ and $S_2$ such that the sums of the values in the two subsets are the same (i.e., exactly $B/2$). In this construction we have a total of $B$ resources to allocate in our program. The value $M$ is chosen to be greater than $B/2$, the target makespan, ensuring that memory resources must be allocated to these nodes. This ensures that at least $s_i$ units of resource pass through each $v_i^{(1)}$, constructing our numbers. From each $v_i^{(1)}$ there are two choices of nodes, $v_i^{(2)}$ and $v_i^{(3)}$, to pass the resources onto each of which will either utilize $s_i$ resources or increase the makespan on that path by $s_i$. The pair also funnel the resources into a sink vertex $\overline{v_0}$ with a potential makespan cost of $M$ which ensures that their resources cannot be passed along to nodes $v_j^{(2)}$ and $v_j^{(3)}$ to the right (i.e., $j > i$). Thus the top and bottom paths represent our two sets and for each $v_i$ we must allocate $s_i$ makespan to either the top or the bottom path. Thus a total makespan of $B/2$ can only be achieved iff there is a partition of the $s_i$'s into two sets such that each set sums to $B/2$.

% xxx use more formal langauge.

To see that the constructed graph has bounded treewidth, 
let $V_i = \{v_i^{(j)}\}$, where $1\le j\le 7$. Vertices $v_i^{(7)}$ for $1\le i \le n$ are connected to the sink vertex $\overline{v_0}$.
Then $G$ has a tree decomposition $T$ with nodes $S_i$, $1\le i \le n$, as shown in Figure \ref{fig:treewidth}, with $S_i$ defined as follows:
$S_1 = \{v_0, \overline{v_0}\}\cup V_1$; 
$S_i = \{v_0, \overline{v_0}\}\cup V_{i-1}\cup V_{i}$, for $2\le i \le n$.
We claim that $T$ is a valid tree decomposition. It is evident that $\cup_{1\le i \le n} S_i = V$.
\hide{
There are 4 types of edges: $(v_0, v_i^{(1)})$, $(v_i^{(4)}, v_{i+1}^{(2)})$, $(v_i^{(6)}, v_{i+1}^{(3)})$, and edges having both endpoints in the same set $V_i$, for $1\le i < n$. 
}
% hide ends
%
From the construction of $S_j$ $(1\le j \le n)$, it is clear that, for each edge $(u,v)$ of the graph $G$, there exists a node $S_j$ with $u,v\in S_j$.
For any $S_{i}$ and $S_{j}$, with $j > i+1$ and $1 \le i \le (n-2)$, we have $S_{i} \cap S_{j} = \{v_0, \overline{v_0}\}$, and, for any node $S_k$ ($i < k < j$), on the path between $S_i$ and $S_j$, we have $v_{0}\in S_k$ and $\overline{v_{0}}\in S_k$, so that $S_i\cap S_j\subseteq S_k$.
Thus, $T$ is a valid tree decomposition, and it has width 15 ($\max_i |S_i| -1 = 15$), so the treewidth of $G$ is at most $15$.

\vspace{-0.3cm}
\section{Conclusion}

In this paper we introduce the discrete resource-time tradeoff problem with resource reuse in which each unit of resource is routed along a source to sink path and is possibly used and reused to expedite activities encountered along that path. We consider two different objective functions: (1) optimize makespan given a limited resource budget and (2) optimize resource requirement given a target makespan.

Our original motivation came from a desire to mitigate the cost of data races
%
% \footnote{when more than one concurrent threads access a shared memory location and at least one of them try to write to it} 
%
in shared-memory parallel programs by using extra space to reduce the time it takes to perform conflict-free write operations to shared memory locations. We consider three duration functions: general non-increasing function for the general resource-time question, and recursive binary reduction and multiway (k-way) splitting for the space-time case.

We present the first hardness and approximation hardness results as well as the first approximation algorithms for our problems. We show that the makespan optimization problem is strongly NP-hard under all three duration functions. When the duration function is general non-increasing we also show that it is strongly NP-hard to achieve an approximation ratio less than $2$ for the makespan optimization problem and less than $\frac{3}{2}$ for the resource optimization problem. We give a $\left(\frac{1}{\alpha}, \frac{1}{1-\alpha}\right)$ bi-criteria (resource, makespan) approximation algorithm for that same duration function, where $0<\alpha<1$. We present improved approximation ratios for the recursive binary reduction function and the multiway ($k$-way) splitting functions.

\begin{acks}
	\scriptsize{
	This work is supported in part by NSF grants CCF-1439084, CCF-1526406, CNS-1553510, IIS-1546113 and US-Israel Binational Science Foundation grant number 
	2016116. }
\end{acks}

\hide{
We have introduced the problem....

We give the first algorithmic results on this class of problems..
}
% hide ends

%\input{ack}

\clearpage
\newpage
\bibliographystyle{ACM-Reference-Format}
\bibliography{tradeoff}

\clearpage
\newpage
\begin{appendix}

\begin{center}
{\Huge{\sc Appendix}}
\end{center}

\section{Alternate hardness proof from numerical 3D matching}
\label{sec:3d-matching}
We give a polynomial-time reduction from the numerical 3-dimensional matching problem to the discrete resource-time tradeoff problem (with resource reuse over paths and a non-increasing duration function). 

Numerical 3-dimensional matching problem: 
Given $A = $ \\ $\{ a_1, a_2, \cdots a_n \}, B = \{ b_1, b_2, \cdots b_n \},$ and $C = \{ c_1, c_2, \cdots c_n \}$, partition $A\cup B \cup C$ into $n$ triples $S_i \in A\times B\times C$ of equal sum $T = (\sum A + \sum B + \sum C)/n$.

\begin{figure*}[h!]
\begin{minipage}{\textwidth}
\vspace{0.5cm}
	\centering
	\includegraphics[scale=0.33]{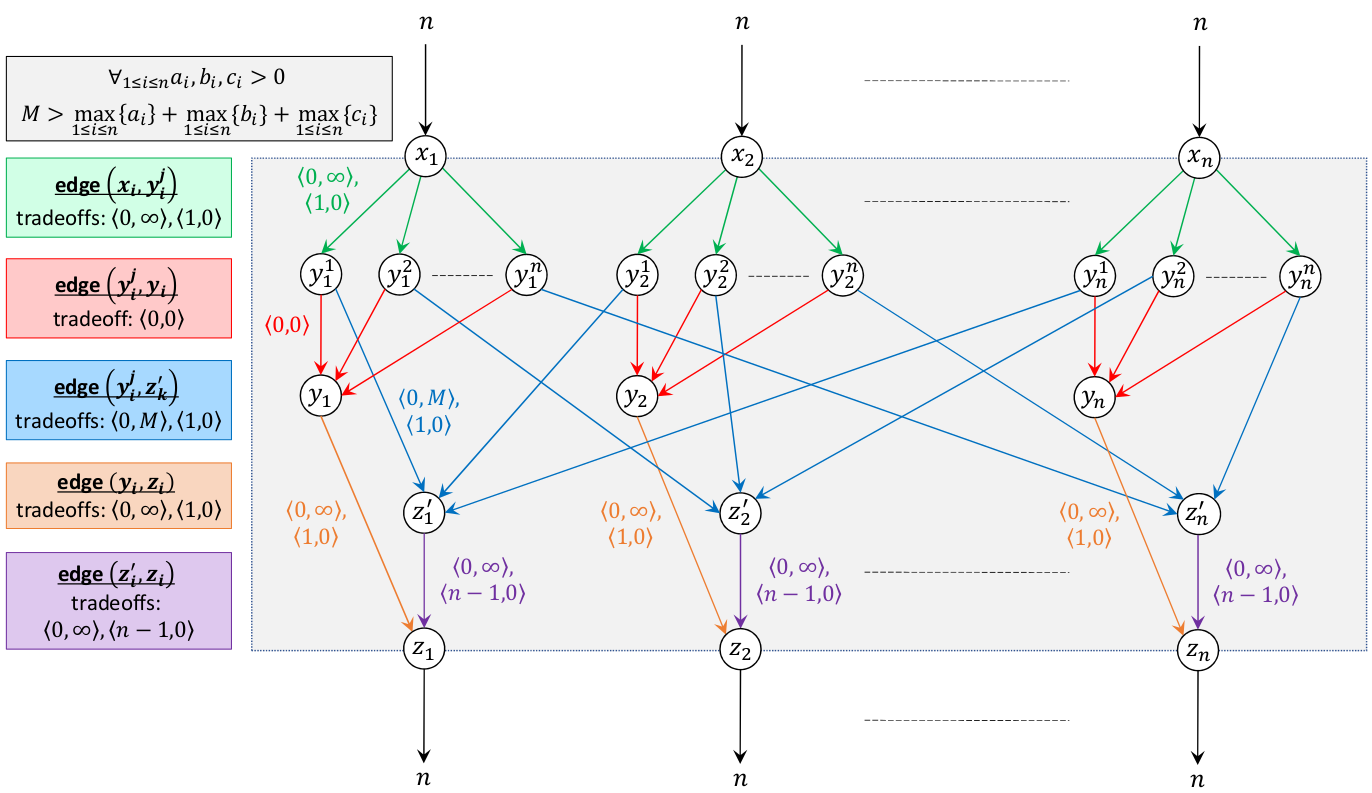}
	\figcaption{Bipartite matcher gadget (Section \ref{sec:3d-matching}).}
	\label{fig:bipartite}

	\vspace{0.5cm}

	\includegraphics[scale=0.33]{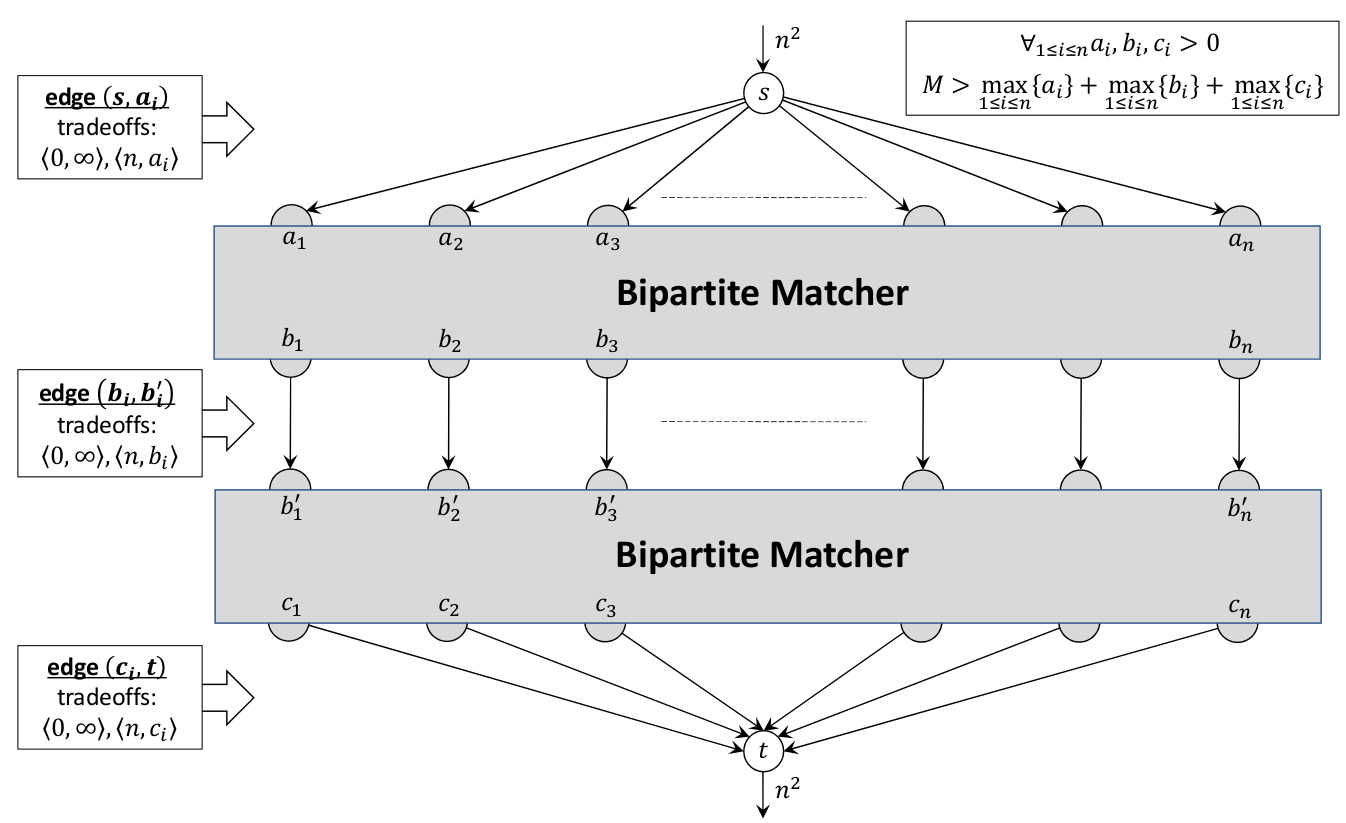}
	\figcaption{Reduced DAG from a numerical 3D matching instance (Section \ref{sec:3d-matching}).}
	\label{fig:numerical-hardness}
\vspace{0.5cm}
\end{minipage}
\end{figure*}

\hide{
\begin{figure*}[h!]
	\centering
	\includegraphics[scale=0.35]{pics/bipartite.png}
	\caption{Bipartite matcher gadget}
	\label{fig:bipartite}
\end{figure*}

\begin{figure*}[h!]
	\centering
	\includegraphics[scale=0.35]{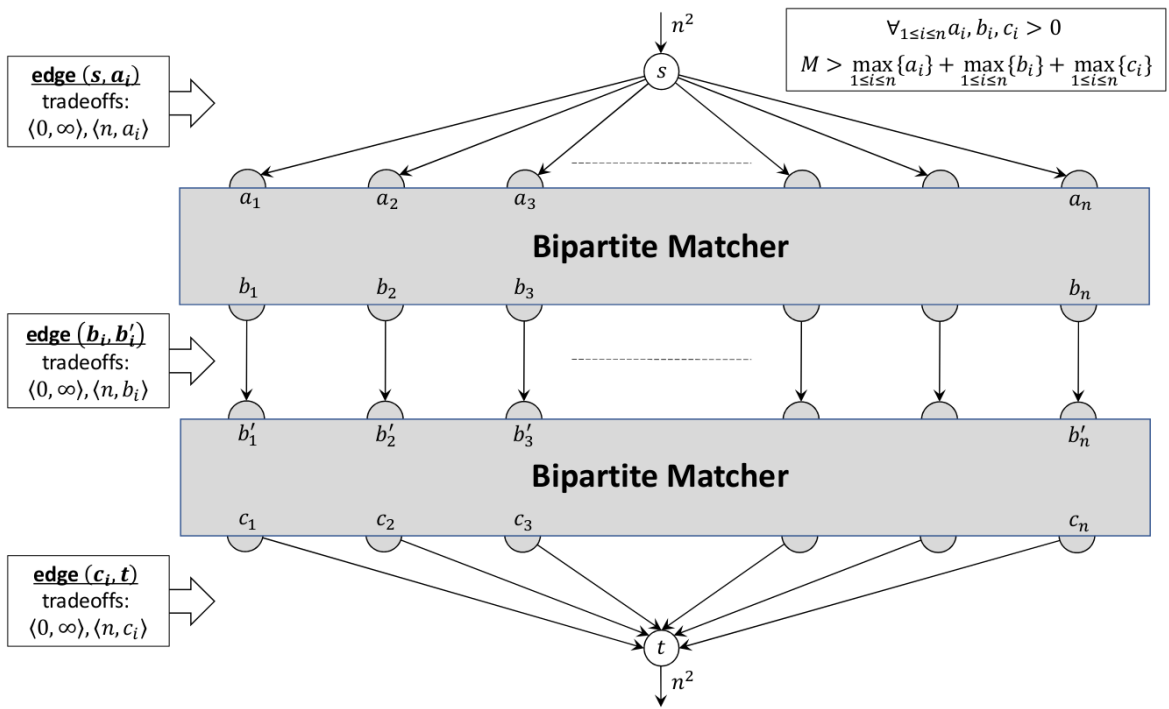}
	\caption{Reduced DAG from a numerical 3D matching instance}
	\label{fig:numerical-hardness}
\end{figure*}
}
% hide ends

Given an instance of the numerical 3D matching problem, we create a DAG $D$ with source $s$ and sink $t$ as shown in Figure \ref{fig:numerical-hardness}. For each $a_i \in A$, there is an edge $(s,a_i)$ in $D$.
The space-time tradeoff function at edge $(s,a_i)$ is $\{\langle 0, \infty \rangle,  \langle n, a_i\rangle\}$. Recall that, this means that with zero resource, it takes infinite time to finish the activity $(s,a_i)$ and with $n$ units of resource it finishes in time $a_i$. We create a gadget that has $n$ incoming edges and $n$ outgoing edges. We call the gadget a {\em bipartite matcher} (Figure \ref{fig:bipartite}) as it matches (a $1:1$ mapping) the incoming edges to the outgoing edges. We describe the bipartite matcher in the next paragraph. For each $b_i \in B$, there is an edge $(b_i,b_i^{'})$ in $D$. The tradeoff function at edge $(b_i,b_i^{'})$ is $\{\langle 0, \infty \rangle,  \langle n, b_i\rangle\}$. We put all the $n$ edges $(b_i,b_i^{'})$ to a bipartite matcher as its incoming edges. For each $c_i \in C$, there is an edge $(c_i,t)$ in $D$. The tradeoff function at edge $(c_i,t)$ is $\{\langle 0, \infty \rangle,  \langle n, c_i\rangle\}$.

\para{The bipartite matcher gadget.} The gadget has $n$ incoming edges at vertices $\{x_1, x_2, \cdots x_n\}$ and $n$ outgoing edges from $\{z_1, z_2, \cdots z_n\}$. It maps the vertices from $\{x_1, x_2, \cdots x_n\}$ to those in $\{z_1, z_2, \cdots z_n\}$. The mapping is one to one. This works as follows. There are $n$ units of incoming resource at each vertices $x_i$. Every outgoing edge $(x_i,y_i^{j})$ from $x_i$ ($1 \le j\le n$) has a tradeoff function $\{\langle 0, \infty \rangle, \langle 1, 0 \rangle\}$. Hence, each of the outgoing edges $(x_i, y_i^{j})$ from $x_i$ gets one unit of resource. The tradeoff function at edge $(y_i, z_i)$ is $\{\langle 0, \infty \rangle, \langle 1, 0 \rangle\}$ which forces  $y_i^{j}$ to send one unit of resource to $y_i$. The tradeoff function at edge $(y_i^{j}, z_j^{'})$ is $\{\langle 0, M \rangle, \langle 1, 0\rangle \}$. Thus, if $y_i^{j}$ sends one unit of resource to $y_i$, it cannot send any resource to $z_j^{'}$ forcing the activity $(y_i^{j}, z_j^{'})$ to take $M$ units of time to finish. Here,  $M > \max_{1\le i \le n}(a_i) + \max_{1\le i \le n}(b_i) + \max_{1\le i \le n}(c_i)$. The tradeoff function at edge $(z_j^{'}, z_j)$ is $\{\langle 0, \infty \rangle, \langle n-1, 0 \rangle\}$. There are $n$ incoming edges $(y_i^{j}, z_j^{'})$ to $z_j^{'}$. Out of these $n$ incoming edges, $(n-1)$ edges flow $n - 1$ units of resource to $z_j^{'}$ which are then used for the activity at $(z_j^{'}, z_j)$. 

We now show the mapping through an example. Suppose $x_1$ is mapped to $z_3$. Then the corresponding flow is as follows: one unit of resource flows from $y_1^3$ to $y_1$. As the total incoming flow of resource at vertex $y_1^3$ is one, no resource flows from $y_1^3$ to $z_3^{'}$. However, one unit of resource flows from each $y_i^3$ except $y_1^3$ to $z_3^{'}$. The earliest start time $(EST)$ along path $\langle x_1, y_1^{3}, z_3^{'}\rangle$ is $EST(x_1) + M$ while that along path $\langle x_i, y_i^{3}, z_3^{'}\rangle$ for $i \neq 1$ is $EST(x_i)$. This makes the earliest start time at $z_3^{'}$, $EST(z_3^{'}) = \max\{EST(x_1) + M, EST(x_i)\} = EST(x_1) + M$. This holds true because $M > \max_{1\le i \le n}(a_i) + \max_{1\le i \le n}(b_i) + \max_{1\le i \le n}(c_i)$. Also, $n-1$ units of resource flow to $z_3^{'}$ and they are used for the activity $(z_3^{'}, z_3)$ to finish in time $0$. Observe that no $y_1^{i}$ except $y_1^{3}$ can send resource to $y_1$. The gadget has a total resource-inflow of $n^2$. Each of $(z_i^{'}, z_i)$ requires $n-1$ units of resource that sums up to $n^2-n$ units of resource. Each of $(y_i,z_i)$ requires one unit of resource, that sum up to $n$ units of resource. If two of $y_1^{i}$ sends a unit of resource each to $y_1$, then the total resource left to be used by all $(z_i^{'}, z_i)$ is at most $n^2 - n - 1$. Thus at least one of $(z_i^{'}, z_i)$ won't get $n - 1$ units of resource and will take infinite time. Hence, mapping $x_i$ to $y_j$ corresponds to flowing one unit of resource from $y_i^{j}$ to $y_i$ and vice-versa; this makes a one-to-one mapping from $\{x_1, x_2, \cdots x_n\}$ to $\{z_1, z_2, \cdots z_n\}$.

\begin{lemma}
	\label{lem:zero-one-3D}
	There exists a solution to a input instance of numerical 3D matching if and only if there exists a valid flow of resource in the DAG such that the makespan is $2M + T$ with resource bound $B=n^2$.
\end{lemma}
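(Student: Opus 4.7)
The plan is to prove both directions by tracking the makespan contribution of a single source-to-sink path through the reduced DAG, showing that it decomposes as $2M$ plus the sum $a_i+b_j+c_k$ of the ``triple'' visited, and to show that the budget $B=n^2$ is exactly enough to support a flow that implements one permutation in each bipartite matcher.

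For the forward direction, suppose the numerical 3D matching instance has a solution partitioning $A\cup B\cup C$ into $n$ triples of sum $T$. This induces two permutations $\sigma,\tau:[n]\to[n]$ on the index sets. I will route $n$ units of resource out of $s$ along each edge $(s,a_i)$ (total $n^2$, matching the budget), make the first bipartite matcher realize the permutation $\sigma$ by sending, for each $i$, one unit from $y_i^{\sigma(i)}$ to $y_i$ and the remaining $n-1$ units from $x_i$ to the corresponding $z_j'$ nodes, and similarly realize $\tau$ in the second matcher. A direct earliest-finish-time computation, following the analysis already suggested in the construction (the ``penalty'' edge $(y_{\sigma^{-1}(j)}^{j},z_j')$ in the matcher carries no resource and so contributes $M$), shows that the time at $b_j$ is $a_{\sigma^{-1}(j)}+M$, at $c_k$ it is $a_{\sigma^{-1}(\tau^{-1}(k))}+b_{\tau^{-1}(k)}+2M$, and finally at $t$ it is
\[
 \max_{k}\Bigl(a_{\sigma^{-1}(\tau^{-1}(k))}+b_{\tau^{-1}(k)}+c_k\Bigr)+2M.
\]
Because the matching has equal triple sums $T$, each term in the max equals $T$, giving makespan exactly $2M+T$ and flow exactly $n^2$.

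For the backward direction, suppose a valid flow achieves makespan $2M+T$ with budget $n^2$. Because every edge $(s,a_i)$, $(b_i,b_i')$, and $(c_i,t)$ has the prohibitive tuple $\langle 0,\infty\rangle$, a finite makespan forces $n$ units through each, saturating the budget of $n^2$ at source, middle, and sink layers. Inside each bipartite matcher, I will argue that with no slack in the budget the only way to avoid infinite finish times on the $(y_i,z_i)$ and $(z_j',z_j)$ edges is to have, for every $i$, exactly one $y_i^{j}$ forward its unit to $y_i$; by the pigeonhole conservation already spelled out in the construction, the indices $\sigma(i):=j$ and likewise $\tau$ for the second matcher must be permutations. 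Each matcher therefore behaves exactly as in the forward direction, so the makespan is again $2M+\max_{k}(a_{\sigma^{-1}(\tau^{-1}(k))}+b_{\tau^{-1}(k)}+c_k)$. Being at most $2M+T$ forces every triple sum to be at most $T$, and summing over $k$ gives $\sum A+\sum B+\sum C\le nT$, which is tight by definition of $T$; hence every triple sum equals $T$, yielding a solution to the 3D matching instance.

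The main obstacle is the backward direction: I have to rule out any clever ``non-matching'' flow inside a bipartite matcher that might, for instance, split a unit of resource or route it over several paths to save time while still saturating $(y_i,z_i)$ and $(z_j',z_j)$. The argument above hinges on the tight counting that the incoming $n^2$ units are just enough to supply one unit to each $(y_i,z_i)$ edge ($n$ units) and $n-1$ units to each $(z_j',z_j)$ edge ($n(n-1)=n^2-n$ units), leaving no room for any alternative routing. I will make this pigeonhole argument precise by writing resource conservation at each $y_i^j$ and adding up over $i,j$, and I will verify the makespan inequalities using $M>\max a_i+\max b_i+\max c_i$, which guarantees that the ``penalty'' term $M$ strictly dominates any contribution from the non-penalized branches at $z_j'$.
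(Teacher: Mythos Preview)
Your proposal is correct and follows essentially the same route as the paper: both directions exploit that the two bipartite matchers implement permutations $\sigma,\tau$ and contribute exactly $M$ each to any source--sink path, leaving the residual $a_i+b_j+c_k$ term to compare against $T$. The paper's own proof is considerably terser; in particular, for the backward direction it simply asserts that the one-to-one mapping (established earlier in the gadget description via the $n^2 = n + n(n-1)$ resource-counting argument) together with the makespan value ``gives a solution to numerical 3D matching,'' without spelling out why every triple sum must equal $T$. Your averaging argument (each triple sum $\le T$, the total over all $k$ equals $nT$ since $\sigma,\tau$ are permutations, hence equality throughout) fills that gap explicitly and is a clean addition. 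The pigeonhole counting you outline for ruling out non-matching flows inside a matcher is exactly what the paper sketches in the gadget description, so your plan to make it precise via conservation at each $y_i^j$ is appropriate and straightforward in the discrete (integral-resource) setting.
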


\begin{proof}
If there is a solution in the input instance of numerical 3D matching, then there are $n$ sets, each of type $\{a_i, b_j, c_k\}$ such that $a_i + b_j + c_k = T$. We use first bipartite matcher gadgets to map $a_i$ to $b_j$ and the second bipartite matcher to map $b_j^{'}$ to $c_k$. Each bipartite matcher contributes $M$ in the makespan. $(s,a_i), (b_j,b_j^{'})$ and $(c_k,t)$ adds $T$ to the makespan. Thus the makespan is eaxctly $2M+T$.

If the reduced DAG admits a makespan of $2M + T$ using $n^2$ units of resource, then there is also a solution to the input instance of numerical 3D matching. From the construction of bipartite 3D matching, there is a one-to-one mapping from $a_i$ to $b_j$ and from $b_j^{'}$ to $c_k$. As the makespan is $2M + T$ and each bipartite matcher contributes $M$ to the makespan, this gives a solution to numerical 3D matching.  
\end{proof}

\end{appendix}

\end{document}